\newif\iflong
\newif\ifshort


\iflong
\else
\shorttrue
\fi
\documentclass[a4paper,UKenglish]{article}

\usepackage{microtype}

\usepackage[textsize=footnotesize,disable]{todonotes}

\usepackage{fullpage}
\usepackage{authblk}

\usepackage[pagebackref]{hyperref}

\usepackage{amsthm}

\usepackage{amsthm}
\usepackage{amsmath}
\usepackage{amssymb}
\usepackage{comment}
\usepackage{amsfonts}
\usepackage{mathtools}

\usepackage[capitalize,sort&compress,nameinlink,noabbrev]{cleveref}

\usepackage{tabto}
\NumTabs{10}

\usepackage[ruled,vlined,linesnumbered]{algorithm2e}

\usepackage{tikz}
\usetikzlibrary{positioning,backgrounds,patterns,calc,shapes, decorations.pathmorphing,plotmarks}
\tikzstyle{nnode} = [draw, circle, inner sep=1pt, fill=black]
\tikzstyle{bnode} = [draw, circle, inner sep=5pt]
\tikzstyle{deletede} = [thick, dash pattern=on 2.5pt off 1pt, decorate, decoration={snake,amplitude=.4mm,segment length=1.6mm}, draw=red, yshift=-5ex]
\tikzstyle{added} = [thick, decorate, decoration={snake,amplitude=.3mm,segment length=1mm}, draw=darkgreen]
\tikzstyle{deletedv} = [inner sep=1.5pt, draw=darkgreen, fill=darkgreen]



\usepackage[absolute]{textpos}  

\theoremstyle{plain}
\newtheorem{theorem}{Theorem}{\bfseries}{\normalfont}
\newtheorem{lemma}{Lemma}{\bfseries}{\normalfont}
\newtheorem{obs}{Observation}{\bfseries}{\normalfont}
{\bfseries}{\normalfont}
\newtheorem{prop}{Proposition}{\bfseries}{\normalfont}
\newtheorem{lem}{Lemma}{\bfseries}{\normalfont}

\theoremstyle{definition}
\newtheorem{rrule}{Reduction Rule}{\bfseries}{\normalfont}
\newtheorem{brule}{Branching Rule}{\bfseries}{\normalfont}
\newtheorem*{kernel}{Modified Kernelization Algorithm \boldmath$K$}{\bfseries}{\normalfont}
\newtheorem*{grule}{Greedy Rule}{\bfseries}{\normalfont}
\newtheorem*{crule}{Clean-up Rule}{\bfseries}{\normalfont}
\newtheorem*{rulezero}{Rule~0}{\bfseries}{\normalfont}

\theoremstyle{remark}
\newtheorem{remark}{Remark}{\sffamily}{\normalfont}
{\sffamily}{\normalfont}

\crefname{claim}{Claim}{Claims}


\theoremstyle{definition}
\newtheorem{defi}{Definition}{\bfseries}{\normalfont}

\newcommand{\decprob}[3]{%
  \begin{center}%
    \begin{minipage}{0.9\linewidth}%
      \textsc{#1}\\
      \textbf{Input:} #2\\
      \textbf{Question:} #3
    \end{minipage}%
  \end{center}%
}

\newenvironment{lemenum}{\begin{compactenum}[(i)]}{\end{compactenum}}

\newcommand{\G}{\ensuremath{\mathcal{G}}}

\bibliographystyle{abbrvnat}
\usepackage[numbers,sort]{natbib}

\makeatletter
\def\NAT@spacechar{~}
\makeatother

\usepackage{paralist}

\usepackage{enumerate}

\usepackage{xcolor}
\definecolor{lightgray}{rgb}{0.8, 0.8, 0.8}
\definecolor{darkgreen}{rgb}{0.01,0.6,0.1}


\newcommand{\proofparagraph}[1]{\par\smallskip\textit{#1}}
\newcommand{\proofparagraphm}[1]{\textit{#1}}

\newcommand{\MLCE}{\textsc{MLCE}}
\newcommand{\TCE}{\textsc{TCE}}
\newcommand{\MLCELong}{\textsc{Multi-Layer Cluster Editing}}
\newcommand{\TCELong}{\textsc{Temporal Cluster Editing}}

\newcommand{\MIS}{\textsc{Multicolored Independent Set}\xspace}

\newcommand{\CE}{\textsc{Cluster Editing}}

\newcommand{\coNP}{\ensuremath{\textsf{coNP}}\xspace}
\newcommand{\NP}{\ensuremath{\textsf{NP}}}
\newcommand{\XP}{\ensuremath{\textsf{XP}}\xspace}
\newcommand{\FPT}{\ensuremath{\textsf{FPT}}}
\newcommand{\W}[1]{\ensuremath{\textsf{W[#1]}}\xspace}
\newcommand{\paraNP}{\ensuremath{\textsf{para-NP}}}

\crefname{rrule}{Rule}{Rules}

\DeclareRobustCommand{\NoKernelAssume}{$\NP\subseteq \textsf{\coNP/poly}$}

\newcommand{\N}{\mathbb{N}\xspace}

\newcommand{\M}{\mathcal{M}}

\graphicspath{{images/}}

\usepackage{etoolbox}

\newcommand{\appref}[1]{%
}

\newcommand{\oldstuff}[1]{%
}

\newcommand{\appendixproof}[2]{%
#2
}

\newcommand{\appendixcorrectnessproof}[2]{%
#2
}

\newcommand{\appendixsection}[1]{%
}

\makeatletter \newcommand{\gettikzxy}[3]{%
  \tikz@scan@one@point\pgfutil@firstofone#1\relax \edef#2{\the\pgf@x}%
  \edef#3{\the\pgf@y}%
} \makeatother

\usepackage{etoolbox}

\begin{document}

\title{Cluster Editing for Multi-Layer and Temporal~Graphs\footnote{OS
    supported by grant 17-20065S of the Czech Science Foundation. JC
    and MS supported by the People Programme (Marie Curie Actions) of
    the European Union's Seventh Framework Programme (FP7/2007-2013)
    under REA grant agreement number {631163.11}, the Israel Science
    Foundation (grant no. 551145/14), and by the European Research
    Council (ERC) under the European Union’s Horizon 2020 research and
    innovation programme under grant agreement numbers~677651 (JC)
    and~714704 (MS). Main work of JC and MS done while with Dept.\
    Industrial Engineering and Management, Ben-Gurion University of
    the Negev, Beer Sheva, Israel. HM supported by the DFG, project
    MATE (NI 369/17). HM supported by the DFG, project MATE (NI
    369/17). This work was initiated at the research retreat of the TU
    Berlin Algorithmics and Computational Complexity group held in
    Boiensdorf (Baltic Sea), in April~2017. An extended abstract of
    this paper is accepted to appear in the proceedings of the 29th
    International Symposium on Algorithms and Computation
    (ISAAC~'18)~\cite{chen2018parameterized}}}

\author[1]{Jiehua~Chen}

\affil[1]{Faculty of Mathematics, Informatics and Mechanics, University of Warsaw, Warsaw, Poland, \texttt{jiehua.chen2@gmail.com, manuel.sorge@mimuw.edu.pl}}

\author[2]{Hendrik~Molter}

\affil[2]{Institut f\"ur Softwaretechnik und Theoretische Informatik,
 TU Berlin, Germany,
 \texttt{h.molter@tu-berlin.de}}

\author[1]{Manuel~Sorge}

\author[3]{Ond\v rej~Such\'{y}}
\affil[3]{Faculty of Information Technology, Czech Technical University in Prague, Prague, Czech~Republic, \texttt{ondrej.suchy@fit.cvut.cz}}

\maketitle

\begin{textblock}{20}(.1, 12.5)
\includegraphics[width=50px]{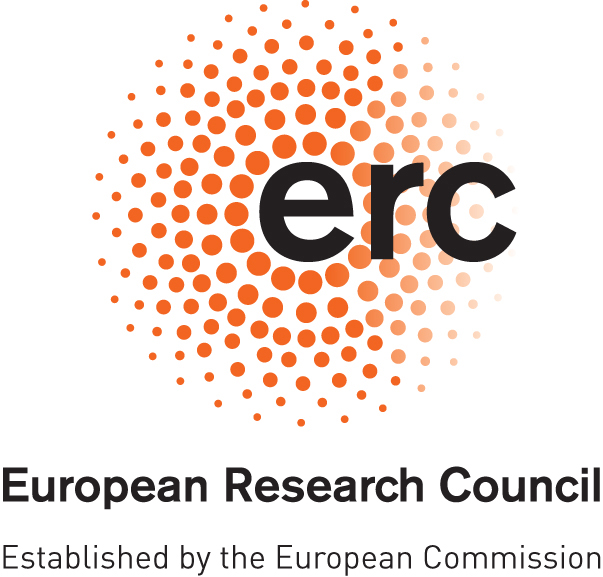}%
\end{textblock}
\begin{textblock}{20}(.1, 13.5)
\includegraphics[width=50px]{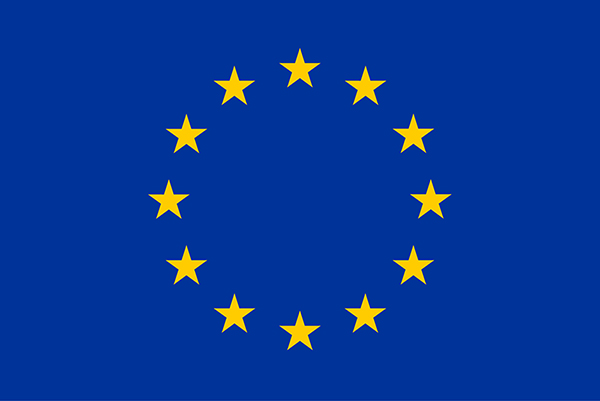}%
\end{textblock}

\begin{abstract}
  Motivated by the recent rapid growth of research for
  algorithms to cluster multi-layer and temporal graphs, we study
  extensions of the classical \textsc{Cluster Editing} problem. 
  In \textsc{Multi-Layer Cluster Editing} we receive a set of graphs
  on the same vertex set, called \emph{layers} and aim to transform
  all layers into cluster graphs (disjoint unions of cliques) that
  differ only slightly. More specifically,
  we want to mark at most $d$ vertices and to transform each
  layer 
  into a cluster graph using at most $k$ edge additions or deletions
  per layer so that, if we remove the marked vertices, we obtain the
  same cluster graph in all layers. In \TCELong\ we receive a
  \emph{sequence} of layers and we want to transform each layer into a
  cluster graph so that consecutive layers differ only slightly. That
  is, we want to transform each layer into a cluster graph with at
  most~$k$ edge additions or deletions and to mark a distinct set
  of~$d$ vertices in each layer so that each two consecutive layers
  are the same after removing the vertices marked in the first of
  the two layers. We study the combinatorial structure of the two problems
  via their parameterized complexity with respect to the
  parameters~$d$ and~$k$, among others. Despite the similar
  definition, the two problems behave quite differently: In
  particular, \MLCELong\ is fixed-parameter tractable with running
  time
  $k^{O(k + d)} s^{O(1)}$ for inputs of size~$s$, whereas
  \TCELong\ is \W{1}-hard with respect to~$k$ even if~$d = 3$.
\end{abstract}
\section{Introduction}\label{sec:intro}
\CE\ and its weighted form \textsc{Correlation Clustering} are two
important and well-studied models of graph
clustering~\cite{bansal2004correlation,bocker_cluster_2013,cao_cluster_2012,fomin_tight_2014,komusiewicz_cluster_2012}.
In the former, we are given a graph and we aim to edit (that is, add
or delete) the fewest number of edges in order to obtain a
\emph{cluster graph}, a graph in which each connected component is a
clique. \CE\ has attracted a lot of attention from a
parameterized-algorithms point of view
(e.g.~\cite{bocker_cluster_2013,cao_cluster_2012,fomin_tight_2014,komusiewicz_cluster_2012,LMNN18})
and the resulting contributions have found their way back into
practice~\cite[Section 6]{bocker_cluster_2013}.

Meanwhile, additional information is now available and used in clustering
methods. In particular, research on clustering so-called multi-layer
and temporal graphs grows rapidly
(e.g.~\cite{kim_community_2015,tagarelli_ensemble-based_2017,tang_clustering_2009,tantipathananandh_finding_2011,tantipathananandh_framework_2007}).
A \emph{multi-layer graph} is a set of
graphs, called
\emph{layers}, on the same vertex set~\cite{CY14,kim_community_2015,Kiv+14}. In social
networks, a layer can represent social interactions, geographic
closeness, common interests or activities~\cite{kim_community_2015}.\footnote{When considering the activity in
different communities, we typically obtain a large number of layers~\cite{nicosia_measuring_2015}.} A \emph{temporal
  graph} is a multi-layer graph in which the layers are ordered linearly~\cite{holme2015,holme2012temporal,latapy2017stream,michail2016introduction,tantipathananandh_finding_2011,tantipathananandh_framework_2007}. 
Temporal graphs naturally model the evolution of relationships of
individuals over time or their set of time-stamped interactions
.

The goals in clustering multi-layer and temporal graphs are,
respectively, to find a
clustering that is consistent with all
layers~\cite{kim_community_2015,Kiv+14,tagarelli_ensemble-based_2017,tang_clustering_2009} 
or a clustering that slowly evolves over time consistently with the
graph
~\cite{tantipathananandh_finding_2011,tantipathananandh_framework_2007}.
The methods used herein are often heuristic and beyond observing
NP-hardness, to the best of our knowledge, there is no deeper analysis
of the complexity of the general underlying computational problems
that are attacked in this way. Hence, there is also a lack of
knowledge about the possible avenues for algorithmic tractability. We
initiate this research here.

We analyze the combinatorial structure behind cluster editing for
multi-layer and temporal graphs, defined formally below, via studying
their parameterized complexity with respect to the most basic
parameters, such as the number of edits. That is, we aim to find
\emph{fixed-parameter algorithms} (\FPT), which have running time
$f(p)\cdot \ell^{O(1)}$ where $p$ is the parameter and $\ell$ the
input length, or to show \W{1}-hardness, which indicates that there
cannot be such algorithms.\todo{hm: we do not introduce \XP, \paraNP, or kernels so far..}

As we will see, both problems offer rich interactions between the
layers on top of the structure inherited from \CE. Our main
contributions are an intricate fixed-parameter algorithm for
multi-layer cluster editing, whose underlying techniques should be
applicable to a broader range of multi-layer problems, and a hardness
result for temporal cluster editing, which shows that certain
non-local structures harbor algorithmic intractability.

\subparagraph{\TCELong\ (\TCE).} Berger-Wolf and
Tantipathananandh~\cite{tantipathananandh_finding_2011} were motivated
by cluster detection problems from practice to study the following
problem. Given a temporal graph, edit each layer into a cluster graph,
that is, add or remove edges such that the layer becomes a disjoint
union of cliques. Below we will also call the connected components of
a cluster graph \emph{clusters}. The goal is to minimize the sum of
the total number of edits and the number of vertices moving between
different clusters in two consecutive layers. \TCE\ is a variant of
this problem where we instead minimize the layer-wise maxima of the
number of edits and moving vertices, respectively. The problem can be
formalized as follows.

Let $\G = (G_i)_{i \in [\ell]}$ be a temporal graph with vertex
set~$V$, that is, $G_i$ is the $i$th layer, and let $E_i$ be the edge
set of~$G_i$. (By $[\ell]$ we denote the set~$\{1, \ldots, \ell\}$ for
$\ell \in \mathbb{N}$.) An \emph{edge modification} or \emph{edit} for
a graph~$G = (V, E)$ is an unordered pair of vertices from~$V$. Let
$M$ be a set of edits for~$G$. If the graph $G' = (V, E \oplus M)$ is
a cluster graph, then we say that $M$ is a \emph{cluster editing set}
for~$G$. Herein, $\oplus$ denotes the symmetric difference:
$A \oplus B = (A\setminus B)\cup (B\setminus A)$. A \emph{clustering} for~$\G$ is a sequence
$\M = (M_i)_{i \in [\ell]}$ of edge modification sets such that $M_i$
is a cluster editing set for layer~$G_i$. Intuitively, sets $M_i$
contain the data that we need to disregard in order to cluster our
input and hence we want to minimize their
sizes~\cite{tantipathananandh_framework_2007,tantipathananandh_finding_2011}.
For that, we say that $\M$ is \emph{$k$-bounded} for some
integer~$k \in \mathbb{N}$ if $|M_i| \leq k$ for each~$i \in [\ell]$.

A fundamental property of clusterings of temporal graphs is their
evolution over time. In practice, these clusterings evolve only slowly
as measured by the number of vertices switching between clusters from
one layer to
another~\cite{tantipathananandh_framework_2007,tantipathananandh_finding_2011}.
This requirement can be formalized as follows. Let $d \in \mathbb{N}$. Clustering~$\M$ for $\G$ (as above) is \emph{temporally $d$-consistent} if there exists a sequence
$(D_i)_{i \in [\ell - 1]}$ of vertex sets, each of size at most~$d$, such that each pair of consecutive layers is \emph{consistent}, that is
$G_i'[V \setminus D_i] = G_{i + 1}'[V \setminus D_i]$ for each
$i \in [\ell - 1]$. Hence, the sets~$D_i$ contain the vertices
changing clusters. We arrive at the following.

\decprob{\TCELong\ (\TCE)}{A temporal graph $\G$ and two integers $k, d$.}{Is there a temporally $d$-consistent $k$-bounded clustering for~$\G$?}%
\noindent We also say that the corresponding sets $D_i \subseteq V$ and
$M_i \subseteq \binom{V}{2}$ as above form a \emph{solution} and the vertices
in $D_i$ are \emph{marked}. An example is shown in \cref{fig:problem-example}. 

\begin{figure}[h!]
  \centering
  \begin{tikzpicture}
    \def \xs {16ex}
    \def \xss {36ex}
    \def \xsss {54ex}
    \def \ys {28ex}
    \def \yss {-60ex}
    \def \xscc {.45}
    \def \yscc {.9}
    \def \scc {.6}
    \def \aa {3ex}
    \def \bb {6ex}
  \begin{scope}
    \foreach \n / \x / \y / \s in  {1/0/0/above, 2/1/1/above,3/1/-1/below,4/2/0/above,5/3/0/above} {%
      \node[nnode] at (\x*\xscc,\y*\yscc) (\n) {};
      \node[\s = 0pt of \n] {$\n$};
    }
    \foreach \s / \t in {1/2,1/3,2/3,3/4,1/4,2/4,4/5} {
      \draw (\s) edge (\t);
    }
    \gettikzxy{(2)}{\twox}{\twoy};
    \gettikzxy{(4)}{\fourx}{\foury};
    \node at ($(\twox/2+\fourx/2, \twoy+20)$) {Layer $1$};
  \end{scope}
  \begin{scope}[xshift=\xs]
    \foreach \n / \x / \y / \s in  {1/0/0/above, 2/1/1/above,3/1/-1/below,4/2/0/above,5/3/0/above} {
      \node[nnode] at (\x*\xscc,\y*\yscc) (\n) {};
      \node[\s = 0pt of \n] {$\n$};
    }
    \foreach \s / \t in {2/3,3/4,2/4,4/5} {
      \draw (\s) edge (\t);
    }
    \gettikzxy{(2)}{\twox}{\twoy};
    \gettikzxy{(4)}{\fourx}{\foury};
    \node at ($(\twox/2+\fourx/2, \twoy+20)$) {Layer $2$};
  \end{scope}
  \begin{scope}[xshift=\xs*2]
    \foreach \n / \x / \y / \s in  {1/0/0/above, 2/1/1/above,3/1/-1/below,4/2/0/above,5/3/0/above} {
      \node[nnode] at (\x*\xscc,\y*\yscc) (\n) {};
      \node[\s = 0pt of \n] {$\n$};
    }
    \foreach \s / \t in {2/3,3/4,2/4,2/5,3/5} {
      \draw (\s) edge (\t);
    }
    \gettikzxy{(2)}{\twox}{\twoy};
    \gettikzxy{(4)}{\fourx}{\foury};
    \node at ($(\twox/2+\fourx/2, \twoy+20)$) (o3) {Layer $3$};
  \end{scope}

    \begin{scope}[xshift=\xs*3+5ex]
    \foreach \n / \x / \y / \s in  {1/0/0/above, 2/1/1/above,3/1/-1/below,4/2/0/above,5/3/0/above} {
      \node[nnode] at (\x*\xscc,\y*\yscc) (\n) {};
      \node[\s = 0pt of \n] {$\n$};
    }
    \foreach \s / \t in {1/2,1/3,1/4,2/3,3/4,2/4,4/5} {
      \draw (\s) edge (\t);
    }
    \gettikzxy{(2)}{\twox}{\twoy};
    \gettikzxy{(4)}{\fourx}{\foury};
    \node at ($(\twox/2+\fourx/2, \twoy+20)$) (t11) {Layer $1$};
    
    \foreach \s in {1} {
      \node[bnode, deletedv] at (\s) {};
    }  
    
    \foreach \s / \t in {4/5} {
      \draw  (\s) edge[white] (\t);
      \draw  (\s) edge[deletede] (\t);      
    }
  \end{scope}
  \begin{scope}[xshift=\xs*4+5ex]
    \foreach \n / \x / \y / \s in  {1/0/0/above, 2/1/1/above,3/1/-1/below,4/2/0/above,5/3/0/above} {
      \node[nnode] at (\x*\xscc,\y*\yscc) (\n) {};
      \node[\s = 0pt of \n] {$\n$};
    }
    \foreach \s / \t in {2/3,3/4,2/4,4/5} {
      \draw (\s) edge (\t);
    }
    \gettikzxy{(2)}{\twox}{\twoy};
    \gettikzxy{(4)}{\fourx}{\foury};
    \node at ($(\twox/2+\fourx/2, \twoy+20)$) {Layer $2$};
     
    \foreach \s in {5} {
      \node[bnode, deletedv] at (\s) {};
    }  
       
    \foreach \s / \t in {4/5} {
      \draw  (\s) edge[white] (\t);
      \draw  (\s) edge[deletede] (\t);      
    }
  \end{scope}
    \begin{scope}[xshift=\xs*5+5ex]
    \foreach \n / \x / \y / \s in  {1/0/0/above, 2/1/1/above,3/1/-1/below,4/2/0/above,5/3/0/above} {
      \node[nnode] at (\x*\xscc,\y*\yscc) (\n) {};
      \node[\s = 0pt of \n] {$\n$};
    }
    \foreach \s / \t in {2/3,3/4,2/4,2/5,3/5} {
      \draw (\s) edge (\t);
    }
    \gettikzxy{(2)}{\twox}{\twoy};
    \gettikzxy{(4)}{\fourx}{\foury};
    \node at ($(\twox/2+\fourx/2, \twoy+20)$) {Layer $3$};   
    \foreach \s / \t in {4/5} {
      \draw  (\s) edge[added] (\t);
    }
  \end{scope}

    \begin{scope}[yshift=-\ys]
    \foreach \n / \x / \y / \s in  {1/0/0/above, 2/1/1/above,3/1/-1/below,4/2/0/above,5/3/0/above} {
      \node[nnode] at (\x*\xscc,\y*\yscc) (\n) {};
      \node[\s = 0pt of \n] {$\n$};
    }
    \foreach \s / \t in {1/2,1/3,1/4,2/3,3/4,2/4,4/5} {
      \draw (\s) edge (\t);
    }
    \gettikzxy{(2)}{\twox}{\twoy};
    \gettikzxy{(4)}{\fourx}{\foury};
    \node at ($(\twox/2+\fourx/2, \twoy+20)$) (ml11) {Layer $1$};
    
    \foreach \s in {1} {
      \node[bnode, deletedv] at (\s) {};
    }  
    
    \foreach \s / \t in {2/5,3/5} {
      \draw  (\s) edge[added] (\t);
    }
    \draw (1) edge[added, bend right] (5);
  \end{scope}
  \begin{scope}[xshift=\xs,yshift=-\ys]
    \foreach \n / \x / \y / \s in  {1/0/0/above, 2/1/1/above,3/1/-1/below,4/2/0/above,5/3/0/above} {
      \node[nnode] at (\x*\xscc,\y*\yscc) (\n) {};
      \node[\s = 0pt of \n] {$\n$};
    }
    \foreach \s / \t in {2/3,3/4,2/4,4/5} {
      \draw (\s) edge (\t);
    }
    \gettikzxy{(2)}{\twox}{\twoy};
    \gettikzxy{(4)}{\fourx}{\foury};
    \node at ($(\twox/2+\fourx/2, \twoy+20)$) {Layer $2$};
     
    \foreach \s in {1} {
      \node[bnode, deletedv] at (\s) {};
    }  
    \foreach \s / \t in {2/5,3/5} {
      \draw  (\s) edge[added] (\t);
    }
  \end{scope}
    \begin{scope}[xshift=\xs*2,yshift=-\ys]
    \foreach \n / \x / \y / \s in  {1/0/0/above, 2/1/1/above,3/1/-1/below,4/2/0/above,5/3/0/above} {
      \node[nnode] at (\x*\xscc,\y*\yscc) (\n) {};
      \node[\s = 0pt of \n] {$\n$};
    }
    \foreach \s / \t in {2/3,3/4,2/4,2/5,3/5} {
      \draw (\s) edge (\t);
    }
    \gettikzxy{(2)}{\twox}{\twoy};
    \gettikzxy{(4)}{\fourx}{\foury};
    \node at ($(\twox/2+\fourx/2, \twoy+20)$) (ml13) {Layer $3$};   
    \foreach \s in {1} {
      \node[bnode, deletedv] at (\s) {};
    }  
    \foreach \s / \t in {4/5} {
      \draw  (\s) edge[added] (\t);
    }
   \end{scope}

    \begin{scope}[xshift=\xs*3+5ex,yshift=-\ys]
    \foreach \n / \x / \y / \s in  {1/0/0/above, 2/1/1/above,3/1/-1/below,4/2/0/above,5/3/0/above} {
      \node[nnode] at (\x*\xscc,\y*\yscc) (\n) {};
      \node[\s = 0pt of \n] {$\n$};
    }
    \foreach \s / \t in {1/2,1/3,1/4,2/3,3/4,2/4,4/5} {
      \draw (\s) edge (\t);
    }
    \gettikzxy{(2)}{\twox}{\twoy};
    \gettikzxy{(4)}{\fourx}{\foury};
    \node at ($(\twox/2+\fourx/2, \twoy+20)$) (ml21) {Layer $1$};
    
    \foreach \s in {1,5} {
      \node[bnode, deletedv] at (\s) {};
    }  
    
    \foreach \s / \t in {4/5} {
      \draw  (\s) edge[white] (\t);
      \draw  (\s) edge[deletede] (\t);     
    }
  \end{scope}

  \begin{scope}[xshift=\xs*4+5ex,yshift=-\ys]
    \foreach \n / \x / \y / \s in  {1/0/0/above, 2/1/1/above,3/1/-1/below,4/2/0/above,5/3/0/above} {
      \node[nnode] at (\x*\xscc,\y*\yscc) (\n) {};
      \node[\s = 0pt of \n] {$\n$};
    }
    \foreach \s / \t in {2/3,3/4,2/4,4/5} {
      \draw (\s) edge (\t);
    }
    \gettikzxy{(2)}{\twox}{\twoy};
    \gettikzxy{(4)}{\fourx}{\foury};
    \node at ($(\twox/2+\fourx/2, \twoy+20)$) {Layer $2$};
     
    \foreach \s in {1,5} {
      \node[bnode, deletedv] at (\s) {};
    }  
       
    \foreach \s / \t in {4/5} {    
      \draw  (\s) edge[white] (\t);
      \draw  (\s) edge[deletede] (\t);      
    }
  \end{scope}
    \begin{scope}[xshift=\xs*5+5ex,yshift=-\ys]
    \foreach \n / \x / \y / \s in  {1/0/0/above, 2/1/1/above,3/1/-1/below,4/2/0/above,5/3/0/above} {
      \node[nnode] at (\x*\xscc,\y*\yscc) (\n) {};
      \node[\s = 0pt of \n] {$\n$};
    }
    \foreach \s / \t in {2/3,3/4,2/4,2/5,3/5} {
      \draw (\s) edge (\t);
    }
    \gettikzxy{(2)}{\twox}{\twoy};
    \gettikzxy{(4)}{\fourx}{\foury};
    \node at ($(\twox/2+\fourx/2, \twoy+20)$) (ml23) {Layer $3$};   
    \foreach \s in {1,5} {
      \node[bnode, deletedv] at (\s) {};
    }  
    \foreach \s / \t in {4/5} {
      \draw  (\s) edge[added] (\t);
    }
  \end{scope}

   \draw ($(ml11.north west)+(0,.5)$) edge ($(ml23.north east)+(0,.5)$);
   \gettikzxy{(o3)}{\otx}{\oty};
   \gettikzxy{(t11)}{\tonex}{\toney};
   \gettikzxy{(ml13)}{\mlonex}{\mloney};
   \gettikzxy{(ml21)}{\mltwox}{\mltwoy};

   \draw ($(\otx/2+\tonex/2,\oty)+(0,0)$) edge ($(\mlonex/2+\mltwox/2,\mloney)+(0,-3)$);
  \end{tikzpicture}
  \caption{Examples for \TCE{} and \MLCE. Upper left: An instance with three layers. Upper right: A solution for \TCE\ with $k=1$ and $d=1$. Lower left: A solution for \MLCE{} with $k=3$ and $d=1$. Lower right: A solution for \MLCE{} with $k=1$ and $d=2$. 
We use red dashed edges to indicate edge deletion and green solid edges to indicate edge addition. Marked vertices are colored in green. Observe that there is no solution for \TCE{} when $k=0$ or $d=0$ and there is no solution for \MLCE{} when $k=0$ and $d \le 1$ or when $k\le 2$ and $d=0$.}
\label{fig:problem-example}
\end{figure}
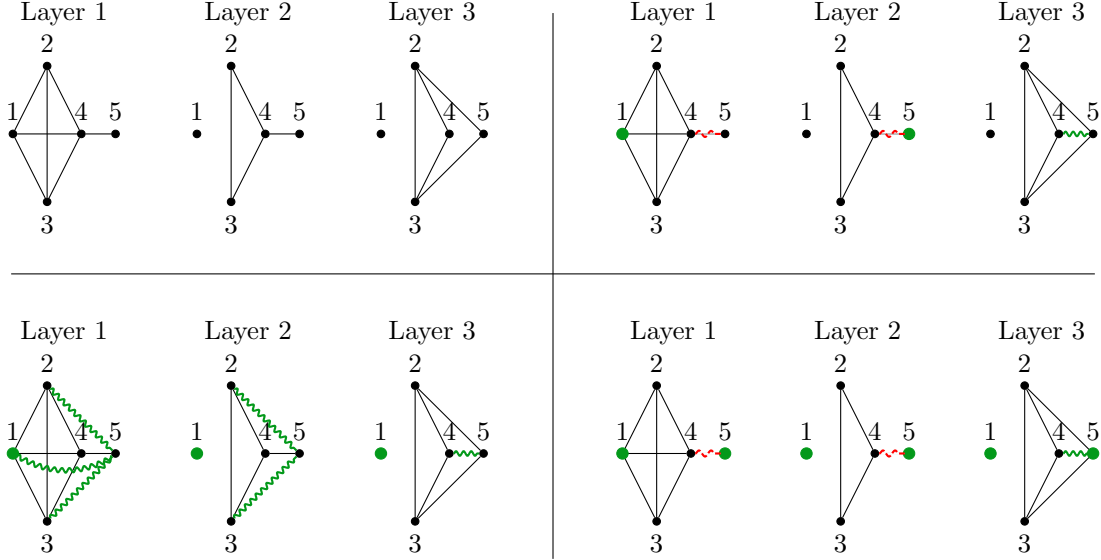

The most natural parameters are the ``number~$k$ of edge
modifications per layer'', the ``number~$d$ of marked vertices'', the
``number~$\ell$ of layers'', and the ``number~$n=|V|$ of vertices''.
An overview on our results is shown in \autoref{fig:resultsdiagram}.
(Note that, within these parameters, we have~$d\le n$ and
$k \le n^2$.) A straightforward reduction yields that \TCE\ is
\NP-complete even if both~$d = 0$
and~$\ell = 1$. 
   On the positive side,
\TCE\ 
allows for an algorithm with running time~$n^{O(k)}\ell$: The basic
idea is to check whether any two possible cluster editing sets for two
consecutive layers allow for a small number of marked vertices by
matching techniques. As it turns out, even for $d = 3$, we cannot
obtain an improved running time on the order of $(n\ell)^{o(k)}$
unless the Exponential Time Hypothesis (ETH) fails. The reason is an
obstruction represented by small clusters which may have to be joined or
split throughout many layers, to be able to form clusters in some
later layer. Finally, we give a polynomial kernel with respect to
the parameter combination $(d,k,\ell)$ and show that the problem does
not admit a polynomial kernel for parameter ``number $n$ of vertices''
unless \NoKernelAssume{}.

\newcommand{\tworows}[2]{\begin{tabular}{c}{#1}\\{#2}\end{tabular}}
\newcommand{\threerows}[3]{\begin{tabular}{c}{#1}\\{#2}\\{#3}\end{tabular}}

\tikzstyle{param}=[draw,rectangle, rounded corners=0.5mm, text height= 2ex, align=center] 
\begin{figure}[t]
\centering
    \begin{tikzpicture}[
      box/.style args = {#1/#2}{
        rectangle split, rectangle split horizontal, rectangle split parts=2,
        rounded corners=0.5mm,
        align=center,
        draw, 
        text height = 2ex,
        append after command={\pgfextra
            \fill[color=#1]
        (\tikzlastnode.south)
        [rounded corners] -| (\tikzlastnode.west) |- (\tikzlastnode.one north)
        [sharp corners]   -| (\tikzlastnode.one split) |- cycle;
            \fill[color=#2]
        (\tikzlastnode.two south)
        [rounded corners] -| (\tikzlastnode.east) |- (\tikzlastnode.north)
        [sharp corners]   -| (\tikzlastnode.one split) |- cycle;
                                        \endpgfextra}
                            }
    ]

    \node[] (dkname) at (0,0) {$(d,k)$};
  
    \node[box=white!40!orange/white!50!yellow, below = 0pt of dkname] (dk)  {
      \nodepart[text width = 18ex]{one} \W{1}-hard even for $d=3$~[Thm~\ref{thm:tempw1hk}]
      \nodepart{two}\FPT{}~[Thm~\ref{thm:fpt}]};

    \node[box=white!70!gray/white!60!red, right = 8ex of dk] (kl) {
      \nodepart[]{one}{\emph{open}}
      \nodepart[text width=14ex]{two} \paraNP{}-hard~[Prop~\ref{prop:hardness}]};

    \node[above = 0pt of kl] (klname) {$(k,\ell)$};
    
    \node[param, text width = 18ex, fill=white!60!red, right  = 8ex of kl] (dl) {\paraNP{}-hard \qquad\; [Obs~\ref{obs:hardness}]};
    
    \node[above = 0pt of dl] (dlname) {$(d,\ell)$};

    \gettikzxy{(dk.north east)}{\dkx}{\dky};
    \gettikzxy{(kl.north west)}{\klx}{\kly};
    \gettikzxy{(kl.north east)}{\klex}{\kley};
    \gettikzxy{(dl.north west)}{\dlx}{\dly};

    \node[param, text width = 22ex, fill=white!50!yellow,xshift=-12ex, yshift = 10ex] (n) at ($(\dkx/2+\klx/2,\dky)$) 
    {\FPT{}, No poly kernel [Thm~\ref{thm:tempxpk} \& Prop~\ref{thm:nopolykernel-n}]};

    \node[above = 0pt of n] (nname) {$n$ (same as $(d, n), (k, n), (d, k, n)$)};

    \node[param, text width = 22ex, fill=white!50!green,xshift=-8ex, yshift = 10ex] (dkl) at ($(\dlx/2+\klex/2,\dly)$)  {Poly kernel [Thm~\ref{thm:pk}]};

    \node[above = 0pt of dkl] (dklname) {$(d,k,\ell)$};

    \gettikzxy{(nname)}{\nx}{\ny};
    \gettikzxy{(dklname)}{\dklx}{\dkly};
    
    \node[param, fill=white!50!green,yshift=7ex] (nl) at ($(\nx/2+\dklx/2,\dkly)$) 
    {(same as $(d, n, \ell), (k, n, \ell), (d, k, n, \ell)$)};
 
    \node[above = 0pt of nl] (nlname) {$(n,\ell)$: Instance size};

    \node[below = 3ex of kl] (dname) {$d$};
    \node[param, fill=white!60!red, below = 0pt of dname] (d) {\paraNP{}-hard};
 
    \node[below = 3ex of dl] (lname) {$\ell$};
    \node[param, fill=white!60!red, below = 0pt of lname] (l) {\paraNP{}-hard};


    \node[below = 3ex of dk] (kname) {$k$};
    \node[box=white!40!orange/white!60!red, below=0pt of kname] (k) {
      \nodepart[]{one} \XP{} [Thm~\ref{thm:tempxpk}]
      \nodepart[]{two} \paraNP{}-hard %
    };
    
    \draw (dname) -- (dk);
    \draw (dname) -- (dl);
    \draw (kname) -- (dk);
    \draw (kname) -- (kl);
    \draw (lname) -- (dl);
    \draw (lname) -- (kl);
    
    \draw (dkname) -- (n);
    \draw (dkname) -- (dkl.south west);
    \draw (dlname) -- (dkl);
    \draw (klname) -- (dkl);
    
    \draw (nname) -- (nl);
    \draw (dklname) -- (nl);
    \end{tikzpicture}
    \caption{Our results for \TCE\ and \MLCE{} in a Hasse diagram of
      the upper-boundedness relation between the parameters the
      ``number~$k$ of edge modifications per layer'', the ``number~$d$
      of marked vertices'', the ``number~$\ell$ of layers'', and the
      ``number~$n=|V|$ of vertices'' and all of their combinations. A
      node is split into two parts if the complexity results differ;
      the left part shows the result for \TCE{}, the right part for
      \MLCE{}. Red entries mean that the corresponding parameterized
      problem 
      is \paraNP{}-hard (NP-hard for constant parameter values).
      Orange entries mean that the corresponding
      parameterized problem is \W{1}-hard while contained in \XP{} (solvable in polynomial time for constant parameter values). It
      is in \FPT{} for all parameter combinations colored
      yellow or green and admits a polynomial kernel for all parameter
      combinations colored green. It does not admit a polynomial
      kernel for all parameter combinations that are colored yellow
      unless~\NoKernelAssume{}. A tight parameterized complexity
      classification for the gray colored parameter combination
      is
      open. 
    }
\label{fig:resultsdiagram}
\end{figure}

\subparagraph{\MLCELong\ (\MLCE).}
For clusterings of multi-layer graphs we typically have to consider the tradeoff between closely matching individual layers
and getting an overall sufficient fit~\cite{tagarelli_ensemble-based_2017,tang_clustering_2009} (see also the example in \autoref{fig:problem-example}).
A local upper bound on the number of allowed edits per layer and a global set of marked vertices allow us to study the influence of this tradeoff on
the complexity of multi-layer cluster editing. Formally, a clustering~$\M = (M_i)_{i \in [\ell]}$ for a multi-layer graph~$\{G_i \mid i \in [\ell]\}$ is defined in the same way as for temporal graphs. Clustering~$\M$ is \emph{totally $d$-consistent} if there is a single subset~$D$ of vertices such that~$G'_i[V\setminus D]=G'_j[V\setminus D]$ for all $i,j\in [\ell]$. Below we drop the qualifiers ``temporally'' and ``totally'' if they are clear from the context. The computational problem capturing the tradeoff between local and global fit mentioned above thus formalizes as follows.

\decprob{\MLCELong{} (\MLCE)}{A multi-layer graph~$\G$ and two integers~$k$ and~$d$.}{Is there a totally $d$-consistent $k$-bounded clustering for~$\G$?}
Again, we say that the vertices in the corresponding set~$D$ are \emph{marked} and that they together with sets $M_i$ of edge modifications constitute a \emph{solution}. Examples are shown in \cref{fig:problem-example}.



  A brief summary of our results for \MLCE: While strong overall fit
  (small parameter~$d$) or closely matched layers (small
  parameter~$k$) alone do not lead to fixed-parameter tractability,
  jointly they do. Indeed, we obtain an
  $k^{O(k + d)}\cdot n^3 \cdot \ell$-time algorithm, in contrast to
  \TCE. At first glance, this is surprising because in the temporal
  case, we only need to satisfy the consistency condition ``locally''.
  This requires less interaction among layers and thus, 
  seemed to be easier to tackle than the multi-layer case. The
  algorithm uses a novel method that allows us make decisions over a
  large number of layers at once. It can be compared with greedy
  localization~\cite{hutchison_greedy_2004} in that some of the decisions are greedy and transient,
  meaning that they seem intuitively favorable and can be reversed in
  individual layers if they later turn out to be wrong. However, the
  application of this method is not straightforward, requires new techniques to deal with the interaction between layers and consequently
  intricately tuned branching and reduction rules.

  We in fact completely classify \MLCE{} in terms of fixed-parameter
  tractability and existence of polynomial-size problem kernels with
  respect to the parameters~$k,d,\ell$, and
  $n$, 
  and all of their combinations, see \autoref{fig:resultsdiagram} for
  an overview. \MLCE\ is \paraNP{}-hard (NP-hard for constant parameter values) for all parameter combinations
  which are smaller or incomparable to $k + d$. Straightforward
  reductions yield \NP-completeness even if both $d = 0$ and
  $\ell = 1$ or both $k = 0$ and~$\ell = 3$; the problem is
  polynomial-time solvable if $k = 0$ and
  $\ell\le 2$. 
  Finally, the kernelization
  results for \TCE{} also hold for \MLCE{}, that is, the problem
  admits a polynomial kernel with respect to 
  $(d,k,\ell)$ and does not admit a polynomial kernel for the 
  ``number $n$ of vertices'' unless \NoKernelAssume{}.

  \subparagraph{Related Work.} 
   Both multi-layer and temporal graphs harbor a range of
 important combinatorial problems, each with useful, nontrivial
 algorithmic theory. Such problems include
 multi-layer~\cite{bredereck2017assessing} and temporal (dense)
 subgraphs~\cite{himmel_adapting_2017,BHMMNS16}, temporal separators and
 paths~\cite{FMNZ18,ZFMN18,kempe2002connectivity}, covering problems~\cite{MMZ2019,AMSZ18}, and multi-layer
 connectivity~\cite{ALMS15,CY14,mertzios2013temporal}.

  We are not aware of
  studies of the fundamental algorithmic properties of multilayer and
  temporal graph clustering. In terms of parameterized algorithms,
  only the indirect approach of aggregating clusterings into one has
  been
  studied for multilayer~\cite{betzler_average_2011,dornfelder_parameterized_2014} and temporal graphs~\cite{tantipathananandh_framework_2007}. These approaches are less accurate, however~\cite{barigozzi_identifying_2011, tantipathananandh_finding_2011
  }. The approximability of temporal versions of $k$-means clustering and its variants was studied by Dey et al.~\cite{dey_temporal_2017}.

\section{Basic Observations and Few Layers}
\label{sec:hardness}
We now give some basic observations on the complexity of \TCE\ and \MLCE\ on few layers. Note that the two problems coincide when the input multi-layer or temporal graph has only two layers. Moreover, we obtain a complexity dichotomy for \MLCE\ with $k = 0$ showing that for $\ell \leq 2$ the problem is polynomial-time solvable and for $\ell \geq 3$ it becomes \NP-hard. 

Both \TCE\ and \MLCE\ are contained in \NP{} since we can verify in polynomial time whether a given subset(s) of vertices and edge modification sets constitute a solution to the problem in question.
Thus, in all proofs for \NP-completeness, we omit the proof for \NP{} containment and only show the hardness part.

\textsc{Cluster Editing} is contained as a special case in both \TCE\ and \MLCE\ when $\ell = 1$, $d = 0$. Since \textsc{Cluster Editing} is \NP-complete~\cite{bansal2004correlation}, 
we immediately get \NP-hardness for \TCE{} and \MLCE{}.

\begin{obs}
\label{obs:hardness}
\TCE{} and \MLCE{} are both \NP-complete for $d=0$ and $\ell=1$. 
\end{obs}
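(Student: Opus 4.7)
The plan is to argue both containment in NP and NP-hardness. For containment, I would note (as the surrounding text already points out in general) that given a purported solution---a sequence of edge modification sets $M_i$ and, for \TCE{}, vertex sets $D_i$---one can in polynomial time verify that $|M_i|\le k$, that each $G_i\oplus M_i$ is a cluster graph (checkable, e.g., by testing absence of induced $P_3$), that $|D_i|\le d$, and, for \TCE{}, that $G_i'[V\setminus D_i]=G_{i+1}'[V\setminus D_i]$ for every $i\in[\ell-1]$. So both problems lie in \NP{}.

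For hardness, the plan is an immediate reduction from \CE{}, as already sketched in the paragraph preceding the observation. Given a \CE{} instance $(G,k)$, I would produce the \TCE{} (resp.\ \MLCE{}) instance $(\G,k,d)$ with $\G=(G_1)$, $G_1:=G$, and $d:=0$. Since $\ell=1$, the consistency requirement in \TCE{} is vacuous (the index set $[\ell-1]$ is empty) and in \MLCE{} it is trivially satisfied (the condition $G_i'[V\setminus D]=G_j'[V\setminus D]$ quantifies over a single layer). The $d=0$ constraint forces $D_1=\emptyset$ but plays no further role. Hence, a $k$-bounded $0$-consistent clustering for $\G$ is nothing but a cluster editing set of size at most $k$ for $G$, giving the equivalence of instances.

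Since \CE{} is \NP{}-hard~\cite{bansal2004correlation} and the reduction is clearly polynomial-time, the claim follows for both problems. There is no real obstacle here: the only subtle point, and the one worth checking explicitly, is that the \TCE{} consistency condition indexed by $[\ell-1]$ is vacuous when $\ell=1$, so that setting $d=0$ does not artificially rule out yes-instances of \CE{}.
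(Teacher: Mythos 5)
Your proposal matches the paper's argument exactly: membership in \NP{} by polynomial-time verification of a purported solution, and hardness by the trivial reduction from \CE{} using a single layer $G_1 = G$ with $d = 0$, noting that the consistency condition becomes vacuous when $\ell = 1$. The paper states this more tersely (observing that \CE{} is literally the special case $\ell = 1$, $d = 0$), but the content is identical.
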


We now consider the scenario where we are not allowed to edit any edges (\emph{i.e.}\ $k=0$).
We find that for two layers our problem is related to computing a maximum-weight matching in a bipartite graph, which is polynomial-time solvable.

\begin{prop}[See also Exercise 4.5 and its hint in Cygan et al.~\cite{CyganFKLMPPS15}]\label[prop]{prop:mlce-two-layers}
If $k=0$ and $\ell=2$, then \TCE{} and \MLCE\ can be solved in $O(n^2\log{n})$~time, where $n$ denotes the number of vertices.
\end{prop}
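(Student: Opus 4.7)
The plan is to reduce the problem to maximum-weight bipartite matching on a bipartite graph with $O(n)$ vertices and $O(n)$ edges. First, I observe that when $k=0$ we have $M_1 = M_2 = \emptyset$, so $G_1$ and $G_2$ must already be cluster graphs; this is verifiable in $O(n^2)$ time, and the instance is a no-instance otherwise. Moreover, since $\ell=2$, there is only a single set $D \subseteq V$ (of size at most $d$) to be chosen, and \TCE{} and \MLCE{} coincide, both asking for $G_1[V\setminus D]=G_2[V\setminus D]$.

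Next I would set up the structural reduction. Let $\C_1=\{C_1^1,\dots,C_1^p\}$ and $\C_2=\{C_2^1,\dots,C_2^q\}$ be the cluster partitions of $G_1$ and $G_2$, and for each pair $(i,j)$ let $I_{ij} := C_1^i \cap C_2^j$. These cells partition $V$, and there are at most $n$ nonempty ones since each vertex lies in exactly one. A key claim is that for any $V' \subseteq V$, we have $G_1[V']=G_2[V']$ if and only if, for every nonempty $C_1^i$ intersecting $V'$, the set $C_1^i \cap V'$ lies in a single cell $I_{ij}$, and symmetrically for $C_2^j$. Equivalently, if we build a weighted bipartite graph $H$ with parts $\C_1, \C_2$ and an edge between $C_1^i$ and $C_2^j$ of weight $|I_{ij}|$ whenever $I_{ij}\neq\emptyset$, then the feasible sets $V'$ are exactly the unions of cells $I_{ij}$ corresponding to a matching in $H$. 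Thus maximizing $|V'|$ (equivalently, minimizing $|D|=n-|V'|$) amounts to computing a maximum-weight matching in $H$, and the instance is a yes-instance iff this maximum weight is at least $n-d$.

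Finally I would invoke a maximum-weight bipartite matching algorithm on $H$. Since $H$ has at most $p+q \leq n$ vertices and at most $n$ edges (one per nonempty cell), the Hungarian algorithm implemented with Fibonacci heaps runs in $O(nm + n^2 \log n) = O(n^2 \log n)$ time. The preceding steps (checking that both layers are cluster graphs, computing the partitions, and constructing $H$) require only $O(n^2)$ time, so the overall running time is $O(n^2 \log n)$.

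The main obstacle is proving the matching characterization cleanly: one direction is straightforward (a matching yields a valid $V'$ since each selected cell is a clique in both layers and the cells corresponding to different matched edges live in different clusters of both $G_1$ and $G_2$), but the converse requires arguing that if two vertices $u \in I_{ij}$ and $v \in I_{i'j'}$ both survive in $V'$ with $i=i'$ and $j\neq j'$, then $uv$ is an edge in $G_1[V']$ but not in $G_2[V']$, forcing the surviving cells to form a partial matching between $\C_1$ and $\C_2$. Once this is established, the running time is immediate from standard matching algorithms.
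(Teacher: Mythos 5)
Your proposal takes essentially the same approach as the paper: both verify that each layer is already a cluster graph, construct a bipartite graph whose two sides are the clusters of $G_1$ and $G_2$ with edges weighted by intersection sizes, and reduce to maximum-weight bipartite matching, accepting iff the optimum weight is at least $n-d$. The matching characterization you outline (surviving cells must form a partial matching because a vertex pair split between two clusters of one layer but together in the other would witness $G_1[V']\neq G_2[V']$) is the same correctness argument as in the paper, just phrased in terms of cells $I_{ij}$.
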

\begin{proof}
 Let $I=(G_1=(V, E_1), G_2=(V, E_2), k=0, d)$ be an input instance of \MLCE{}.
 We claim that the following procedure decides in $O(n^2\log{n})$~time whether $I$ is a yes-instance of \MLCE{}, \emph{i.e.}\
 whether there is a subset~$D\subseteq V$ of at most $d$ vertices such that $G_1[V\setminus D]=G_2[V\setminus D]$.

\begin{compactenum}
\item Check whether $G_1$ and $G_2$ are both cluster graphs. If at least one is not, answer NO.
\item Create a complete (edge-weighted) bipartite graph $H=(A\uplus B, E, w\colon E \to \{1,2,\ldots, n\})$ in the following way:
\begin{compactitem}
\item For each maximal clique $X$ in $G_1$ add a vertex $v_X$ to $A$.
\item For each maximal clique $X$ in $G_2$ add a vertex $v_X$ to $B$.
\item Add an edge between each two vertices~$v_X\in A$ and $v_Y\in B$ with edge weight $w(\{v_X,v_Y\})=|X\cap Y|$.
\end{compactitem}
\item Compute a maximum-weight matching for $H$. If the weight of the matching is at least~$|V|-d$, answer~YES, otherwise answer~NO.
\end{compactenum}
It is well-known that the first step reduces to checking whether there
is an induced $P_3$ in one of the graphs, which can be done in
$O(n + m)$ time, where $n$ is the number of vertices and $m$ is the maximum number of edges in a cluster graph.\footnote{The proof is folklore and proceeds roughly as follows. Find the connected components of the input graph. Next, determine whether there are two nonadjacent vertices~$u, v$ in a connected component. If so, then find an induced $P_3$ along a shortest path between $u$ and $v$. Otherwise, there is no induced~$P_3$. Nonadjacent vertices in a connected component can be checked for in $O(\deg(v))$ time summed over each vertex~$v$ in that component.} The second step can be performed in $O(n + m)$ time
as follows. 
Find all connected components in $G_1$ and label the vertices in $G_1$ according to the components that contain them.
Introduce to $A$ a \emph{cluster vertex}~$v_X$ for each label~$X$.
The vertices in $B$ are constructed analogously.
Now, to compute the edge weights in $H$, 
iterate over all vertices in~$V$ and 
add to $H$ an edge of weight one that is incident with the two corresponding cluster vertices or increase the edge
weight if the edge is added due to previous iteration. Note that $H$ contains at most~$n$ edges.
Finally, the third step can be carried out in $O(n^2\log{n})$ time using the
Hungarian algorithm,\todo{OS:should we have a reference here?}{} which also dominates the remaining running time.

\proofparagraph{Correctness.} Note that if one of $G_1$ and $G_2$ is not a cluster graph, then we clearly face a no-instance, which is correctly identified by the algorithm in the first step. So from now on, assume that both $G_1$ and $G_2$ are cluster graphs. 
To show the correctness of the last step, suppose that there is a vertex subset~$D\subseteq V$ of size at most $d$ such that $G_1[V\setminus D] = G_2[V\setminus D]$. 
Let $q_1,q_2,\ldots,q_x$ bet the maximal cliques remaining in $G_1[V\setminus D]$.
One can verify the following matching $M$
has weight $|V|-|D|$: For each clique~$q_i$, add to $M$ the edge $\{v_X,v_Y\}$ where $X$ and $Y$ are the two maximal cliques that contain $q_i$ in $G_1$ and $G_2$, respectively. 
Note that since $G_1$ and $G_2$ are two cluster graphs on the same vertex set,
no maximal clique remaining in $G_1[V\setminus D]$ belongs to two different maximal cliques in $G_1$ or~$G_2$.
Thus, $M$ is indeed a matching. 
It is straightforward to see that it has weight $|V| - |D|$.

In the opposite direction, assume that $H$ admits a matching $M$ with weight at least $|V|-d$.
We consider the following subset~$V'$ of vertices:
For each edge~$\{v_X,v_Y\}$ in $M$,
add to $V'$ all vertices in $X\cap Y$; their number is exactly the weight of $\{v_x,v_y\}$ in $H$.
Since $V'$ only contains vertices which are in the intersection of two maximal cliques in $G_1$ and $G_2$, respectively, it follows that $G_1[V'] = G_2[V']$.
Thus, if we remove, by marking, all vertices in $V\setminus V'$,
then both cluster graphs become the same.
Since $M$ is a matching, it follows $|V'|=w(M)\ge |V|-d$.
Thus, at most $d$ vertices, namely those in $V\setminus V'$, are marked.
\end{proof}

As soon as there are three layers, even when we are only allowed to mark vertices, \MLCE{} is \NP-hard.
We establish this by providing a polynomial-time reduction from an \NP-complete\ 3-SAT variant,
called (2,2)-3-SAT. Herein, each clause has two or three literals,
and each literal appears exactly twice~\cite[Lemma~1]{BulCheFalNieTal2015}.


\begin{prop}
\label{prop:hardness}
 \MLCE{} is \NP-complete even if $k=0$ and $\ell=3$.
\end{prop}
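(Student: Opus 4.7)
The plan is to give a polynomial-time reduction from (2,2)-3-SAT (NP-hard by Bulteau et al.), producing an instance $(\G = (G_1,G_2,G_3), k=0, d)$ of \MLCE. Since \MLCE{} is in \NP{} (verify a guessed $D$ and that all three layers coincide on $V\setminus D$), only hardness needs to be shown. Observe that with $k=0$ no edits are allowed, so each $G_i$ must already be a cluster graph, and the task reduces to: find $D\subseteq V$ of size at most $d$ such that $G_1[V\setminus D]=G_2[V\setminus D]=G_3[V\setminus D]$. Equivalently, every ``bad pair'' $\{u,v\}$ (a pair placed in the same cluster in some layer but in different clusters in another) must be hit by $D$.

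Given a (2,2)-3-SAT formula~$\varphi$ with variables $x_1,\ldots,x_n$ and clauses $c_1,\ldots,c_m$, I would introduce, for each variable $x_i$, two variable vertices $v_i^+, v_i^-$ representing the two literals $x_i$ and $\neg x_i$, and for each clause $c_j$ with literals $\ell_{j,1},\ldots,\ell_{j,s_j}$ (where $s_j\in\{2,3\}$), a small set of auxiliary clause-gadget vertices. The three layers would be arranged so that (i) the pair $\{v_i^+,v_i^-\}$ is a bad pair (placed together in one layer and apart in another), forcing at least one of them to be marked --- by convention, marking $v_i^+$ encodes ``$x_i$ true'', so that the \emph{true}-literal vertex is marked and available to cover bad pairs in the clause gadget; and (ii) the clause gadget for $c_j$ produces a collection of bad pairs whose minimum vertex cover has size $s_j - 1$ if at least one of the literal vertices $v(\ell_{j,k})$ is marked (i.e.\ the clause is satisfied), and size $s_j$ otherwise. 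The budget would be set tightly to $d = n + \sum_j (s_j-1)$.

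For correctness, a satisfying assignment would yield a marking of size exactly~$d$: mark the true-literal vertex of each variable, and mark the $s_j-1$ auxiliary vertices per clause corresponding to covering all bad pairs not already covered by a true-literal vertex. Conversely, any marking of size at most~$d$ must be tight in the variable gadget (exactly one of $v_i^+,v_i^-$ marked per $i$) and in every clause gadget (at most $s_j-1$ gadget marks per $c_j$); a standard exchange argument would then show that this forces at least one literal vertex per clause to be marked, giving a satisfying assignment.

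The principal obstacle lies in designing the clause gadget. Each variable vertex $v_i^{\pm}$ appears in two clauses (since each literal occurs exactly twice in (2,2)-3-SAT), so within any single layer it can lie in only one cluster, meaning both clause gadgets involving~$v_i^{\pm}$ must share the cluster membership assigned to $v_i^{\pm}$ in that layer. A naive construction that wants literal vertices together in a ``clause cluster'' immediately violates the cluster-graph property. I would therefore have the gadget operate via \emph{per-occurrence} auxiliary vertices that each form a bad pair with exactly one $v(\ell_{j,k})$, together with additional bad pairs among the auxiliary vertices themselves (e.g.\ placed together in one layer and apart in another) that break the symmetry across literals of the same clause, so that the minimum cover genuinely reflects clause satisfaction rather than merely the number of per-literal disagreements. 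Verifying that each $G_i$ remains a cluster graph and that the cover count is exactly $s_j-1$ vs $s_j$ in the two cases is the delicate bookkeeping step; a minor additional case distinction between $s_j = 2$ and $s_j = 3$ is likely needed in the auxiliary construction.
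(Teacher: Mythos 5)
Your proposal identifies the right source problem, the right overall shape (force one of two per-variable marks, then budget clause gadgets so that a satisfied literal saves one mark), and — to your credit — the right obstacle (a literal vertex appears in two clauses, so it can occupy only one cluster per layer, and naive ``cluster the literals of $c_j$ together'' gadgets collide). But you never actually resolve that obstacle: the clause gadget is only described as ``per-occurrence auxiliary vertices... together with additional bad pairs among the auxiliary vertices... that break the symmetry,'' and the step you flag as ``delicate bookkeeping'' is in fact the entire crux of the reduction. As written, the proof has a gap exactly where the construction must do real work.

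The paper's construction avoids your obstacle at the source rather than patching it in the clause gadget. It does not use two literal vertices per variable; it uses \emph{four}, one per literal occurrence: $x^1_i, x^2_i$ for the two occurrences of $x_i$ and $y^1_i, y^2_i$ for the two occurrences of $\overline{x}_i$. The variable gadget then lives entirely in layers~$1$ and~$2$ (disjoint edges $\{x^1_i,y^1_i\},\{x^2_i,y^2_i\}$ in layer~$1$ versus $\{x^1_i,y^2_i\},\{x^2_i,y^1_i\}$ in layer~$2$), whose symmetric difference is a $4$-cycle, forcing a mark of at least two of the four variable vertices, so the budget is $d = 2n + \sum_j(|C_j|-1)$ rather than your $n + \sum_j(s_j-1)$. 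Layer~$3$ is just a perfect matching from each clause vertex $c^t_j$ to the variable vertex for the corresponding literal occurrence; since every variable vertex has exactly one occurrence, each has degree exactly one in layer~$3$, and there is no cluster-membership conflict to manage. The clause gadget is then trivially ``clique on the $|C_j|$ clause vertices in layer~$1$, independent set in layer~$2$,'' forcing $|C_j|-1$ clause-vertex marks and making the remaining clause vertex isolated iff its matched literal vertex was marked. So: your high-level plan is sound but incomplete; the missing idea is to split each literal vertex into one vertex per occurrence (and correspondingly restructure the variable gadget as a $4$-cycle in the layer-$1$/layer-$2$ symmetric difference), after which the clause gadget becomes essentially free and your ``additional bad pairs among auxiliary vertices'' machinery is unnecessary.
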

\begin{proof}
  To show the hardness, we reduce from the \NP-complete (2,2)-3-SAT problem~\cite[Lemma~1]{BulCheFalNieTal2015}.
  Let $I=(\mathcal{X}, \mathcal{C})$ be an instance of (2,2)-3-SAT, 
  where $\mathcal{X}=\{x_1,x_2,\ldots, x_n\}$ is a set of $n$ variables and $\mathcal{C}=\{C_1,C_2,\ldots,C_m\}$ 
  is a set of $m$ clauses of size two or three
  such that each variable appears exactly four times, twice as a positive literal and twice as a negative literal.

  We aim to construct an instance~$I'=(G_1=(V,E_1),G_2=(V,E_2),G_3=(V,E_3), k=0, d)$ for \MLCE{}.
  The idea behind the reduction is to use two layers ($G_1$ and $G_2$) to construct a variable gadget for each variable, one layer for each truth value.
  Then, we use a third layer ($G_3$) to construct a satisfaction gadget for each clause that has ``connection'' to some vertices that correspond to the literals contained in the clause.
  This connection and the number~$d$ (defined below) ensure that we need to mark at least one vertex that corresponds to a literal in the clause.
  This means that at least one literal needs to be set to true in order to satisfy the clause.
  An example for the corresponding construction is shown in \cref{fig:hardness-instance}.
  
  Formally, vertex set~$V$ for $I'$ consists of two groups:
  \begin{compactitem}
    \item For each variable~$x_i\in \mathcal{X}$, create two pairs of \emph{variable vertices}, denoted as $x^1_i, y^1_i$, $x^2_i$, and~$y^2_i$.
    \item For each clause~$C_j \in \mathcal{C}$, create $|C_j|$ \emph{clause vertices}, denoted as $c^z_j$, $1\le z\le |C_j|$.
  \end{compactitem}
  Let $D^{\textsf{true}}_i=\{x^1_i, x^2_i\}$ and $D^{\textsf{false}}_i=\{y^1_i, y^2_i\}$.
  We will construct two layers so that for each variable~$x_i$, we need to mark either all vertices in $D^{\textsf{true}}_i$ or all vertices in $D^{\textsf{false}}_i$.
  Intuitively, marking~$D^{\textsf{true}}_i$ corresponds to setting the variable~$x_i$ to true while marking~$D^{\textsf{false}}_i$ corresponds to setting the variable~$x_i$ to false.
  
  \noindent The three layers are constructed as follows:
  \begin{description}
    \item[Layer 1.] For each $i\in \{1,2,\ldots, n\}$, add to $G_1$ the two disjoint edges~$\{x^1_i,y^1_i\}$ and $\{x^2_i, y^2_i\}$.
    For each $j\in \{1,2,\ldots, m\}$, add to $G_1$ a clique consisting of all the corresponding clause vertices~$c^z_j$, $1\le z\le |C_j|$.
    Note that $|C_j|$ has either two or three literals, and if $C_j$ has two literals, then the constructed clique is an edge; otherwise, it is a triangle.
    \item[Layer 2.] For each $i\in \{1,2,\ldots, n\}$,
    add to $G_2$ the two disjoint edges~$\{x^1_i, y^{2}_i\}$ and $\{x^2_i, y^1_i\}$.
    \item[Layer 3.] For each $j\in \{1,2,\dots, m\}$, let $\ell^{t}_j$ be the $t^{\text{th}}$ literal in $C_j$, $1\le t\le |C_j|$.
    If $\ell^t_j$ corresponds to a positive literal~$x_i$ for some $i\in \{1,2,\ldots,n\}$
    and it is the $z^{\text{th}}$ occurrence of literal~$x_i$ (note that $z\in \{1,2\}$),
    then add to $G_3$ the edge~$\{c^t_j, x^z_i\}$.
    If $\ell^t_j$ corresponds to a negative literal~$x_i$ for some $i\in \{1,2,\ldots,n\}$
    and it is the $z^{\text{th}}$ occurrence of literal~$\overline{x}_i$ (note that $z\in \{1,2\}$),
    then add to $G_3$ the edge~$\{c^t_j, y^z_i\}$.
  \end{description}
  Observe that the symmetric difference between the edge sets of the first two layers,
  restricted to the variable vertices,
  forms a vertex-disjoint union of cycles of length four each.
  This means that we need to mark at least two vertices in each cycle.

To complete the construction, we set $k=0$ and $d=2n+\sum_{j=1}^{m}(|C_j|-1)$.
Clearly, the construction can be conducted in polynomial time.
We move on to show the correctness, \emph{i.e.}, instance~$I$ admits a satisfying truth assignment if and only if there is a vertex subset~$D\subseteq V$ with $d$ vertices such that $G_1[V\setminus D]=G_2[V\setminus D]=G_3[V\setminus D]$.

\smallskip
\noindent $\boldsymbol{(\Rightarrow)}$ For the ``only if'' direction, assume that $\sigma\colon \mathcal{X}\to \{T,F\}$ is a satisfying truth assignment for $I$.
To prove that $I'$ is a yes-instance of \MLCE{} it suffices to show that after marking and removing the following vertices
we obtain only isolated vertices for all three layers.
\begin{compactenum}
  \item For each variable~$x_i\in \mathcal{X}$, if $\sigma(x_i)=T$, then mark the vertices in $D^{\textsf{true}}_i$; otherwise mark the vertices in $D^{\textsf{false}}_i$.
  \item For each clause~$C_j \in \mathcal{C}$, identify a literal~$\ell^{t}_j$ in $C_j$ such that $\sigma(\ell^{t}_j)$ makes $C_j$ satisfied.
  Let $u\in \{x^1_i,y^1_i,x^2_i, y^2_i\}$ (for some $i\in \{1,2,\ldots, n\}$) be the vertex that corresponds to the literal~$\ell^{t}_j$.
  Mark all $|C_j|-1$ clause vertices~$c^z_j$ ($1\le z\le |C_j|$) that are \emph{not} adjacent to $u$ in~$G_3$.
\end{compactenum}
Obviously, we have marked (and removed) in total $d$~vertices.
Moreover, in the first two layers, we have marked all but one vertex of each maximal clique. 
Thus, after removing all vertices marked according to the rules above, the first two layers each contain only isolated vertices.
To see why in the third layer only isolated vertices remain,
 we observe that $G_3$ consists of only disjoint edges and for each clause~$C_j$ we have marked exactly $|C_j|-1$ endpoints, each of a different edge.
Thus, exactly one clause vertex remains that is adjacent to a variable vertex~$u$ such that the corresponding literal, under the truth assignment of $\sigma$, makes $C_j$ satisfied.
This means that $u$ is removed. 
Thus, this remaining clause vertex is also isolated. 

\smallskip
\noindent $\boldsymbol{(\Leftarrow)}$ For the ``if'' direction, assume that there is a subset~$D$ of at most $d$~vertices, deleting which makes all three layers have the same (cluster) graph.
For each variable~$x_i \in \mathcal{X}$, the two pairs of variable vertices form a length-four cycle in the symmetric difference between the first two layers.
Thus, we need to mark at least two variable vertices for each variable,
implying that $D^{\textsf{true}}_i\subseteq D$ or $D^{\textsf{false}}_i\subseteq D$.
We claim that indeed $D \cap \{x^1_i,y^1_i,x^2_i, y^2_i\}$ equals either $D^{\textsf{true}}_i$ or $D^{\textsf{false}}_i$
by showing that for each variable at most two variable vertices belong to $D$.
To see this, observe that all $|C_j|$ clause vertices that correspond the same clause~$C_j \in \mathcal{C}$
form a maximal clique in the first layer and an independent set in the second layer.
Thus, we need to mark at least $|C_j|-1$ clause vertices.
In total, we have to mark at least $\sum_{j=1}^{m}(|C_j|-1)$ clause vertices.
Thus, by the definition of $d$, we can mark at most $2n$ variable vertices.
As already reasoned, for each variable, we have to mark at least two variable vertices.
All together, $D$ contains at most two variable vertices for each variable.

Now, we can assume that for each variable~$x_i\in \mathcal{X}$ the intersection $D \cap \{x^1_i,y^1_i,x^2_i, y^2_i\}$ equals either $D^{\textsf{true}}_i$ or $D^{\textsf{false}}_i$.
We claim that setting each variable~$x_i$ to true if $D^{\textsf{true}}_i\subseteq D$ and to false otherwise satisfies all clauses.
Suppose, towards a contradiction, 
that there is clause~$C_j \in \mathcal{C}$ that is not satisfied.
By our assignment, this means that for each literal~$\ell_j$ in $C_j$ 
we have that if it is a positive literal~$x_i$ (for some $i$), then $D^{\textsf{true}}_i \cap D=\emptyset$; otherwise $D^{\textsf{false}}_i\cap D=\emptyset$.
This means that no variable vertex that is adjacent to some clause vertex~$c_j^z$ (for some $z$) is marked.
However, by our reasoning above, exactly one clause vertex~$c_j^z$ is not marked.
This means that there is at least one edge remaining in the third layer after we remove all vertices in $D$ which does not exist in the first two layers---a contradiction.
\end{proof}

 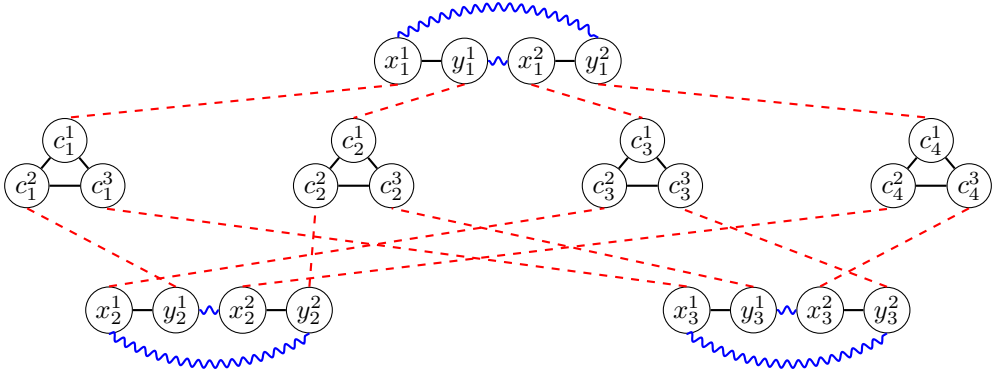
\begin{figure}
   \centering
   \begin{tikzpicture} 
     \tikzstyle{nnode} = [draw, circle, inner sep=1pt]
     \def \xs {3.8}
     \def \xss {1}
     \tikzstyle{layer1}= [thick]
     \tikzstyle{layer2}= [decorate, decoration={snake,amplitude=.5mm,segment length=1.5mm}, blue, thick]
     \tikzstyle{layer3}= [dashed, red, thick]

     \foreach \i in {1,2,3,4} {%
       \foreach \j  in {2,3} {%
         \node[nnode] at (\i*\xs+\j*\xss, 0) (c\i\j) {$c^\j_\i$};
       }
       \node[nnode] at (\i*\xs+\xss*2.5, .6) (c\i1) {$c^1_\i$};
     }

     \foreach \s / \d / \t / \z in {1/2/3/30, 2/1/2/-64, 3/3/4/-64} {%
       \gettikzxy{(c\d1)}{\onex}{\oney};
       \gettikzxy{(c\t1)}{\twox}{\twoy};
       \foreach \n / \j / \x / \y  in {x/1/-1.5/1, x/2/.5/1, y/1/-.5/1, y/2/1.5/1} {
           \node[nnode] (\s\n\j) at (\onex*.5+\twox*.5+\x*25, \oney+\z) {$\n^\j_\s$};
       }
     }
     
     \foreach \i in {1,2,3} {
       \foreach \j in {1,2} {
       \draw ({\i x\j}) edge[layer1] ({\i y\j});
     }
     }

     \foreach \i in {1,2,3,4} {
       \foreach \s / \t in {1/2, 2/3, 3/1} {
         \draw (c\i\s) edge[layer1] (c\i\t);
       }
     }

     \foreach \i / \t / \s in {1/bend left/north,2/bend right/south, 3/bend right/south} {
       \draw ({\i x1.\s}) edge[layer2,\t] ({\i y2.\s});
       \draw ({\i x2}) edge[layer2] ({\i y1});
     }

     \foreach \s / \t in {1x1.south/c11.north, 1x2.south/c31.north, 1y1.south/c21.north, 1y2.south/c41.north,%
       2x1.north/c32.south,2x2.north/c42.south,2y1.north/c12.south,2y2.north/c22.south, %
       3x1.north/c13.south,3x2.north/c43.south,3y1.north/c23.south,3y2.north/c33.south%
     } {
       \draw (\s) edge[layer3] (\t);
     }

   \end{tikzpicture}
   \caption{An instance produced for a (2,2)-3-SAT instance~$I$ in the proof for \cref{prop:hardness}, where $I=(\{x_1,x_2,x_3\}, C_1=(x_1\vee \overline{x}_2 \vee x_3),   C_2= (\overline{x}_1\vee \overline{x}_2 \vee \overline{x}_3), C_3 = ({x}_1 \vee x_2 \vee \overline{x}_3), C_4 = (\overline{x}_1 \vee x_2 \vee x_3))$. Each layer has a different line style: Layer 1 has solid black lines, layer 2 has blue zigzag lines, while layer 3 has red dashed lines.}
   \label{fig:hardness-instance}
 \end{figure}

\section{\MLCELong{} (\MLCE)}
\label{sec:fpt}
In this section, we present an \FPT\ algorithm for \MLCE{} with respect to the combined parameter $(k, d)$.

\begin{theorem}
\label{thm:fpt}
 \MLCE{} is \FPT{} with respect to the number~$k$ of edge modifications per layer and number~$d$ of marked vertices combined. It can be solved in $k^{O(k+d)}\cdot n^3\cdot \ell$ time.
\end{theorem}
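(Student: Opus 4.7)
The plan is to design a bounded search tree algorithm of depth $O(k+d)$ with branching factor $k^{O(1)}$, yielding $k^{O(k+d)}$ leaves, each verified in $O(n^3\cdot \ell)$ time. The key structural observation is that in any solution the edited first layer $G_1'$ determines the common partition $\mathcal{P}=G_1'|_{V\setminus D}$ that every other layer must realize on $V\setminus D$; hence I would first enumerate candidates for $G_1'$ by branching on the first layer, and then propagate the choices to the remaining layers.

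In the first phase, while the evolving $G_1$ (with the edits already placed in $M_1$) contains an induced $P_3$ $(u,v,w)$, branch into six options---add $\{u,w\}$, delete $\{u,v\}$, or delete $\{v,w\}$ in $M_1$, or place one of $u$, $v$, $w$ in $D$---each decreasing either $k$ or $d$ by one. This is the classical cluster-editing branching rule extended with a marking option. After depth at most $k+d$, the recursion produces $6^{k+d}$ leaves at which $G_1$ modified by the chosen $M_1$ is a cluster graph with partition $\mathcal{Q}$, and I set $\mathcal{P}:=\mathcal{Q}|_{V\setminus D}$.

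In the second phase I propagate $\mathcal{P}$ to the remaining layers. For each other layer $j$ and each pair $u,v\in V\setminus D$, inclusion of $\{u,v\}$ in $M_j$ is forced by $\mathcal{P}$, and these forced edits can be listed in $O(n^2)$ time per layer. An optimal placement of each vertex in $D$ into a cluster of $\mathcal{P}$ or into a new $D$-only singleton, for each layer, is then a weighted bipartite matching solvable in $O(n^3)$ time per layer. If some layer still exceeds $k$ edits, branch a second time: the endpoints of the excess edits form a set of size $O(k)$, and I branch on which of them to add to $D$, decreasing $d$ by one. This secondary recursion has depth $O(d)$ and factor $O(k)$, contributing a $k^{O(d)}$ factor; together with the first phase the total is $6^{k+d}\cdot k^{O(d)}=k^{O(k+d)}$ leaves.

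The main obstacle I anticipate is maintaining consistency whenever the secondary branching enlarges $D$: a newly marked vertex $v$ may invalidate pair-forcings that drove earlier forced edits, and may also be allowed to join a cluster that was previously forbidden for it. Addressing this cleanly requires carefully tuned reduction rules that preserve the invariants---for instance, lazily recomputing forced edits and rerunning the matching after each marking---together with an exchange argument on an optimal solution showing that at least one branch in each phase reaches that optimum. This is the ``intricately tuned'' interplay of branching and reduction rules alluded to in the introduction; the ``greedy localization'' flavor comes from the fact that the matching in the second phase is a provisional choice that the secondary branching may revise.
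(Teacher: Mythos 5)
Your decomposition is genuinely different from the paper's. The paper never fixes a single ``reference layer''. Instead it initialises edit sets $M_1,\dots,M_\ell$ for \emph{all} layers simultaneously via a greedy majority rule (put $\{u,v\}$ into every layer if it appears in at least half of them, remove it from every layer otherwise), and then branches on an \emph{aligning} constraint that keeps $G_i'[V\setminus D]$ identical across all layers at all times. The greedy initialisation is what makes the depth bound go through: whenever the algorithm commits a pair to be permanent, that pair changes status in at least $\ell/2$ of the sets $M_i$, so the quantity $\sum_i |M_i\cap B|$ rises by $\ell/2$, and the search tree has depth $O(k+d)$. Your two-phase ``fix $G_1'$ and propagate'' idea is a cleaner conceptual picture, but it trades the greedy-majority invariant for a more delicate post-processing phase, and it is precisely there that the argument breaks.

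Two concrete gaps. First, placing the marked vertices per layer is \emph{not} a weighted bipartite matching. There is no one-to-one constraint between marked vertices and clusters of $\mathcal P$ --- several marked vertices may join the same cluster, and marked vertices may form clusters among themselves (not just singletons), which can save edits. If two marked vertices $u,v$ are adjacent in $G_j$ and isolated otherwise, the cheapest placement makes $\{u,v\}$ its own cluster at cost $0$, whereas your singleton-or-existing-cluster formulation pays $1$. Moreover the $D$--$D$ pair costs depend on the joint assignment, so the objective is not separable into a bipartite matching. The correct subroutine is a bounded-budget \textsc{Cluster Editing} instance restricted to pairs touching $D$ (the paper solves exactly this set $\mathcal M_i$, in $O(3^k(n+m))$ time per layer via the standard $P_3$ branching). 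Second, your secondary branching is not well-founded. ``The endpoints of the excess edits form a set of size $O(k)$'' is false as stated: the set of forced edits in a layer can be $\Theta(n)$ pairs touching $\Theta(n)$ distinct vertices. What is true is that if the forced-edit count on $\binom{V\setminus D}{2}$ exceeds $k$, then any $k+1$ of those pairs must include one with an endpoint in $D^\star\setminus D$ (since at most $k$ of them can be genuine edits of the witness solution), so one can branch on those $2(k+1)$ endpoints. But you also claim to branch when the \emph{placement} pushes a layer over budget while forced edits are already $\le k$; in that regime there is no forced-edit pigeonhole argument available, and it is not clear which $O(k)$ candidates to branch on. The paper sidesteps this by having Branching Rule~3 (\autoref{brule:budget2}) run a kernelisation-style subroutine on the offending layer that extracts a bounded candidate set $R\cup C$ together with an exchange argument showing at least one of those pairs or endpoints must be in the witness; your proposal needs, but does not supply, an analogue of that extraction.

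Finally, you correctly anticipate the issue that marking a vertex during phase~2 should retroactively remove its incident pairs from $M_1$ and from all forced-edit tallies --- the paper handles this with an explicit Clean-up Rule applied before every other rule --- but ``lazily recompute and rerun the matching'' does not by itself resolve the two gaps above. With the matching replaced by a per-layer $3^k$ Cluster Editing routine and the secondary branching restricted to the forced-edit-overflow case together with a properly justified candidate-extraction step, your outline could likely be turned into a correct $k^{O(k+d)}\cdot n^{O(1)}\ell$ algorithm, but as written the two steps I flagged do not go through.
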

\noindent We describe a recursive search-tree algorithm (see~\autoref{alg:mlce}) for the following input:
\begin{compactitem}
\item An instance $I$ of \MLCE{} consisting of a multi-layer graph $G_1, \ldots, G_\ell = (V, E_1), \ldots, (V, E_\ell)$ and two integers $k$ and $d$.
\item A \emph{constraint} $P = (D, (M_i)_{i \in [\ell]}, B)$, consisting of a set of \emph{marked} vertices $D \subseteq V$, edge modification sets $M_1, \ldots, M_\ell \subseteq \binom{V}{2}$, and a set $B\subseteq \binom{V\setminus D}{2}$ of \emph{permanent} vertex pairs.
\end{compactitem}
Moreover, we require that the constraint given to the recursive algorithm to be aligning. A constraint $P= (D, (M_i)_{i \in [\ell]}, B)$ is \emph{aligning} if $G'_i[V\setminus D]=G'_j[V\setminus D]$ for all $i,j\in[\ell]$, where $G'_i=(V, E_i\oplus M_i)$ for all $i\in[\ell]$.

Note that the algorithm expects some initial modification sets as input that, when applied to all layers, makes them equal up to marked vertices. These initial edge modification sets are computed greedily, hence the algorithm follows the greedy localization approach~\cite{hutchison_greedy_2004} in which we
make decisions greedily and possibly revert them through 
branching later on. The greedy decisions herein give us some structure
that we can exploit to keep the search-tree size bounded in~$k$ and~$d$. The edge
modification sets~$M_i$ represent both the greedy decisions and those
that we made through branching. The set~$B$ contains only those made
by branching. 

The initial modification sets are constructed according to the following rule that adds all edges that appear in at least half of all layers to all of the remaining layers and removes all other edges.
\begin{grule}
   \label{brule:greedy}
   Let $M_i = \emptyset$ for every $i \in [l]$.\\
  For every vertex pair $\{u, v\}\in \binom{V}{2}$ do the following:
  \begin{compactitem}
  \item 
  If $|\{ E_i \mid \{u, v\}\in E_i\}|\ge \frac{\ell}{2}$, then for all $ i\in[ \ell]$ set $M_i \leftarrow
  M_i \cup (\{\{u, v\}\}\setminus E_i)$. 
  \item If $|\{ E_i \mid \{u, v\}\in E_i\}|< \frac{\ell}{2}$, then for all $i \in[ \ell]$ set $M_i \leftarrow
  M_i \cup (\{\{u, v\}\}\cap E_i)$. 
  \end{compactitem}
\end{grule}

From now on, we assume that the input constraint of the algorithm contains edge modification sets produced by the \hyperref[brule:greedy]{Greedy Rule}, together with an empty set of marked vertices and an empty set of permanent vertex pairs. We call this constraint $P_\text{greedy}$. Note that $P_\text{greedy}$ is an aligning constraint. 

Throughout the algorithm, we try to maintain a \emph{good} aligning constraint which
intuitively means that the constraint can be turned into a solution
(if one exists). 
  \begin{defi}[Good Constraint]\label{defi:partialsolution}
  Let $I$ be an instance of \MLCE{}. A constraint $P=(D, M_1, \ldots, M_\ell, B)$ is \emph{good} for $I$ 
  if $I$ is a yes-instance and there is a solution $S=(M^\star_1, \ldots, M^\star_\ell, D^\star)$ for $I$
  such that
  \begin{compactenum}[(i)]
  \item $D \subseteq D^\star$,\label{cons:marked}
  \item there is no $\{u, v\}\in B$ such that $u\in D^\star$, and\label{cons:perm-not-marked}
  \item for all $i \in [\ell]$ we have $M_i \cap B = M^\star_i \cap B$.\label{cons:resp-perm}
  \end{compactenum}
  We also say that $S$ \emph{witnesses} that $P$ is good.
\end{defi}
\noindent If a constraint is not good, we call it \emph{bad}. It is easy to see that if we face a yes-instance, then any constraint containing an empty set of marked vertices and an empty set of permanent vertex pairs is good. We call such constraints \emph{trivial}.
  
\begin{obs}\label{obs:initial}
  For any yes-instance $I=(G_1, \ldots, G_\ell, d, k)$ of \MLCE{}, we have that the constraint $P = (D=\emptyset, M_1, \ldots, M_\ell, B=\emptyset)$ is a good constraint for $I$ for any sets $M_i\subseteq \binom{V}{2}$ with $i\in[\ell]$.
\end{obs}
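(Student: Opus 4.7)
The plan is to verify the three conditions of \cref{defi:partialsolution} directly, exploiting that the trivial constraint has~$D = \emptyset$ and~$B = \emptyset$. Since $I$ is a yes-instance, there exists a solution~$S = (M^\star_1, \ldots, M^\star_\ell, D^\star)$ for~$I$, and I will argue that \emph{any} such solution witnesses goodness of~$P$ regardless of the choice of the~$M_i$.

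For condition~\eqref{cons:marked} of \cref{defi:partialsolution}, I simply observe that $D = \emptyset \subseteq D^\star$ holds trivially. For condition~\eqref{cons:perm-not-marked}, since $B = \emptyset$ there is no pair $\{u,v\} \in B$ at all, so the requirement that no such pair has $u \in D^\star$ is vacuously satisfied. For condition~\eqref{cons:resp-perm}, again $B = \emptyset$ gives $M_i \cap B = \emptyset = M^\star_i \cap B$ for every $i \in [\ell]$, regardless of which edge modification sets~$M_i$ were chosen.

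Since all three conditions hold, $S$ witnesses that $P$ is good, which establishes the observation. There is no real obstacle here; the statement is essentially a sanity check recording that the trivial constraint imposes no meaningful restriction and is therefore a valid starting point for the search-tree algorithm, in particular when instantiated with the edge modification sets produced by the \hyperref[brule:greedy]{Greedy Rule} to obtain $P_{\text{greedy}}$.
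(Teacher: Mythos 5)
Your proof is correct and matches the paper's intent: the paper treats this observation as immediate (``It is easy to see that...'') and your argument simply makes explicit the routine verification of the three conditions in \cref{defi:partialsolution}, each of which holds trivially or vacuously because $D=\emptyset$ and $B=\emptyset$.
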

\noindent It is obvious that $P_\text{greedy}$ is trivial. Hence if the input instance of \MLCE{} for our algorithm is a yes-instance, then the initial call is with a good constraint. The algorithm is supposed to return~\texttt{true} if the supplied constraint is good and \texttt{false} otherwise. 

Our algorithm uses various different branching rules to search for a solution to a \MLCE{} input instance. Formally, branching rules are defined as follows.
\begin{defi}[Branching Rule]
A \emph{branching rule} takes as input an instance~$I$ of \MLCE{} and an aligning constraint~$P$ and returns a set of aligning constraints $P^{(1)}, \ldots, P^{(x)}$. 
\end{defi}
When a branching rule is applied, the algorithm invokes a recursive call for each constraint returned by the branching rule and returns~\texttt{true} if at least one of the recursive calls returns~\texttt{true}; otherwise, it returns \texttt{false}. For that to be correct, whenever a branching rule is invoked with a good constraint, at least one of the constraints returned by the branching rule has to be a good constraint as well. Furthermore, if a branching rule is invoked with a bad constraint, none of the constraints returned by the branching rule should be good.
In this case we say that a branching rule is~\emph{safe}. 
\begin{defi}[Safeness of a Branching Rule]
   We say that a branching rule is \emph{safe} if the following holds: \begin{compactitem}
   \item If the branching rule is applied on an instance of \MLCE{} together with a good constraint for that instance, then at least one of the constraints returned by the branching rule is good.
   \item If the branching rule is applied on an instance of \MLCE{} together with a bad constraint for that instance, then none of the constraints returned by the branching rule is good.
   \end{compactitem}
\end{defi}

In the following, we introduce the branching rules used by the algorithm and prove that each of them is safe (in some cases under the condition that certain other rules are not applicable). This together with \autoref{obs:initial} will allow us to prove by induction that the algorithm eventually finds a solution for the input instance of \MLCE{} if it is a yes-instance. 

The following notion and observation will be useful for the safeness proofs.
\begin{defi}
 Let $I$ be an instance of \MLCE{} and $P=(D, M_1, \ldots, M_\ell, B)$ and $P'=(D', M'_1, \ldots, M'_\ell, B')$ two constraint. We say that $P'$ \emph{extends} $P$ if $D' \supseteq D$, $B' \supseteq B$, and  for each $i \in [\ell]$ we have $M'_i \cap B = M_i \cap B$.
\end{defi}

\begin{obs}\label{obs:extends}
 If $I$ be an instance of \MLCE{}, $P$ and $P'$ are two constraints such that $P'$ extends $P$ and $S$ is a solution witnessing that $P'$ is good, then $S$ also witnesses that $P$ is good.
\end{obs}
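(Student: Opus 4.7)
The plan is to unfold the definition of ``good'' for the constraint $P$ and verify each of the three conditions (i)--(iii) in \cref{defi:partialsolution} for the candidate witness $S = (M^\star_1, \ldots, M^\star_\ell, D^\star)$, using that $S$ already witnesses that $P'$ is good together with the three properties that define ``$P'$ extends $P$''. So, writing $P = (D, (M_i)_{i \in [\ell]}, B)$ and $P' = (D', (M'_i)_{i \in [\ell]}, B')$, I would argue as follows.

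First I would handle condition (i). Since $S$ witnesses that $P'$ is good, we have $D' \subseteq D^\star$; since $P'$ extends $P$, we have $D \subseteq D'$. Chaining these inclusions gives $D \subseteq D^\star$, which is exactly what (i) requires for $P$. Next, for condition (ii), suppose for contradiction there were $\{u,v\} \in B$ with $u \in D^\star$. Because $P'$ extends $P$ we have $B \subseteq B'$, so $\{u,v\} \in B'$; but then $S$ would violate condition (ii) for $P'$, contradicting that $S$ witnesses $P'$ is good.

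The only slightly less immediate piece is condition (iii). Here I would fix $i \in [\ell]$ and compute $M_i \cap B$ in two steps. From the extension property, $M_i \cap B = M'_i \cap B$. Since $B \subseteq B'$, intersecting the equality $M'_i \cap B' = M^\star_i \cap B'$ (which holds because $S$ witnesses $P'$ is good) with $B$ yields $M'_i \cap B = M^\star_i \cap B$. Combining these two identities gives $M_i \cap B = M^\star_i \cap B$, which is condition (iii) for $P$.

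There is really no main obstacle to anticipate: the statement is purely a bookkeeping lemma whose role is to let later safeness arguments pass ``upward'' through a chain of extending constraints. The only care needed is the intersection step in (iii), where it is important that $B \subseteq B'$ rather than the reverse, so that restricting the equality on $B'$ to $B$ is valid. Once the three conditions are checked, the observation follows immediately from \cref{defi:partialsolution}.
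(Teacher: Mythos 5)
Your proof is correct, and it is essentially the only reasonable argument: the paper states this as an observation without proof, and what you wrote is precisely the straightforward unfolding of Definitions of ``good'' and ``extends'' that the authors are implicitly leaving to the reader. All three conditions are verified correctly; in particular, your care in the intersection step for condition~(iii) — using $B \subseteq B'$ to restrict $M'_i \cap B' = M^\star_i \cap B'$ down to $B$, and then chaining with $M_i \cap B = M'_i \cap B$ — is exactly the small point that makes the argument go through.
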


We start with a rule that checks obvious constraints and aborts the recursion if they are not fulfilled.
\begin{rulezero}\label{rule:zero}
If $|D|>d$ or there is an $i \in [\ell]$ such that $|M_i\cap B|>k$, then abort the current branch and return \texttt{false}.
\end{rulezero}
The correctness of this rule is obvious. With the next rule we edit the subgraphs induced by all non-marked vertices into cluster graphs. Similar to classical \CE{}, we branch on all edits that destroy induced $P_3$s. 
Additionally, we have to take into account that it may be necessary to mark vertices because otherwise they may force us to edit too many edges in some layer.
  \begin{brule}
  \label{brule:clustering}
  If there is an induced $P_3 = (\{u, v\},\{v, w\})$ in $G'_i[V\setminus D]$ for some $ i\in[ \ell]$, where $G'_i = (V, E_i\oplus M_i)$, then return the following up to six constraints:
  \begin{compactenum}
  \item If $\{u, v\}\notin B$: for all $ i\in[ \ell]$ put $M^{(1)}_i = M_i
  \oplus \{\{u, v\}\}$, $D^{(1)}=D$, and $B^{(1)}= B\cup \{\{u, v\}\}$. 
  \item If $\{v, w\}\notin B$: for all $ i\in[ \ell]$ put $M^{(2)}_i = M_i
  \oplus \{\{v, w\}\}$, $D^{(2)}=D$, and $B^{(2)}= B\cup \{\{v, w\}\}$. 
  \item If $\{u, w\}\notin B$: for all $i\in[ \ell]$ put $M^{(3)}_i = M_i
  \oplus \{\{u, w\}\}$, $D^{(3)}=D$, and $B^{(3)}= B\cup \{\{u, w\}\}$. 
  \item For each $x \in \{u, v, w\}$: If there is no $y\in V\setminus D$ such
  that $\{x, y\}\in B$, then return a constraint with $D^{(\cdot)}= D\cup\{x\}$, the rest stays the same. 
  \end{compactenum}
  If none of the above possibilities apply, then return \texttt{false}.\footnote{This technically does not fit the definition of a branching rule but we can achieve the same effect by returning trivially unsatisfiable constraints such as a constraint with $|D^{(\cdot)}|>d$ which is rejected by \hyperref[rule:zero]{Rule~0}.}
  \end{brule}
\begin{lem}\label{lem:clustering}
 \autoref{brule:clustering} is a safe branching rule.
\end{lem}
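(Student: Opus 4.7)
The plan is to verify the two conditions of the definition of safeness separately. First, for the easier direction --- if $P$ is bad, no returned constraint is good --- it suffices to check that every output extends $P$ in the sense of the definition preceding \autoref{obs:extends}, after which \autoref{obs:extends} completes the argument. For marking branches 4--6 this is immediate since only $D$ grows. For flipping branches 1--3 the subtle point is the third condition of extension: we add some $e \in \{\{u,v\},\{v,w\},\{u,w\}\}$ to $B$ and simultaneously flip $e$ in every $M_j$, but because the branch is guarded by $e \notin B$, the intersection $M_j \cap B$ is unchanged. I will also verify that every returned constraint is aligning: flipping a pair whose endpoints both lie in $V\setminus D$ uniformly across all layers, or enlarging $D$ by one vertex, preserves the equalities $G'_i[V\setminus D] = G'_j[V\setminus D]$.

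For the forward direction, fix a solution $S = ((M^\star_j)_j, D^\star)$ witnessing goodness of $P$ and split according to how $S$ resolves the given induced $P_3$ on $\{u,v,w\}$. In Case~A, some $x \in \{u,v,w\}$ belongs to $D^\star$; then the marking branch for $x$ is necessarily produced, since if some $y \in V\setminus D$ satisfied $\{x,y\}\in B$, condition~(ii) of goodness would force $x \notin D^\star$, a contradiction. This branch only adjoins $x$ to $D$, and $D\cup\{x\}\subseteq D^\star$, so the same $S$ witnesses that the new constraint is good.

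In Case~B, none of $u,v,w$ lies in $D^\star$. Since $G^\star_i[V \setminus D^\star]$ is $P_3$-free yet contains $\{u,v,w\}$, while $G'_i[V\setminus D]$ has the prescribed $P_3$, at least one of the three pairs $e$ must have opposite status in $E_i \oplus M^\star_i$ compared to $E_i \oplus M_i$, which rearranges to $e \in M_i \oplus M^\star_i$. Because $e \subseteq V \setminus D^\star \subseteq V\setminus D$ and both $P$ and $S$ are aligning, the membership of $e$ in $E_j \oplus M_j$ and in $E_j \oplus M^\star_j$ is independent of the layer~$j$, so $e \in M_j \oplus M^\star_j$ holds for every $j$. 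Combined with condition~(iii) of goodness, this yields $e \notin B$, so the flipping branch for $e$ is indeed produced. For the resulting constraint, conditions~(i) and~(ii) are inherited from $P$, and condition~(iii) on the enlarged set $B \cup \{e\}$ decomposes into the old $B$ (unchanged since $e \notin B$) and the single new pair $\{e\}$, where $e \in M_j \oplus M^\star_j$ for all $j$ gives exactly $(M_j \oplus \{e\}) \cap \{e\} = M^\star_j \cap \{e\}$. The main obstacle is this last step: promoting the single-layer fact $e \in M_i \oplus M^\star_i$ to all layers, for which both alignedness conditions --- of $P$ and of $S$ --- are essential.
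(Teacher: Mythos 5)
Your proof is correct and follows essentially the same structure as the paper's: you use \autoref{obs:extends} to dispose of the bad-input direction, then split on whether the $P_3$ is resolved by a marking in $D^\star$ or by a pair modification, checking in each case that the corresponding branch is produced and remains good. One place where you are in fact more careful than the paper is in Case~B: the paper asserts that some pair from $\{u,v,w\}$ lies in $M^\star_i$, whereas the correct statement (which you make) is that some pair lies in $M_i \oplus M^\star_i$ --- this also covers the subcase where the constraint's earlier (greedy) modification of a pair needs to be reverted rather than a fresh modification introduced --- and your symmetric use of the alignment of both $P$ and $S$ to lift this to all layers, together with the decomposition of the $B\cup\{e\}$ check, cleanly closes the argument.
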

\begin{proof}
It is easy to check that \autoref{brule:clustering} is indeed a branching rule since it always modifies the pairs in the edge modifications sets of all layers, hence if the input constraint is aligning so are all output constraints. Since each output constraint extends the input constraint, by \autoref{obs:extends}, if any of the output constraints is good, then so is the input constraint. 

Now we show that if the input constraint is good, at least one output constraint is. Let the input constraint $P=(D, M_1, \ldots, M_\ell, B)$ be good and let $S=(M^\star_1, \ldots, M^\star_\ell, D^\star)$ be a solution for the input instance witnessing that $P$ is good. 
Since each $M_i^\star$ is a cluster editing set for $G_i$, it holds that, for all $i\in[\ell]$, graph~$G^\star_i[V\setminus D^\star]$ does not contain a $P_3$ as an induced subgraph, where $G^\star_i=(V, E_i\oplus M_i^\star)$. Hence, if there is some $i \in [\ell]$ and three vertices~$u, v, w$ that induce a $P_3$ in $G_i'[V\setminus D]$, where $G_i'=(V, E_i\oplus M_i)$, then there are two cases.

In the first case, one of $u, v, w$ is also in $D^\star$, say~$v \in D^\star$.
Note that, then, $v$ cannot be part of any permanent vertex pair, by the
definition of good constraints. Thus, the constraint that puts $v \in D$ output
in the fourth part of \autoref{brule:clustering} is good.

The second case is that $u, v, w \in V \setminus D^\star$. Then, since
$G^\star_i[V\setminus D^\star]$ is a cluster graph, at least one of the vertex
pairs formable from $u, v, w$ is modified by $S$, that is, in $M_i^\star$. Say
$\{u, v\} \in M_i^\star$. Since the solution is consistent, $\{u, v\}$ either appears in $G^\star_i$ for all $i \in [l]$ or in none of them.
settled. Note that $\{u, v\}$ cannot be permanent since otherwise we
already have that $\{u, v\} \in M_i$ by the definition of a good
constraint.
Thus the constraint which adds $\{u, v\}$ to $M_i$ and makes it permanent is good. Hence, the rule is safe.
\end{proof}
  
The next rule keeps the sets of edge modifications~$M_i$ free of marked vertices. Pairs in $M_i$ can become marked if vertices of vertex pairs processed by the \hyperref[brule:greedy]{Greedy Rule} are marked by other branching rules further down the search tree. 
We invoke this rule on the beginning of each recursive call to modify the constraint before applicability of other rules is tested.
  \begin{crule}
  \label{crule:cleanup}
  For each $i \in [ \ell]$ and each $\{u, v\} \in M_i$:
  If $\{u, v\} \cap D \neq \emptyset$, then remove $\{u, v\}$ from $M_i$.
  \end{crule}

To show the safeness of this rule, we can formally treat the Clean-up Rule as a special case of a branching rule, i.e., it produces one constraint.

\begin{lem}\label{lem:cleanup}
 The \hyperref[crule:cleanup]{Clean-up Rule} (in that sense) is a safe branching rule.
\end{lem}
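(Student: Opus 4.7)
The plan is to view the Clean-up Rule as a branching rule that produces the single output constraint $P' = (D, M'_1, \ldots, M'_\ell, B)$ from the input $P = (D, M_1, \ldots, M_\ell, B)$, where each $M'_i$ is obtained from $M_i$ by deleting every pair that meets $D$. First I would check that $P'$ satisfies the branching-rule requirement of being aligning: since every pair removed from $M_i$ has an endpoint in $D$, the induced graphs $(V, E_i \oplus M_i)[V\setminus D]$ and $(V, E_i \oplus M'_i)[V\setminus D]$ are identical for every $i \in [\ell]$, so the aligning property of $P$ is inherited by $P'$.

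For the ``bad implies bad'' direction of safeness, I would invoke \autoref{obs:extends}. The crucial observation is built into the definition of a constraint: $B \subseteq \binom{V\setminus D}{2}$, so no element of $B$ has an endpoint in $D$. Consequently every pair removed from $M_i$ lies outside $B$, giving $M'_i \cap B = M_i \cap B$ for every $i$. Together with $D' = D$ and $B' = B$, this shows that $P'$ extends $P$, and so \autoref{obs:extends} yields that any solution witnessing that $P'$ is good also witnesses that $P$ is good.

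For the ``good implies good'' direction, suppose $P$ is good and let $S = (M^\star_1, \ldots, M^\star_\ell, D^\star)$ be a witnessing solution. I would verify that the same $S$ witnesses that $P'$ is good by checking the three conditions of \autoref{defi:partialsolution}: conditions (i) and (ii) hold unchanged because $D' = D$ and $B' = B$, while condition (iii) follows from the chain $M'_i \cap B = M_i \cap B = M^\star_i \cap B$, where the first equality comes from the argument of the previous paragraph and the second from the fact that $S$ witnesses $P$ is good. There is no real obstacle in this proof; it is essentially bookkeeping that exploits the structural requirement $B \subseteq \binom{V \setminus D}{2}$ to ensure that cleaning up $M_i$ never disturbs its intersection with the permanent set.
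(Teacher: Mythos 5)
Your proof is correct and follows essentially the same route as the paper's: both rely on the structural constraint $B \subseteq \binom{V\setminus D}{2}$ to conclude that no removed pair is permanent, so $D$, $B$, and $M_i \cap B$ are unchanged and goodness is preserved in both directions. You are somewhat more explicit in separating the two directions (and invoking \autoref{obs:extends} for one of them), but the underlying argument is identical.
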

\begin{proof}
It is easy to check that \hyperref[crule:cleanup]{Clean-up Rule} is indeed a branching rule since it only removes vertex pairs that contain marked vertices from the edge modification sets, hence if the input constraint is aligning so are all output constraints. 
Note that permanent vertex pairs cannot contain marked vertices by the definition of constraints. It follows that the \hyperref[crule:cleanup]{Clean-up Rule} does not add or remove permanent vertex pairs from any set~$M_i$. 
Furthermore, it does not change the sets $D$ and $B$. It follows that the input constraint cannot become bad if it was good or vice versa. Hence, the Clean-up Rule is safe.
\end{proof}

  The next rule tries to repair any budget violations that might occur. Since
  with the \hyperref[brule:greedy]{Greedy Rule} we greedily make decisions in the beginning 
  we expect that some
  of the choices were not correct. This rule will then revert these choices.
  Also, to have a correct estimate of the sizes of the current edge modification
  sets, this rule requires that the \hyperref[crule:cleanup]{Clean-up Rule} was applied. 
  \begin{brule}
  \label{brule:budget}
  If there is an $M_i$ for some $i \in [ \ell]$ with $|M_i|>k$, then take any set $M'_i\subseteq M_i \setminus B$ such that $|M'_i| + |B \cap M_i| = k+1$ and return the following constraints:
  \begin{compactenum}
  \item For each $\{u, v\}\in M'_i$ return a constraint in which for all $j \in [\ell]$ we put $M^{(\cdot)}_j = M_j \oplus \{\{u, v\}\}$, $D^{(\cdot)}=D$, and $B^{(\cdot)}= B\cup \{\{u, v\}\}$.
  \item For each $\{u, v\}\in M'_i$:
  \begin{compactitem}
   \item If there is no $x\in V\setminus D$ such that $\{u, x\}\in B$, then return a constraint with $D^{(\cdot)}= D\cup\{u\}$, $B^{(\cdot)}= B$, and for all $j \in [\ell]$ we put $M^{(\cdot)}_j = M_j \setminus \{\{u, v\}\}$. 
   \item If there is no $x\in V\setminus D$ such that $\{v, x\}\in B$, then return a constraint with $D^{(\cdot)}= D\cup\{v\}$, $B^{(\cdot)}= B$, and for all $j \in [\ell]$ we put $M^{(\cdot)}_j = M_j \setminus \{\{u, v\}\}$.
  \end{compactitem}
  \end{compactenum}
  \end{brule}
  \begin{lem}\label{lem:budget}
  If the
  \hyperref[crule:cleanup]{Clean-up Rule} was applied and \hyperref[rule:zero]{Rule~0} is not applicable, then
  \autoref{brule:budget} is a safe branching rule. 
\end{lem}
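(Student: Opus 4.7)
The plan is to prove safeness in both directions. For the easy direction, I observe that each output constraint extends the input $P$: the flip branches add $\{u,v\}$ to $B$, but since $\{u,v\} \notin B$, we have $(M_j \oplus \{\{u,v\}\}) \cap B = M_j \cap B$; the mark branches keep $B$ unchanged and merely enlarge $D$, and again $(M_j \setminus \{\{u,v\}\}) \cap B = M_j \cap B$ as $\{u,v\} \notin B$. Hence the contrapositive of Observation~\ref{obs:extends} immediately yields that a bad input forces every output to be bad. I also need to verify that every output constraint is aligning so that the rule really qualifies as a branching rule: in the flip branches I flip a pair $\{u,v\}$ that is disjoint from $D$ (by the Clean-up Rule assumption, since $\{u,v\} \in M_i$) in \emph{every} $M_j$ simultaneously, preserving equality of $G'_j[V \setminus D]$ across $j$; in the mark branches the removed edit touches the newly marked vertex, hence disappears from each induced subgraph $G^{(\cdot)}_j[V \setminus D^{(\cdot)}]$, and the equalities across $j$ carry over.

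The main effort goes into the good-to-good direction. Assume $P$ is good with witness $S = (M_1^\star, \ldots, M_\ell^\star, D^\star)$ and that the rule triggers on layer $i$ with some $M_i'$ as described. Since Rule~0 does not reject, $|M_i \cap B| \le k$, so $M_i'$ is non-empty and, by construction, disjoint from $M_i \cap B$ with $|M_i'| + |M_i \cap B| = k+1$. The key counting step: good constraint condition~(iii) gives $M_i \cap B \subseteq M_i^\star$; if every $\{u,v\} \in M_i'$ additionally satisfied $\{u,v\} \in M_i^\star$ together with $u,v \notin D^\star$, then $M_i' \cup (M_i \cap B) \subseteq M_i^\star$, forcing $k+1 \le |M_i^\star| \le k$, a contradiction. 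Hence there exists $\{u,v\} \in M_i'$ falling into one of three cases: (a) $u \in D^\star$, (b) $v \in D^\star$, or (c) $u,v \notin D^\star$ but $\{u,v\} \notin M_i^\star$.

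I then show that each case yields a good output constraint witnessed by $S$ itself. In case~(a), good constraint condition~(ii) prevents any pair of $B$ from touching $u$, so the mark-$u$ branch for $\{u,v\}$ is produced; the new constraint satisfies $D \cup \{u\} \subseteq D^\star$, and since $\{u,v\} \notin B$ we still have $(M_j \setminus \{\{u,v\}\}) \cap B = M_j \cap B = M_j^\star \cap B$. Case~(b) is symmetric. The main obstacle is case~(c): both $P$ and $S$ are aligning, and $u, v$ lie outside $D \cup D^\star$, so the pair $\{u,v\}$ is either an edge in every $G'_j$ or in none, and likewise for $G_j^\star$; comparing these two layer-independent truth values at layer $i$ shows they must differ, whence $\{u,v\} \in M_j^\star$ iff $\{u,v\} \notin M_j$ for every $j$. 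Consequently the flip branch for $\{u,v\}$ produces $M_j^{(\cdot)} = M_j \oplus \{\{u,v\}\}$ with $\{u,v\} \in M_j^{(\cdot)}$ iff $\{u,v\} \in M_j^\star$, and combined with the agreement on $B$ this makes the new constraint good with $S$ as witness.
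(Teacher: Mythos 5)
Your proof is correct and follows essentially the same strategy as the paper: the backward direction via \cref{obs:extends}, the counting argument using $|M'_i|+|M_i\cap B|=k+1$, $M_i\cap B\subseteq M^\star_i$, and $|M^\star_i|\le k$ to find a pair where $S$ disagrees with $P$, and then routing through the flip branch (when the pair avoids $D^\star$, using alignment and consistency to conclude $\{u,v\}\in M_j\Leftrightarrow\{u,v\}\notin M^\star_j$ for all $j$) or a mark branch otherwise. The only cosmetic difference is that you fold the $D^\star$-membership test into the counting hypothesis to produce the (a)/(b)/(c) trichotomy directly, whereas the paper first extracts a pair in $M'_i\setminus M^\star_i$ and then cases on $D^\star$; both are equivalent and your extra work showing alignment of the outputs (which the paper treats briefly) is a welcome addition.
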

\begin{proof}
  It is easy to check that \autoref{brule:budget} is indeed a branching rule since it always modifies the pairs in the edge modifications sets of all layers, hence if the input constraint is aligning so are all output constraints. Since each output constraint extends the input constraint, by \autoref{obs:extends}, if any of the output constraints is good, then so is the input constraint. 

Now we show that if the input constraint is good, at least one output constraint is. Let $P=(D, M_1, \ldots, M_\ell, B)$ be the input constraint. Suppose that $P$ is good and let $S=(M^\star_1, \ldots, M^\star_\ell, D^\star)$ be a solution for the input instance witnessing that $P$ is good.
Since \hyperref[rule:zero]{Rule~0} is not applicable, we have $|M_i \cap B| \le k$ and, thus, 
$M_i \setminus B \neq \emptyset$. 

Since $|M'_i|+|M_i \cap B|=k+1$, $M_i \cap B \subseteq M^\star_i$, and $|M^\star_i| \le k$, we have $M'_i\setminus M^\star_i \neq \emptyset$, i.e., there is at least one vertex pair
$\{u, v\}\in M'_i$ such that $\{u, v\}\notin M^\star_i$.
The branching rule creates constraints for each possible
vertex pair in $M'_i$ to remove it from $M_i$. Thus, in particular, there is  
one output constraint where $\{u, v\}$ is removed from~$M_i$.

If $\{u, v\}\cap D^\star = \emptyset$, then, since the solution is consistent, either $\{u, v\}\in E_i\oplus M_i^\star$ for all $i \in[\ell]$ or $\{u, v\}\notin E_i\oplus M_i^\star$ for all $i \in[\ell]$. However, since $P$ is aligning, we also have that $\{u, v\}\in E_i\oplus M_i$ for all $i \in[\ell]$ or $\{u, v\}\notin E_i\oplus M_i$ for all $i \in[\ell]$ and furthermore, $\{u, v\}\in E_i\oplus M_i$ if and only if $\{u, v\}\notin E_i\oplus M_i^\star$. Since we have that $\{u, v\}\in E_i\oplus M_i$ if and only if $\{u, v\}\notin E_i\oplus M_i\oplus\{\{u, v\}\}$, one of the constraints in the first case is good.

Otherwise at least one of its endpoints is marked in~$S$ implying that one of the constraints in the second case is good.
\end{proof}
  
The last rule, \autoref{brule:budget2}, requires that all other rules are not applicable. In this case the non-marked vertices induce the same cluster graph in every layer. \autoref{brule:budget2} checks whether in every layer it is possible to turn the whole layer (including the marked vertices) into a cluster graph such that the cluster graph induced by the non-marked vertices stays the same and the edge modification budget is not violated in any layer. If this is not the case for a layer $i$, we will see that there are essentially two reasons for that. Either, (a), a modification in $M_i$ that was added greedily introduced many $P_3$'s containing marked vertices and the only way to remedy it is to roll back this modification. Or, (b), in order to make layer~$i$ a cluster graph including the marked vertices, we need to mark more vertices or make more edits outside of the marked vertices. Both cases will be treated by \autoref{brule:budget2} simultaneously. Since $M_i$ has bounded size, branching on the possibilities to roll back one of the edits (case~(a)) already results in a bounded number of branches. These possibilities are tested in Step~1 of \autoref{brule:budget2}. Case (b) is treated in Steps~2, 3, and 4. However, we need additional processing to bound the number of vertex markings or edge edits that we need to consider. To obtain the bound we introduce a modified version of a known kernelization algorithm~\cite{GrammGHN05} for classic \CE. We call it algorithm $K$ and it takes as input a tuple $(G, s, D, O)$. Herein, $G$ will represent the current, modified state of a layer, $D$ the currently marked vertices, $s$ the number of edits still allowed, and $O$ a set of vertex pairs that are \emph{obligatory}, meaning that they cannot be modified anymore. Algorithm~$K$ either outputs a distinct failure symbol or two sets $R$ and $C$, where $R$ contains all unmarked vertex pairs modified by~$K$ and $C$ contains all unmarked vertex pairs of the produced kernel which are not obligatory. (A vertex pair is unmarked if it does not contain a marked vertex.) In the following we give a formal description.

\newcommand{\modker}{\hyperref[kernel]{$K$}}
 \begin{kernel}
 \label{kernel}
 Given an input $(G,s,D,O)$. First, set all vertex pairs in~$O$ to \emph{obligatory} and exhaustively apply the following modified versions of standard data reduction rules for \CE. Let~$R=\emptyset$. Then, apply the following rules until none applies anymore.
\begin{compactenum}
\item[K1.] If $s < 0$ or there is an induced $P_3$ where all vertex pairs
  are obligatory, then abort and output a failure symbol.
\item[K2.] If a vertex pair $\{u, v\}$ is contained in the vertex set of
  $s + 1$ distinct induced $P_3$s of~$G$, then, if $\{u, v\}$ is obligatory,
  abort and output a failure symbol, otherwise modify $\{u, v\}$, set it to
  obligatory, and decrease~$s$ by one. If~$u\notin D$ and $v\notin D$,
  then add $\{u, v\}$ to~$R$.
\item[K3.] If there is an isolated clique, then remove it.
\end{compactenum}
Let $G^{(R)}$ be the resulting graph. If the number of vertices in $G^{(R)}$ is larger than~$s^2 + 2s$, then abort and output a failure symbol. Otherwise, let $C$ be the set of all unmarked vertex pairs  
in~$G^{(R)}$ which are not obligatory. 
Output $R$ and $C$. This concludes the description of~\hyperref[kernel]{$K$}.
\end{kernel}

In the description of the branching rule, we use the following notation.
  For all $1\le i\le \ell$ we use $\mathcal{M}_i$ to denote the set of all possible edge modifications where each edge is incident to at least one marked vertex, that turn $G'_i = (V, E_i\oplus M_i)$ into a cluster graph. More specifically, we have
  $$\mathcal{M}_i = \{M \subseteq \tbinom{V}{2} \mid \ \forall e\in M: e\cap D\neq\emptyset \ \wedge \ G''_i = (V, E_i\oplus(M_i\cup M)) \text{ is a cluster graph}\}.$$
Note that, since each $G'_i \setminus D$ is a cluster graph, each set $\mathcal{M}_i$ is non-empty.
    \begin{brule}
  \label{brule:budget2}
  If there is an $1\le i\le \ell$ such that $\min_{M\in \mathcal{M}_i}|M|>k-|M_i|$ then let $M'_i = M_i \setminus B$ and invoke the modified kernelization algorithm \modker\ on $(G'_i, k - |M_i|, D, M_i \cap B)$, where $G'_i = (V, E_i \oplus M_i)$. If \modker\ outputs a failure symbol and $M_i' = \emptyset$, then return \texttt{false}. If $M'\neq \emptyset$, then return the following constraints:
  \begin{compactenum}
  \item For each $\{u, v\}\in M'_i$: 
  \begin{compactitem}
   \item If there is no $x\in V\setminus D$ such that $\{u, x\}\in B$, then return a constraint with $D^{(\cdot)}= D\cup\{u\}$, $B^{(\cdot)}= B$, and for each $j \in [\ell]$ with $M^{(\cdot)}_j = M_j \setminus \{\{u, v\}\}$. 
   \item If there is no $x\in V\setminus D$ such that $\{v, x\}\in B$, then return a constraint with $D^{(\cdot)}= D\cup\{v\}$, $B^{(\cdot)}= B$, and for each $j \in [\ell]$ with $M^{(\cdot)}_j = M_j \setminus \{\{u, v\}\}$.
   \item Return a constraint in which for all $j\in[ \ell]$ we put $M^{(\cdot)}_j = M_j \oplus \{\{u, v\}\}$, $D^{(\cdot)}=D$, and $B^{(\cdot)}= B\cup \{\{u, v\}\}$.
  \end{compactitem}
  \end{compactenum}
If \modker\ does not output a failure symbol, then let $R$ and $C$ be the sets output by \modker\ and return the following constraints:  
\begin{compactenum}
\setcounter{enumi}{1}
  \item For each $\{u, v\}\in R$: 
  \begin{compactitem}
   \item If $u \notin D$ and there is no $x\in V\setminus D$ such that $\{u, x\}\in B$, then return a constraint with $D^{(\cdot)}= D\cup\{u\}$, $B^{(\cdot)}= B$, and for each $j \in [\ell]$ with $M^{(\cdot)}_j = M_j \setminus \{\{u, v\}\}$. 
   \item If $v \notin D$ and there is no $x\in V\setminus D$ such that $\{v, x\}\in B$, then return a constraint with $D^{(\cdot)}= D\cup\{v\}$, $B^{(\cdot)}= B$, and for each $j \in [\ell]$ with $M^{(\cdot)}_j = M_j \setminus \{\{u, v\}\}$.  
  \end{compactitem}
  \item If~$R \neq \emptyset$, then output a constraint with $D^{(\cdot)}= D$, $B^{(\cdot)}= B\cup M_i\cup R$, and $M^{(\cdot)}_j = M_j \oplus R$ for each $j \in [\ell]$.
    \item For each $\{u, v\}\in C$: 
    \begin{compactitem}
      \item If there is no $x\in V \setminus D$ such that $\{u, x\}\in B$, then return a constraint with $D^{(\cdot)}= D\cup\{u\}$, and the rest stays the same. 
      \item If there is no $x\in V\setminus D$ such that $\{v, x\}\in B$, then return a constraint with $D^{(\cdot)}= D\cup\{v\}$, and the rest stays the same.
      \item Return a constraint with $D^{(\cdot)}= D$, $B^{(\cdot)}= B
        \cup \{\{u,v\}\}$, and $M^{(\cdot)}_j = M_j \oplus \{\{u,
        v\}\}$ for each $j \in
        [\ell]$.
    \end{compactitem}
  \end{compactenum}
  \end{brule}
  
  \begin{lem}\label{lem:budget2}
 If the \hyperref[crule:cleanup]{Clean-up Rule} was applied and Branching Rules~\ref{brule:clustering} and~\ref{brule:budget} are not applicable, then \autoref{brule:budget2} is a safe branching rule.
\end{lem}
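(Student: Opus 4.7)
My plan is to mirror the structure of the safeness proofs for Rules~\ref{brule:clustering} and~\ref{brule:budget}. The easy direction---that a bad input constraint $P$ forces all output constraints to be bad---follows immediately from \autoref{obs:extends}, since every constraint emitted by \autoref{brule:budget2} extends the input. All work lies in the converse: given a good input $P = (D, M_1, \ldots, M_\ell, B)$ witnessed by a solution $S = (M^\star_1, \ldots, M^\star_\ell, D^\star)$, I need to produce at least one good output constraint. Throughout I will use that, because the \hyperref[crule:cleanup]{Clean-up Rule} was applied and Rules~\ref{brule:clustering} and~\ref{brule:budget} do not apply, we have $|M_j| \le k$ and no pair of $M_j$ meets $D$ for any $j$, and $G'_j[V \setminus D]$ is a cluster graph common to all layers. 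Let $i$ be the triggering layer and write $N := M^\star_i \setminus M_i$; goodness of $P$ gives $M_i \cap B = M^\star_i \cap B$, so $M_i \oplus M^\star_i$ avoids $B$.

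Case~1: \modker{} returns failure. The plan is to show that if none of the constraints of Step~1 is good, then for every $\{u,v\}\in M'_i = M_i \setminus B$ any witness $S'$ extending $P$ must satisfy $\{u,v\}\in M^{\star'}_i$ and $\{u,v\}\cap D^{\star'}=\emptyset$. This is because $S'$ would otherwise extend the revert-and-freeze branch (third bullet) or one of the marking branches, which are all available when no permanent pair of $B$ involves $u$ or $v$ in $V \setminus D$. Combined with $M_i \cap B \subseteq M^{\star'}_i$, this forces $M_i \subseteq M^{\star'}_i$, so $N$ transforms $G'_i$ into the cluster graph $G^{\star'}_i$ using $\le k - |M_i|$ edits, none of which creates a $P_3$ outside $D^{\star'}$. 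I then derive a contradiction from whichever \modker{} failure triggered: an all-obligatory induced $P_3$ from K1 places three pairs in $M_i \cap B \subseteq M^{\star'}_i$ which persist in $G^{\star'}_i[V \setminus D^{\star'}]$; a K2 failure places an obligatory pair inside $s + 1$ induced $P_3$s of $G'_i$, so $N$ would need $s + 1$ distinct edits to destroy them; and an oversized kernel contradicts the transported $O(s^2)$-vertex bound from the classical \CE{} kernelization of \cite{GrammGHN05}. The same reasoning with $M'_i = \emptyset$ shows $P$ must have been bad, justifying the \texttt{false} return.

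Case~2: \modker{} returns $R$ and $C$. For each $\{u,v\}\in R$, rule K2 exhibited $s + 1$ induced $P_3$s on $\{u, v\}$ in the post-reduction graph. Since $|N|\le s$, a witness $S$ extending $P$ cannot break all of them without either modifying $\{u,v\}$ itself or marking one of $u,v$ in $D^\star$. Consequently, if no Step~2 marking branch is good, then $R\subseteq M^\star_i$ and $R$ avoids $D^\star$; by the consistency of the common cluster graph $G^\star_j[V\setminus D^\star]$ across layers, adding $R$ simultaneously to every $M_j$ and freezing it (Step~3) is consistent with $S$, making the Step~3 constraint good. After committing $R$ and letting K3 peel off isolated cliques---which already coincide with clusters of the common $G^\star_j[V\setminus D^\star]$---every remaining $S$-modification or $S$-marking on layer $i$ must touch a vertex inside the kernel graph $G^{(R)}$. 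Hence it corresponds to one of the options offered by Step~4 for some pair in $C$: mark $u$, mark $v$, or freeze $\{u,v\}$ to its current state. At least one of these branches is consistent with $S$ and therefore good.

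The main obstacle will be the argument in Case~2 that $S$'s remaining $i$-layer activity is confined to the kernel $C$. This requires verifying that the modified rules K2 and K3, which jointly cope with obligatory pairs and the marked set $D$, preserve the existence of a good extension of~$S$---an invariant strictly stronger than the classical kernelization correctness of \cite{GrammGHN05}. In particular, I will have to check that any $S$-edit K3 would implicitly overwrite (inside a removed isolated clique) can be rerouted without inflating $|M^\star_i|$ or $|D^\star|$, exploiting that the clique is already settled in every layer. Once this invariant is secured, the Step~2--4 branching is exhaustive and the lemma follows.
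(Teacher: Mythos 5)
Your high-level skeleton is right, but you have inverted the case analysis in a way that leaves real gaps. The paper does not split on whether \modker{} succeeds; it splits first on whether $M_i \subseteq M^\star_i$. If $M_i \not\subseteq M^\star_i$, then some $\{u,v\} \in M_i \setminus M^\star_i$ lies in $M'_i = M_i \setminus B$ (using $M_i \cap B = M^\star_i \cap B$), and Step~1 alone already yields a good constraint, exactly as in the proof of \autoref{lem:budget}. Only after reducing to $M_i \subseteq M^\star_i$ does the bound $|M^\star_i \setminus M_i| \le k - |M_i| = s$ hold. Your Case~2 quietly uses $|N| \le s$ without this reduction; if a wrong greedy edit is present, $|N|$ can exceed $s$, and your claim that ``a witness $S$ cannot break all $s+1$ induced $P_3$s without modifying $\{u,v\}$'' no longer follows.

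The second missing piece is the inductive invariant on \modker{}. Your argument about K2's $s+1$ $P_3$s tacitly assumes that every edit \modker{} made earlier already belongs to $M^\star_i$ and that the counter $s$ equals $k-|M_i|-|L|$ for the set $L$ of edits made so far. You need to state and prove this invariant (the paper's Invariant~(I)) by induction over K1--K3 applications; without it, the $s+1$ $P_3$s live in a graph you cannot relate to $G^\star_i$, the K1 abort cannot be ruled out, and the $s^2+2s$ size bound on $G^{(R)}$ is unavailable. The invariant is also what replaces your loosely phrased ``transported $O(s^2)$-vertex bound from \cite{GrammGHN05}'': the classical bound does not directly apply because here some pairs are obligatory and some vertices are marked, so the argument has to go through $M^\star_i$.

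Finally, and most importantly, you explicitly defer the heart of the lemma: that after committing $R$ and processing Step~2 and Step~3, at least one Step~4 branch is good. The paper proves the key claim $M^\star_i \cap C \neq \emptyset$ by contradiction: if $M^\star_i \cap C = \emptyset$ and $R = \emptyset$, then restricting $M^\star_i$ to $G^{(R)}$ gives a set of unmarked pairs which, combined with the marked-vertex edits, witnesses $\min_{M\in\mathcal{M}_i}|M| \le k - |M_i|$, contradicting the very condition that triggered \autoref{brule:budget2}. This is not an ``invariant to be secured later''---it is the central argument, and without it nothing forces a Step~4 pair to appear in $M^\star_i$. You also need the preliminary check that no $C$-pair meets $D^\star$ (handled by the marking branches of Step~4) before you can conclude that the freeze branch for some $C$-pair is good.
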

\begin{proof}  
  It is easy to check that \autoref{brule:budget2} is indeed a branching rule since it always modifies the edge modifications sets of all layers, hence if the input constraint is aligning so are all output constraints. Since each output constraint extends the input constraint, by \autoref{obs:extends}, if any of the output constraints is good, then so is the input constraint. 

Now we show that if the input constraint is good, at least one output constraint is. Let the input constraint $P=(D, M_1, \ldots, M_\ell, B)$ be good and let $S=(M^\star_1, \ldots, M^\star_\ell, D^\star)$ be a solution for the input instance witnessing that $P$ is good.  
For each layer $i$, \autoref{brule:budget2} checks the minimum number of edge modifications involving at least one marked vertex to turn $G_i'$ into a cluster graph. Since $G_i'[V\setminus D]$ is already a cluster graph, this number always exists. Since \autoref{brule:budget2} is applicable, there is a layer~$i \in [\ell]$ such that $\min_{M \in \M_i}|M| > k - |M_i|$. Fix this layer~$i$ in the following.

Suppose that there is a vertex pair $\{u, v\} \in M_i \setminus
M^\star_i$. Since $P$ is good, we have $M_i \cap B =
M^\star_i \cap B$, giving $\{u, v\} \in M_i' = M_i \setminus B$. Thus, $M'_i
\neq
\emptyset$ which means that the branch is not rejected after applying~\modker. In other words, there is one modification in
$M_i$ which is not in the solution witnessing that
$P$ is good, similar to \autoref{brule:budget}. It follows from an
analogous argumentation to the one in the proof of
\autoref{lem:budget} that \autoref{brule:budget2} produces a good
constraint in Step~1. That is, \autoref{brule:budget2} is safe in this
case. Thus, from now on we assume $M_i \subseteq M^\star_i$.

We claim that \modker\ does not produce a failure symbol. We in fact
now show the stronger statement that~\modker\ produces $R$ and $C$
such that $R \subseteq M^\star_i \setminus M_i$. To obtain this, we
show the following Invariant~(I) to hold before and after each
application of a rule of \modker. Invariant~(I) states that 
\begin{compactenum}[(i)]
\item \modker\ has not produced a failure symbol,
\item each edit made by \modker\ is in $M^\star_i$, and
\item $s = k - |M_i| - |L|$, where $L$ is the set of modifications
  made by \modker\ so far.
\end{compactenum}
Clearly, (I) holds in the beginning of $K$, before any application of a rule.
Since Rule~K2 is the only rule that makes modifications, and it clearly
maintains (I)~(iii), we will focus on (I)~(i) and~(ii).
Furthermore,~(I)~is clearly maintained by Rule~K3. It remains to treat
Rules~K1 and~K2. 

Consider Rule~K1. Let~$L$ be the set of modifications made by \modker\
so far. By (I)~(ii) we have $L \subseteq M^\star_i$. Observe that
$L \cap M_i = \emptyset$ since each pair in $M_i$ is obligatory. Hence, $(L \cup M_i)\subseteq M^\star_i$ which, since $M^\star_i$ is part of a
solution, implies that there are no induced $P_3$ where all three vertex pairs are obligatory.
Furthermore, we have that $|M^\star_i| \geq |M_i| + |L|$ and hence $k \geq |M^\star_i| \geq |M_i| + |L|$. By (I)~(iii)
we have~$s = k - |M_i| - |L|$. Thus,
$s \geq |M^\star_i| - |M_i| - |L| \geq 0$, meaning that no failure
symbol is produced by Rule~K1. Hence, Rule~K1 maintains Invariant~(I).

Now consider Rule~K2. Assume that the pair $\{u, v\}$ edited by Rule~K2
is not in $M_i^\star$. Since Rule~K2 applies, there are $s + 1$
distinct $P_3$s contained in the current
graph~$G^{(L)} \coloneqq (V, E_i \oplus (M_i \cup L))$, where~$L$ are
the modifications made by \modker\ so far. As $P$ is good,
$G^\star \coloneqq (V, E_i \oplus M_i^\star)$ is a cluster graph. To
compare $G^{(L)}$ and $G^\star$, recall that
$L \uplus M_i \subseteq M_i^\star$, where $\uplus$ denotes a disjoint
union: $M_i \subseteq M_i^\star$ by the considerations above,~$L \subseteq M_i$ by (I)~(ii), and $L \cap M_i = \emptyset$ because
each pair in $M_i$ is obligatory. Hence, for each of the induced~$P_3$s in $G^{(L)}$ there is at least one distinct vertex pair
in~$M^\star_i \setminus (L \cup M_i)$. Thus,~$|M^\star_i| \geq |L| + |M_i| + s + 1$. Since $s = k - |M_i| - |L|$,
we have $|M^\star_i| \geq k + 1$, a contradiction to the fact that
$M^\star_i$ is part of a solution. Thus, indeed
$\{u, v\} \in M_i^\star$. It follows that Invariant~(I) is maintained
by Rule~K2.

By Invariant~(I), after applying all rules in \modker\ we have
$L \subseteq M_i^\star \setminus M_i$, where $L=R$ is the set of
modifications made by \modker. We now bound the number of vertices in
$G^{(R)}$. Since $s = k - |M_i| - |L|$ which we obtain from
Invariant~(I)~(iii), we have
$|M_i^\star \setminus (M_i \cup L)| \leq s$ (recall that
$M_i \cap L = \emptyset$). Each vertex in $G^{(R)}$ is contained in an
induced~$P_3$. Each such $P_3$ contains a pair of
$M_i^\star \setminus (M_i \cup L)$. Each such pair is contained in at
most $s$ $P_3$s by inapplicability of Rule~K2. Thus,
graph~$G^{(R)}$ contains at most~$s^2 + 2s$ vertices. Thus,~\modker{}~does not produce a failure symbol. Furthermore, by Invariant~(I)~(ii),~$R \subseteq M^\star_i$ and, moreover, since no modification made by
\modker\ is in~$M_i$, $R \subseteq M^\star_i \setminus M_i$. Thus,
\modker\ produces the sets~$R \subseteq M^\star_i \setminus M_i$ and~$C$ as required.

Suppose that for one edge modification $\{u,v\}\in R$ we have that
$\{u,v\}\cap D^\star\neq \emptyset$, say $u \in D^\star$. Since~$P$ is
good, property~(ii) of being good gives that there is no pair
$\{u, w\} \in B$ for any $w \in V$. Thus, \autoref{brule:budget2}
outputs a good constraint in Step~2. Hence, we now assume that $R$
does not contain edge modifications containing vertices
from~$D^\star$.

Suppose that $R \neq \emptyset$. As argued above,
$R \subseteq M_i^\star \setminus M_i$. Since $D \subseteq D^\star$ and
no pair in~$R$ contains a vertex of $D^\star$, the constraint
produced in Step~3 is good. Thus, we now assume that $R = \emptyset$.

Suppose that $C$ contains a pair which contains a vertex in $D^\star$,
say~$u$. By property~(ii) of being good, there is no pair
$\{u, w\} \in B$ for any $w \in V$. Thus, one of the first group of
constraints produced in Step~4 is good. Thus, we now assume that no
pair in~$C$ contains a vertex in~$D^\star$ and hence also no
pair in~$C$ contains a vertex in~$D$.

Finally, we claim that $M_i^\star \cap C \neq \emptyset$. 
Suppose that $M_i^\star \cap C = \emptyset$. Since $R=\emptyset$, we have that 
$G^{(R)}$ is $G_i'$ with some isolated cliques removed.
Let $\widehat{M}_i$ be $M_i$ restricted to $G^{(R)}$ and, similarly,  
$\widehat{M}_i^\star$ be $M_i^\star$ restricted to $G^{(R)}$.
Note that, since $M_i \subseteq M_i^\star$ and $|M_i^\star| \le k$,
we have $|M_i^\star \setminus M_i| \le k - |M_i|$ and, thus, also 
$|\widehat{M}_i^\star \setminus \widehat{M}_i| \le k - |M_i|$.
Since $(V, E_i \oplus M_i^\star)$ is a cluster graph, also its 
subgraph induced by $V(G^{(R)})$ is a cluster graph, and, hence, 
also $(V, E_i \oplus (\widehat{M}_i^\star \cup M_i))$ is a cluster graph,
since the last two only differ in the isolated cliques.

Every pair of unmarked vertices in $G^{(R)}$ is in $\widehat{M}_i \cup C$
by the definition of $C$.
Hence, $\widehat{M}_i \subseteq \widehat{M}_i^\star$ and 
$\widehat{M}_i^\star \cap C = \emptyset$ implies 
$\widehat{M}_i^\star \setminus \widehat{M}_i \subseteq \binom{V}{2} 
\setminus \binom{V \setminus D}{2}$, and, therefore,
$(\widehat{M}_i^\star \setminus \widehat{M}_i) \in \mathcal{M}_i$.
As $|\widehat{M}_i^\star \setminus \widehat{M}_i| \le k - |M_i|$ and
$(V, E_i \oplus (\widehat{M}_i^\star \cup M_i))$ is a cluster graph,
this contradicts $\min_{M\in \mathcal{M}_i}|M|>k-|M_i|$.
Thus, indeed
$M_i^\star \cap C \neq \emptyset$. It follows that one of the last
group of constraints produced in Step~4 is good.
\end{proof}

  \begin{algorithm}[t]
  \DontPrintSemicolon

  \KwIn{\begin{compactitem}
    \item A set of graphs $G_1, \ldots, G_\ell = (V, E_1), \ldots, (V, E_\ell)$
    two integers $k$ and $d$.
    \item A set of marked vertices~$D$, edge modification sets $M_1, \ldots,
    M_\ell$.
    \item A set $B\subseteq \binom{V\setminus D}{2}$ of permanent vertex pairs.
  \end{compactitem}}
  \BlankLine
 	{Apply the first applicable rule in the following ordered list:
 	\begin{compactitem}
 	\item \hyperref[rule:zero]{Rule~0}
 	\item \hyperref[crule:cleanup]{Clean-up Rule}
 	\item \autoref{brule:clustering}
 	\item \autoref{brule:budget}
 	\item \autoref{brule:budget2}
 	\end{compactitem}} 
   If none of the rules applies, then \KwRet{\texttt{true}}
  \caption{\MLCE{}}
  \label{alg:mlce}
\end{algorithm}
With \autoref{brule:budget2} we can present the complete algorithm---see \autoref{alg:mlce}.
To prove correctness of the algorithm, we first argue that, whenever the algorithm outputs~\texttt{true}, then the input instance of \MLCE{} was indeed a yes-instance. This follows in a straightforward manner from the fact that, if the algorithm outputs \texttt{true}, then none of the branching rules is applicable.
  \begin{lem}
  \label{lem:correctness1}
  Given an instance $I$ of \MLCE{}, if \autoref{alg:mlce} outputs \texttt{true} on input $I$ and the constraint $P_\text{greedy}$, then $I$ is a yes-instance.
  \end{lem}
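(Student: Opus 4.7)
The plan is to read off a solution directly from the constraint that is current when \autoref{alg:mlce} returns \texttt{true}. The starting observation is that aligning is an invariant: the input constraint $P_\text{greedy}$ is aligning, and every branching rule, by \autoref{defi:partialsolution}'s preceding text and the verification in Lemmas~\ref{lem:clustering}--\ref{lem:budget2}, outputs only aligning constraints. Hence at the moment \autoref{alg:mlce} returns \texttt{true}, the current constraint $P = (D, M_1, \ldots, M_\ell, B)$ is aligning, so $G'_i[V\setminus D] = G'_j[V\setminus D]$ for all $i, j \in [\ell]$, where $G'_i = (V, E_i \oplus M_i)$.

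Next I would collect the structural consequences of each rule being non-applicable. Non-applicability of \hyperref[rule:zero]{Rule~0} gives $|D| \le d$. Non-applicability of the \hyperref[crule:cleanup]{Clean-up Rule} gives that no pair of any $M_i$ is incident to $D$. Non-applicability of \autoref{brule:clustering} gives that for every $i$, $G'_i[V\setminus D]$ contains no induced $P_3$ and is therefore a cluster graph. Non-applicability of \autoref{brule:budget} gives $|M_i| \le k$ for every $i$. Finally, non-applicability of \autoref{brule:budget2} gives that, for every layer $i$, there is some $M^{(i)} \in \mathcal{M}_i$ with $|M^{(i)}| \le k - |M_i|$.

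I would then define the candidate solution by taking $D^\star \coloneqq D$ and $M^\star_i \coloneqq M_i \cup M^{(i)}$ for each $i \in [\ell]$. Since every pair in $M^{(i)}$ is incident to $D$ (by definition of $\mathcal{M}_i$) while no pair in $M_i$ is, the union is disjoint, so $|M^\star_i| \le |M_i| + |M^{(i)}| \le k$. By definition of $\mathcal{M}_i$, the graph $(V, E_i \oplus M^\star_i)$ is a cluster graph, so $M^\star_i$ is a cluster editing set for $G_i$ of size at most $k$. Moreover, deleting the vertices of $D$ from $(V, E_i \oplus M^\star_i)$ removes every pair in $M^{(i)}$, so the vertex-deleted graph equals $G'_i[V\setminus D]$, which by alignment does not depend on $i$. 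Thus $(M^\star_i)_{i \in [\ell]}$ is a $k$-bounded totally $d$-consistent clustering for $\G$ with marked set $D^\star$, witnessing that $I$ is a yes-instance.

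There is no serious obstacle here because all the work has been absorbed into the safeness lemmas and the invariants they maintain; the only thing to verify carefully is that $M^{(i)}$ can truly be chosen independently per layer, which is fine since consistency is required only on $V \setminus D$ and the edits in $M^{(i)}$ all touch $D$.
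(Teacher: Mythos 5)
Your proposal is correct and follows essentially the same route as the paper: read off $D^\star=D$ and $M^\star_i = M_i \cup M^{(i)}$ for a minimum-size $M^{(i)}\in\mathcal{M}_i$, then check the size bounds from non-applicability of Rule~0 and Branching Rule~3, cluster-graph-ness from the definition of $\mathcal{M}_i$, and consistency from alignment plus the fact that all pairs in $M^{(i)}$ touch $D$. You are somewhat more explicit than the paper about which rule's non-applicability yields which structural fact, and about the disjointness $M_i\cap M^{(i)}=\emptyset$ (which the paper implicitly sidesteps by the subadditivity of $|\cdot|$), but the argument is the same.
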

  
  \begin{proof}
  Let $I$ be the input instance of \MLCE{}. If the algorithm outputs \texttt{true}, then there is an aligning constraint $P=(D, M_1, \ldots, M_\ell, B)$ such that for all $e \in M_i$ we have that $e\cap D=\emptyset$, and none of the branching rules are applicable. 
  Let $D^\star = D$ and for every $i \in [l]$ let $M_i' = \arg\min_{M\in \mathcal{M}_i}|M|$ and $M^\star_i = M_i \cup M'_i$, where is $\mathcal{M}_i$ is as defined for \autoref{brule:budget2}.
  In the following we show that $S=(M^\star_1, \ldots, M^\star_\ell, D^\star)$ is a solution for $I$ (witnessing that $P$ is good). 
  
  Since \autoref{brule:budget2} is not applicable, we know that $|M_i'|\le k - |M_i|$ and hence $|M_i \cup M_i'|\le k$. 
  Also, since \hyperref[rule:zero]{Rule~0} is not applicable, we know that $|D^\star| = |D| \le d$.
  Let $G^\star_i = (v, E_i\oplus M^\star_i)$ for all $i \in [\ell]$.  
  For all $i, j \in [\ell]$ we have that $G^\star_i[V\setminus D] = G^\star_j[V\setminus D]$ since the constraint $P$ is aligning, and $M'_i$ contains no unmarked pairs. Furthermore, for all $i \in[ \ell]$ we have that $G^\star_i$ is a cluster graph by the definition of $\mathcal{M}_i$.
  \end{proof}
  
It remains to show that, whenever the input instance $I$ of the
algorithm is a yes-instance, then the algorithm outputs \texttt{true}.
To this end, we define the \emph{quality} of a good constraint and
show that the algorithm increases the quality until it eventually
finds a solution or determines that there is none.
  \begin{defi}[Quality of a constraint]
  Let $I =(G_1, \ldots, G_\ell, k, d)$ be an instance of \MLCE{}. The \emph{quality}~$\gamma_I(P)$ of a constraint $P = (D, M_1, \ldots, M_\ell, B)$ for $I$ is $\gamma_I(P) = |D| + |B|$. 
  \end{defi}
  \begin{lem}
  \label{lem:partialitydecrease}
    Let $P$ be a good constraint for a yes-instance of \MLCE{}. If applicable, each of 
 Branching Rules~\ref{brule:clustering},~\ref{brule:budget}, and~\ref{brule:budget2} returns a good constraint with strictly increased quality in comparison to~$P$.
\end{lem}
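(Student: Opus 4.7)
The claim packages two assertions into one: (i) at least one returned constraint is good (this is exactly the safeness of the three rules, already established in \cref{lem:clustering,lem:budget,lem:budget2} under the stated preconditions) and (ii) some such good output constraint has $\gamma_I$ strictly larger than $\gamma_I(P)$. The plan is to handle (ii) by walking through each rule case by case, using either the explicit side conditions of the rule or the structural information about the good solution that is extracted inside the safeness proofs.

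For Branching Rules~\ref{brule:clustering} and~\ref{brule:budget}, \emph{every} returned constraint adds exactly one new element to $D$ or $B$, so the quality argument is trivial. In \cref{brule:clustering}, cases~(1)--(3) add a pair $\{u,v\}$ to $B$ under the side condition $\{u,v\} \notin B$, while case~(4) adds a vertex $x$ on an induced $P_3$ in $G'_i[V\setminus D]$ to $D$, and such an $x$ is by definition not yet in $D$. In \cref{brule:budget}, case~(1) puts a pair from $M'_i = M_i \setminus B$ into $B$, and case~(2) marks one of its endpoints; the endpoint is unmarked because the \hyperref[crule:cleanup]{Clean-up Rule} has been applied and thus $M_i$ contains no pair with a marked endpoint. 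The exact same easy reasoning handles Step~1, Step~2, and the first two subcases of Step~4 of \cref{brule:budget2}, so $\gamma_I$ increases by one on every branch they produce.

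It remains to inspect the two places in \cref{brule:budget2} that involve the kernelization output, namely Step~3 and the third subcase of Step~4. For Step~3 I would reuse the key intermediate fact proved within \cref{lem:budget2}: if $P$ is good and is witnessed by a solution $S=(M^\star_1,\ldots,M^\star_\ell,D^\star)$, then Invariant~(I) of the safeness proof yields $R \subseteq M^\star_i \setminus M_i$. Combining this with property~(iii) of \cref{defi:partialsolution}, $M_i \cap B = M^\star_i \cap B$, gives
\[
R \cap B \;\subseteq\; (M^\star_i \setminus M_i) \cap B \;=\; (M^\star_i \cap B) \setminus M_i \;=\; (M_i \cap B) \setminus M_i \;=\; \emptyset.
\]
Since Step~3 only fires when $R \neq \emptyset$, we get $|B \cup M_i \cup R| > |B|$, i.e., strict quality increase. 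The third subcase of Step~4 is handled analogously: in the good branch, the safeness proof selects $\{u,v\} \in C$ with $\{u,v\} \in M^\star_i$ and $\{u,v\} \cap D^\star = \emptyset$; if $\{u,v\}$ lay in $B$, property~(iii) would force $\{u,v\} \in M_i \cap B$, but $C$ by definition contains only pairs of the kernel that are \emph{not} obligatory in \modker, while $M_i \cap B$ is exactly the initial obligatory set, a contradiction.

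The main obstacle is Step~3 of \cref{brule:budget2}: unlike the other cases, it enlarges $B$ by a whole set rather than a single element, so one cannot simply read the strict increase off the rule description and has to invoke the internal invariant $R \subseteq M^\star_i \setminus M_i$ from the safeness proof of \cref{lem:budget2}. Everything else is essentially bookkeeping.
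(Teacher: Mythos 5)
Your proof is correct and follows essentially the same plan as the paper's: go case by case over the output branches and argue that $|D|$ or $|B|$ strictly increases, while safeness supplies the guarantee that at least one output is good. Where you genuinely add something is in the two places you flag as non-trivial. The paper's own proof of this lemma is only a few sentences long and simply asserts that ``Branching Rule~3 also increases $|B|$ or $|D|$ in each of the four steps.'' For Step~3 this is \emph{not} immediate from the rule's syntax: $B^{(\cdot)}=B\cup M_i\cup R$ could fail to be a proper superset of~$B$ if, say, $R\subseteq B$, which is not excluded a priori (pairs can land in $B\setminus M_i$ via the $\oplus$ in \cref{brule:clustering}). Your argument that goodness plus the internal invariant $R\subseteq M^\star_i\setminus M_i$ of \cref{lem:budget2} forces $R\cap B=\emptyset$ is exactly the missing justification, and the analogous argument for the third subcase of Step~4 via $\{u,v\}\in M^\star_i\cap C$ and property~(iii) of \cref{defi:partialsolution} is likewise correct. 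So: same decomposition and same core idea as the paper, but you make explicit the one genuinely subtle step that the paper's one-paragraph proof glosses over.
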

  \begin{proof}
  We show the claim individually for each of the rules. We consider each of the possible returned constraints~$P'$ and show that, assuming that $P'$ is good, then the quality of $P'$ is strictly larger than~$P$.

Consider \autoref{brule:clustering}. 
In the first three cases, the branching rule increases~$|B|$. In the remaining cases, the branching rule increases $|D|$. \autoref{brule:budget} increases $|B|$ in the first case and $|D|$ in the second case. \autoref{brule:budget2} also increases~$|B|$ or~$|D|$ in each of the four steps.
\end{proof}
 We can now show the correctness of \autoref{alg:mlce}. \autoref{lem:correctness1} ensures that we only output \texttt{true} if the input is actually a yes-instance and Lemma~\ref{lem:partialitydecrease}  together with the safeness of all branching rules ensures that if the input is a yes-instance, the algorithm outputs \texttt{true}.
  \begin{lemma}[Correctness of \autoref{alg:mlce}]
  \label{lem:correctness}
  Given an \MLCE{} instance $I$, \autoref{alg:mlce} outputs \texttt{true} on input $I$ and the initial constraint $P_\text{greedy}$ if and only if $I$ is a yes-instance.
  \end{lemma}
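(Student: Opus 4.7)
The ``only if'' direction is immediate from \autoref{lem:correctness1}, so the bulk of the work is to show that if $I$ is a yes-instance then \autoref{alg:mlce}, called on $(I,P_\text{greedy})$, returns \texttt{true}. My plan is to prove the following stronger statement by induction: whenever the algorithm is invoked on $I$ with a good aligning constraint $P$, it returns \texttt{true}. Since $P_\text{greedy}$ is good by \autoref{obs:initial}, the lemma will follow from this statement applied to the initial call.

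I would induct on the nonnegative integer $\Delta(P) := \binom{|V|}{2} + d - \gamma_I(P)$, which shrinks whenever $\gamma_I$ strictly grows. For the inductive step I would dispatch on which rule fires first in the ordered list of \autoref{alg:mlce}. On a good constraint $P=(D,M_1,\ldots,M_\ell,B)$, \hyperref[rule:zero]{Rule~0} is inapplicable: condition~(i) of \autoref{defi:partialsolution} gives $|D|\le|D^\star|\le d$, while condition~(iii) together with $|M_i^\star|\le k$ gives $|M_i\cap B|=|M_i^\star\cap B|\le k$. If the \hyperref[crule:cleanup]{Clean-up Rule} fires, then by \autoref{lem:cleanup} the produced constraint $P'$ is still good; since $D$ and $B$ are untouched we have $\gamma_I(P')=\gamma_I(P)$, and crucially \hyperref[crule:cleanup]{Clean-up} is no longer applicable on $P'$, so the recursion then proceeds through the branching rules. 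Finally, when any one of Branching Rules~\ref{brule:clustering}--\ref{brule:budget2} fires, \autoref{lem:partialitydecrease} combined with the corresponding safeness lemmas (\autoref{lem:clustering}, \autoref{lem:budget}, \autoref{lem:budget2}) yields at least one returned constraint $P'$ that is good and satisfies $\gamma_I(P')>\gamma_I(P)$; the inductive hypothesis then gives that the recursive call on $P'$ returns \texttt{true}, hence the current call does as well. The base case is that no rule applies, in which case the algorithm returns \texttt{true} by definition, and \autoref{lem:correctness1} certifies this leaf answer is consistent with $I$ being a yes-instance.

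The only subtlety is that \hyperref[crule:cleanup]{Clean-up} does not increase $\gamma_I$, so naive induction on $\Delta(P)$ alone is not well-founded. I would address this by observing that \hyperref[crule:cleanup]{Clean-up} is self-stabilising (after one application no pair in any $M_i$ meets $D$) and therefore interleaves at most once between two consecutive branching steps; formally this can be absorbed by using the lexicographic measure $(\Delta(P),\ [\text{Clean-up applicable at }P])$, or equivalently by folding a single Clean-up application into the beginning of each branching step, mirroring the preconditions already assumed in \autoref{lem:budget} and \autoref{lem:budget2}. This is the main bookkeeping obstacle, but it requires no new combinatorial content, since all the real work has already been carried out in \autoref{lem:cleanup}, \autoref{lem:partialitydecrease}, and the three safeness lemmas.
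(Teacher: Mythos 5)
Your proof is correct and uses essentially the same approach as the paper's: both rest on the safeness lemmas (\autoref{lem:clustering}, \autoref{lem:cleanup}, \autoref{lem:budget}, \autoref{lem:budget2}) together with \autoref{lem:partialitydecrease}'s strict increase of the quality~$\gamma_I$ under the branching rules, and both must note that the Clean-up Rule leaves quality unchanged and that \hyperref[rule:zero]{Rule~0} cannot fire on a good constraint. The only difference is presentation: you phrase it as a well-founded induction on $\Delta(P)$ (with a lexicographic tweak to absorb the quality-neutral Clean-up step), while the paper runs the dual extremal argument, selecting a maximum-quality good constraint $P_{\max}$ produced during the run and showing no rule can apply to it.
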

  \begin{proof}
  By \autoref{lem:correctness1}, if \autoref{alg:mlce} outputs \texttt{true} on input $I$ and the initial constraint $P_\text{greedy}$, then $I$ is a yes-instance. It remains to show the other direction.
  
  Let $I$ be a yes-instance of \MLCE{}. By \autoref{obs:initial} we have that $P_\text{greedy}$ is a good constraint. Note that the order in which rules are applied (see \autoref{alg:mlce}) ensures safeness for all branching rules (Lemmata~\ref{lem:clustering},~\ref{lem:cleanup},~\ref{lem:budget}, and~\ref{lem:budget2}). Furthermore, by \autoref{lem:partialitydecrease} we have that all branching rules except the \hyperref[crule:cleanup]{Clean-up Rule} strictly increase the quality of a good constraint. It is easy to see that the \hyperref[crule:cleanup]{Clean-up Rule} does not decrease the quality of a good constraint and it is applied only once before either one of the other rules apply or the algorithm terminates. 
  Let $P_\text{max}$ be a good constraint with the highest quality produced during the run of the algorithm. 
  Since the quality is an integer bounded above by $|V|+\binom{|V|}{2}$, there must be such a constraint.
  As the \hyperref[crule:cleanup]{Clean-up Rule} does not decrease the quality, we can also assume that it was exhaustively applied to $P_\text{max}$.
  If any of the branching rules would apply to $P_\text{max}$, then, by safeness of the branching rules and \autoref{lem:partialitydecrease}, it would produce a good constraint of strictly higher quality, contradicting the maximality of $P_\text{max}$. Hence the algorithm run on $P_\text{max}$ returned \texttt{true} and, therefore, the whole algorithm returned \texttt{true}.
  \end{proof}
  
  It remains to show that \autoref{alg:mlce} has the claimed running time upper-bound. We can check that all branching rules create at most $O(k^4)$ recursive calls. The preprocessing by the \hyperref[brule:greedy]{Greedy Rule} and the alignment of the constraints ensures that the edge modification sets in sufficiently many layers increase for the search tree to have depth of at most $O(k+d)$. The time needed to apply a branching rule is dominated by \autoref{brule:budget2}, where we essentially have to solve classical \CE{} in every layer.
   \begin{lem}
   \label{lem:runningtime}
   The running time of \autoref{alg:mlce} is $k^{O(k+d)}\cdot O(n^3\cdot \ell)$.
   \end{lem}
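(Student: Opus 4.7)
The plan is to decompose the running time into the size of the recursive search tree and the work done per node. The total search-tree size is at most $b^{D}$ where $b$ is the maximum branching factor and $D$ is the maximum recursion depth; I aim to show $b = O(k^4)$ and $D = O(k+d)$, yielding a tree of size $k^{O(k+d)}$.

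First, I bound the branching factor of each rule in turn. \autoref{brule:clustering} produces at most six child constraints. \autoref{brule:budget} produces at most $3|M'_i| = O(k)$ children, since $|M'_i| \le k+1$ and each pair in $M'_i$ yields one flip-constraint and up to two vertex-marking constraints. \autoref{brule:budget2} dominates the branching factor via its Step~4: it iterates over the set $C$ of unmarked, non-obligatory vertex pairs in the kernel graph $G^{(R)}$, which has at most $s^2 + 2s = O(k^2)$ vertices by the kernel-size bound enforced in~\modker. Hence $|C| = O(k^4)$, and each pair contributes at most three constraints; Steps~1 through~3 of \autoref{brule:budget2} contribute only $O(k)$ further constraints.

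The main obstacle will be bounding the recursion depth by $O(k+d)$. By \autoref{lem:partialitydecrease}, every application of Branching Rule~\ref{brule:clustering}, \ref{brule:budget}, or~\ref{brule:budget2} strictly increases the quality $\gamma_I(P) = |D|+|B|$ of a good constraint, and the \hyperref[crule:cleanup]{Clean-up Rule} is applied in between but never decreases quality. The bound $|D| \le d$ follows from \hyperref[rule:zero]{Rule~0}. To bound the number of pair-freezing steps (those that enlarge $B$), I plan to exploit the greedy invariant: after the \hyperref[brule:greedy]{Greedy Rule} runs, every vertex pair $\{u,v\}$ lies in at most $\lfloor \ell/2 \rfloor$ of the modification sets~$M_i$, and a pair once added to $B$ is never flipped again. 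Each branching step that freezes a pair into $B$ either flips it via XOR across all layers or, in Step~3 of \autoref{brule:budget2}, commits a non-empty set $R$ of such flips. In either case, the greedy invariant together with the aligning property implies that the step contributes at least $\lceil \ell/2 \rceil$ to $\sum_{j \in [\ell]} |M_j \cap B|$. Since \hyperref[rule:zero]{Rule~0} enforces $|M_j \cap B| \le k$ for every~$j$, this sum is capped at $k\ell$, bounding the number of pair-freezing steps by $O(k)$. Combined with the at most $d$ marking steps, the total depth is $O(k+d)$.

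Finally, I argue that the per-node work is $O(n^3 \cdot \ell)$. \hyperref[rule:zero]{Rule~0} and the \hyperref[crule:cleanup]{Clean-up Rule} run in $O(n^2 \cdot \ell)$ time. Locating an induced $P_3$ for \autoref{brule:clustering} takes $O(n^3 \cdot \ell)$ time by scanning each layer. \autoref{brule:budget} runs in $O(n^2 \cdot \ell)$. \autoref{brule:budget2} requires, for each layer, both computing $\min_{M \in \mathcal{M}_i}|M|$ (which is tractable because after \autoref{brule:clustering} is inapplicable, $G'_i[V\setminus D]$ is already a cluster graph, so the minimum reduces to a bounded assignment problem for the marked vertices and pairs among them) and running the modified kernelization~\modker, which operates on one layer in polynomial time. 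Multiplying per-node cost $O(n^3 \cdot \ell)$ by search-tree size $k^{O(k+d)}$ yields the claimed bound $k^{O(k+d)} \cdot n^3 \cdot \ell$.
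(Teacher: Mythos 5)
Your proposal follows the same decomposition as the paper's proof and arrives at the correct bounds by essentially the same route: search-tree depth $O(k+d)$ via the Greedy-Rule invariant (a pair not yet in $B$ appears in at most $\ell/2$ of the $M_i$'s) combined with Rule~0's caps $|D|\le d$ and $|M_i\cap B|\le k$, branching factor $O(k^4)$ dominated by $|C|=O(k^4)$ from the $O(k^2)$-vertex kernel produced by $K$, and $\mathrm{poly}(n)\cdot\ell$ work per node.

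The one place your argument is looser than the paper's is the per-node cost of checking applicability of \autoref{brule:budget2}: you assert that $\min_{M\in\mathcal{M}_i}|M|$ "reduces to a bounded assignment problem" and is therefore tractable, but you give no algorithm, and it is not clear this is polynomial in $n$ (placing $d$ marked vertices into clusters while also choosing edits among marked-marked pairs is itself a constrained cluster-editing subproblem, not an assignment problem in the usual sense). The paper instead observes that all remaining induced $P_3$'s contain a marked vertex and solves the subproblem by the standard $3$-way branching in $O(3^k(n+m))$ time per layer, then absorbs the $3^k$ into the $k^{O(k+d)}$ factor. Your final bound is still the stated one, but you should either give a concrete polynomial algorithm for the minimum computation or adopt the paper's $O(3^k(n+m))$ bound and note that $3^k\le k^{O(k)}$.
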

  \begin{proof}
    We bound the running time of \autoref{alg:mlce} by the following straightforward approach.
    Note that the recursive calls
    of \autoref{alg:mlce} define a tree in which each node corresponds
    to a call of \autoref{alg:mlce} and two nodes are connected by an
    edge if one of the corresponding calls of the algorithm is a
    recursive call of the other. The tree is rooted at the node
    corresponding to the initial constraint~$P_\text{greedy}$. Note that $P_\text{greedy}$ can be computed in $O(n^2\cdot\ell)$ time. We first bound the
    size of the search tree, and then the computation spent in each
    node of the search tree. Note that we apply \hyperref[crule:cleanup]{Clean-up Rule} at the beginning of each recursive call, that is, without creating further recursive calls. Hence we call this rule \emph{degenerate}.
 
    To bound the depth of the search tree, the length of a path from the root to the farthest leaf, we show that each (nondegenerate) branching rule increases either $|D|$ by at least one or it increases $\sum_{1\le i\le\ell}|M_i\cap B|$ by at least~$\frac{\ell}{2}$. If $|D|>d$ or $\sum_{1\le i\le\ell}|M_i\cap B| > \ell\cdot k$, then the algorithm terminates (\hyperref[rule:zero]{Rule~0}). 
    Before we show that, we prove an invariant that for every constraint produced by the algorithm and for every $\{u,v\} \in M_i \setminus B$ for some $i$ with $u,v \notin D$ we have $|\{j\mid \{u,v\} \in M_j\}| \le \frac{\ell}{2}$. 
    To this end, note that is initially fulfilled when $P_\text{greedy}$ is computed by the \hyperref[brule:greedy]{Greedy Rule} and whenever any other rule 
    touches a pair $\{u,v\}$ then in the produced constraint we have that $\{u,v\} \in B$, $u \in D$, or $v \in D$. That is, these rules cannot break the invariant. 
    
    Consider \autoref{brule:clustering}. In the first three cases $\sum_{1\le i\le\ell}|M_i\cap B|$ increases by at least $\frac{\ell}{2}$, since the pair was not in $B$ before the application of the rule and thus could appear in $M_i$ for at most $\frac{\ell}{2}$ different layers $i$ by the above proven invariant and, hence, after the application of the rule it appears in  $|M_i \cup B|$ for at least $\frac{\ell}{2}$ different layers~$i$. By a similar argument, also Branching Rules~\ref{brule:budget} and~\ref{brule:budget2} increase either~$|D|$ by one or $\sum_{1\le i\le\ell}|M_i\cap B|$ by at least $\frac{\ell}{2}$. Hence, we can upper-bound the depth of the search tree with~$2k+d$.
  
    Observe that the number of children of each node in the search tree where Branching Rule~\ref{brule:clustering}, or~\ref{brule:budget} is applied is upper-bounded by~$3k+3$. \autoref{brule:budget2} creates at most $3k$ recursive calls in the first step. In the second step it creates at most $2|R|$ recursive calls, at most one in the third step and in the fourth step the number of recursive calls created is at most $3|C|$. By the description of the modified kernelization algorithm~\modker\ we have $|R| \leq s$ and $|C| \leq (s^2 + 2s)^2$. By the way \modker\ is invoked by \autoref{brule:budget2} we have $s \leq k$ and, thus, the number of recursive calls is~$O(k^4)$. It follows that the size of the whole search tree is in $k^{O(k+d)}$.
  
  The \hyperref[crule:cleanup]{Clean-up Rule} plays a special role. The \hyperref[crule:cleanup]{Clean-up Rule} can be exhaustively applied in $O(|\binom{V}{2}|\cdot\ell) = O(n^2\cdot\ell)$ time on the beginning of each recursive call.
  
  Lastly, we analyze for each rule, how much time is needed to check whether the rule is applicable and if so to compute the constraints it outputs. 
  To check the applicability of \autoref{brule:clustering}, the algorithm needs to check whether there is a layer containing an induced~$P_3$. This can be done in $O(n + m)$ time, where $m$ is the maximum number of edges in a layer.\footnote{This can be done using breadth-first search (BFS) roughly in the following way: Start BFS at a vertex $v$ that is non-universal for its component (if such vertex does not exist, the component is already a clique). As soon as BFS reaches a vertex of distance two to $v$, you have found an induced $P_3$.} 
  Hence, overall we need $O((n + m)\cdot\ell)$ time to check whether \autoref{brule:clustering} is applicable and in this time we can also compute the output constraints.
   In the case of \autoref{brule:budget}, we need $O(n^2\cdot\ell)$ time to check whether it is applicable and to output the constraints. For the last rule, \autoref{brule:budget2}, we need to check whether the set families~$\M_i$ are nonempty. To do this we essentially need to solve \textsc{Cluster Editing} on each layer to check whether the rule is applicable.
  This can be done in $O(3^k \cdot (n + m))$ time, similarly to the straightforward algorithm for \textsc{Cluster Editing}. That is, recursively find a $P_3$ and branch into the at most three cases of modifying vertex pairs in the $P_3$ which contain at least one marked vertex. 
  The time to compute the constraints is dominated by the application of the modified kernelization algorithm~\modker. The data reduction rules of \modker\ can be applied exhaustively in  $O(n^3)$ time~\cite{GrammGHN05}. Hence, overall, the algorithm has running time~$k^{O(k+d)}\cdot O(n^3\cdot \ell)$.
  \end{proof}
This concludes the proof of \Cref{thm:fpt}.
  \begin{remark}
  It is not difficult to see that \MLCE{} can also be solved in $n^{O(n)}\cdot O(\ell)$ time, which is incomparable to the running time of \autoref{alg:mlce} since~$k$ might be as large as $\Omega(n^2)$: First guess the marked vertices. Then guess how many clusters (i.e.\ disjoint cliques) there are in the modified graph induced by the non-marked vertices, and for every non-marked vertex, guess to which cluster it belongs. Now for every layer, independently guess how many additional clusters there are consisting only of marked vertices, and for every marked vertex, guess to which cluster it belongs. Finally check, whether such a solution can be obtained by at most $k$ modifications per layer.
  \end{remark}

\section{\TCELong{} (\TCE)}
\label{sec:temporal}
\appendixsection{sec:temporal} In this section we turn to the temporal
version of the cluster editing problem. We provide an algorithm for
\TCE{} with a running time~$n^{O(k)}\ell$. We also show that the running time cannot be
substantially improved unless the Exponential Time
Hypothesis (ETH) fails.

\begin{theorem}
\label{thm:tempxpk}
  \TCE\ can be solved in $O(\ell\cdot n^{4k + 2} \log{n})$ time.
\end{theorem}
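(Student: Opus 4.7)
The plan is to enumerate, for each layer~$i$, all possible cluster editing sets~$M_i$ of size at most~$k$, and then use the matching-based compatibility test from \Cref{prop:mlce-two-layers} to decide which consecutive choices can be extended into a temporally $d$-consistent clustering. The key structural observation is that the temporal consistency constraint between layers $i$ and $i+1$ depends only on $M_i$, $M_{i+1}$, and the existence of some $D_i$ of size at most~$d$, with no interaction between different $D_i$'s. Therefore the problem decomposes into a reachability question on a layered DAG.

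First, for each layer $i \in [\ell]$, I would enumerate the set $\mathcal M_i$ of all $M \subseteq \binom{V}{2}$ with $|M| \le k$ such that $G_i \oplus M$ is a cluster graph. Since each candidate is a subset of $\binom{V}{2}$ of size at most $k$, there are at most $O(n^{2k})$ candidates; each can be tested for being a cluster editing set in polynomial time (e.g.\ by scanning for an induced $P_3$). Hence $|\mathcal M_i| = O(n^{2k})$ and the enumeration over all layers takes $O(\ell \cdot n^{2k+2})$ time.

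Second, I would construct a layered directed graph whose nodes are the pairs $(i, M_i)$ with $M_i \in \mathcal M_i$ and in which there is an arc from $(i, M_i)$ to $(i+1, M_{i+1})$ whenever there exists a set $D_i \subseteq V$ with $|D_i| \le d$ such that $(G_i \oplus M_i)[V\setminus D_i] = (G_{i+1} \oplus M_{i+1})[V\setminus D_i]$. Both sides being cluster graphs, this is exactly the situation handled by \Cref{prop:mlce-two-layers} (the case $k=0$, $\ell=2$), and so the compatibility of each pair $(M_i, M_{i+1})$ can be decided in $O(n^2 \log n)$ time via a maximum-weight bipartite matching between the maximal clique sets of the two cluster graphs, with the weight being the size of the intersection.

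Third, the input is a yes-instance if and only if there is a path from some $(1, M_1)$ to some $(\ell, M_\ell)$ in this DAG, which can be decided by a standard layer-by-layer propagation in time linear in the number of arcs. The dominant cost is step two: for each of the $\ell-1$ consecutive layer pairs, the algorithm performs $|\mathcal M_i| \cdot |\mathcal M_{i+1}| = O(n^{4k})$ matching computations of $O(n^2 \log n)$ time each, totalling the claimed $O(\ell \cdot n^{4k+2} \log n)$ bound. I do not expect any real obstacle: the decoupling of the $D_i$'s makes the DAG-reachability formulation immediate, and the subroutines (enumeration and bipartite matching) are standard; the remaining work is checking correctness of the equivalence between paths in the DAG and solutions of the \TCE{} instance, which follows directly from the definitions.
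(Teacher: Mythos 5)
Your proof is correct and follows essentially the same approach as the paper: enumerate the $O(n^{2k})$ cluster editing sets per layer, build a layered graph whose nodes are (layer, cluster editing set) pairs with arcs determined by the matching-based compatibility test of \Cref{prop:mlce-two-layers}, and reduce the problem to reachability across the layers. The running-time accounting also matches, with the $O(\ell\cdot n^{4k}\cdot n^2\log n)$ pairwise compatibility tests dominating.
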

\begin{proof}
  Given an instance $((G_i)_{i \in [\ell]}, k, d)$ of \TCE, build an
  $\ell$-partite graph~$\cal G$ as follows:
  For each possible cluster editing set of~$G_i$ of
  size at most~$k$, add a vertex to the $i^{\text{th}}$ part of $\cal G$.
  Note that $\cal G$ contains $O(n^{2k}\ell)$
  vertices since each part contains $O(n^{2k})$ vertices. 
  For each~$i$, $1\le i \le n-1$, and each pair of vertices~$u, v$ in $\cal G$
  such that $u$ is in part~$i$ and $v$ is in part $(i+1)$
  add to $G$ the edge~$\{u, v\}$ if the algorithm of \cref{prop:mlce-two-layers} accepts on
  input of the following instance of \MLCE. Let $M_u, M_v$ be the
  cluster editing sets corresponding to~$u$ and~$v$, respectively. 
  The \MLCE\ instance consists
  of a multi-layer graph with the two layers $(V, E_i \oplus M_u)$
  and $(V, E_i \oplus M_v)$, edit budget equal to zero, and marking
  budget equal to~$d$. For each pair of vertices~$u,v$, 
  constructing the corresponding \MLCE\
  instance and solving it takes $O(n^2\log{n})$~time, amounting to
  overall~$O(\ell\cdot n^{4k + 2}\log{n})$ time, because there are at most
  $n^{4k}\ell$ pairs of vertices to consider. Finally, we test whether
  there is a path of length~$\ell$ from a vertex in the first part to a vertex in the
  last part in~$\cal G$. As there are at most $n^{4k} \ell$ edges in
  $\cal G$, this takes $O(n^{4k}\ell)$ time. Hence, overall the
  running time is $O(\ell \cdot n^{4k + 2}\log{n})$.

  If the algorithm above accepts, then the
  input instance is a yes-instance, because the path in $\cal G$
  corresponds to a sequence of cluster editing sets of size at
  most~$k$ for which any two consecutive cluster editing sets yield
  cluster graphs which differ in at most~$d$ vertices. To see that the
  algorithm above finds a solution $D_1, \ldots, D_\ell$,
  $M_1, \ldots, M_\ell$ if there is one, observe that each $M_i$
  corresponds to a vertex in~$\cal G$ and the existence of $D_i$
  witnesses that $M_i$ and $M_{i + 1}$ are adjacent in
  $\cal G$.
\end{proof}
\noindent\cref{thm:tempxpk} implies that \TCE\ is fixed-parameter tractable when parameterized by the number $n$ of vertices. 
%
%
%
At first glance, it seems wasteful to iterate over all possible
cluster editing sets for each layer. Rather, the interaction between
two consecutive layers seems to be limited by~$k$ and~$d$, since the
necessary edits are local to induced~$P_3$, and the necessary markings
are local to incongruent clusters (perhaps resulting from destroying
$P_3$s). However, to our surprise, when the number of layers grows,
this interaction spirals out of control. As the reduction of the following hardness
result implies, we have to take into account splitting up small
clusters in an early layer (even though locally they were already
cliques), so as to be able to form cluster graphs a large number of
layers later on. This behavior stands in stark contrast to \MLCE,
where the combinatorial explosion is limited to~$k$ and~$d$.

\begin{theorem}
\label{thm:tempw1hk}
  \TCE\ is \W{1}-hard with respect to~$k$, even if~$d = 3$.
  Moreover, it does not admit an $f(k)(n\ell)^{o(k)}$-time algorithm unless the
  ETH fails.
\end{theorem}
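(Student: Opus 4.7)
The plan is to reduce from \MCC\ (Multicolored Clique), which is \W{1}-hard with respect to the number~$k'$ of color classes and, under ETH, admits no $f(k')n'^{o(k')}$-time algorithm for inputs of size~$n'$. Given a \MCC\ instance $H$ with color classes $V_1,\dots,V_{k'}$, the plan is to build a temporal graph $\G$ over a polynomially large vertex set with $\ell=\poly(n')$ layers, $k=O(k')$, and $d=3$, such that $\G$ is a yes-instance of \TCE\ iff $H$ contains a multicolored clique. Because such a reduction has linear parameter dependence $k=O(k')$ and only polynomial size blow-up, any $f(k)(n\ell)^{o(k)}$-time algorithm for \TCE\ would yield an $f(k')n'^{o(k')}$-time algorithm for \MCC, contradicting ETH, while \W{1}-hardness follows from the same parameter dependence.

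The key mechanism is to encode the clique choice via the cluster memberships of $k'$ dedicated \emph{selector} vertices that persist throughout the entire temporal sequence. For each color class $V_i$ I would introduce a selector~$s_i$ together with a bank of candidate vertices, one per vertex of $V_i$. A short selection subsequence of layers uses $O(1)$ edits per layer to force $s_i$ to end up in the cluster of exactly one candidate, encoding the choice from $V_i$. The three marked vertices permitted per transition are intended to be just enough to switch the local ``focus'' between different color classes' selectors, but too few to rewire an entire selector cluster on the fly, so the selection remains stable across all subsequent layers.

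The remaining layers are verification gadgets: for each unordered pair $\{i,j\}$ of color classes, a short subsequence of layers whose input is cluster-editable within budget~$k$ exactly when the currently selected candidates of $s_i$ and $s_j$ correspond to an edge of $H$ between $V_i$ and $V_j$; otherwise any editing set of size at most $k$ leaves an induced~$P_3$ in one of these layers, and the $d=3$ budget on markings is insufficient to repair it. To save budget, successive verification gadgets should share and reuse the selector infrastructure rather than rebuilding it, which is what forces the ``non-local'' phenomenon flagged in the introduction: the very first selection layers must already split up some small clusters in the specific way demanded by verifications that only occur many layers later, as otherwise no bounded editing is possible downstream.

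The main obstacle is tuning the budgets. First, $d=3$ must be tight, so that three markings per transition suffice to propagate the encoded selection but cannot be used to silently resample it; this requires a careful cycle-like obstruction in the symmetric difference of consecutive layers around each selector, similar in spirit to the $C_4$ gadget used in the proof of \cref{prop:hardness} but now replicated temporally. Second, $k$ must scale linearly in $k'$, not as $k'^2$ despite the $\binom{k'}{2}$ verification pairs; this forces a design in which at any single layer only $O(1)$ color-class pairs are ``active,'' and in which verification failures manifest as a local $P_3$ that requires at least $k+1$ edits to resolve, rather than as an accumulating global cost. Once these two constraints are met, correctness of both directions of the reduction is a bookkeeping argument, and the ETH bound follows directly from $k=O(k')$ and $n+\ell=\poly(n')$.
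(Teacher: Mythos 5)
Your high-level plan follows the same two-phase structure as the paper's proof (selection gadgets spanning early layers, verification gadgets spanning later layers, with $k$ linear in the number of color classes and $d=3$), and you correctly anticipate the crucial non-local phenomenon: early choices must commit to splits that only pay off in much later layers. Reducing from \MCC\ rather than \MIS\ is an inessential change. However, the proposal has two genuine gaps that are not just bookkeeping.

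First, the \emph{selection mechanism} is unspecified in exactly the place where the real difficulty lies. You state that ``a short selection subsequence of layers uses $O(1)$ edits per layer to force $s_i$ to end up in the cluster of exactly one candidate,'' but with $O(1)$ edits per layer and $d=3$ markings per transition, it is not at all clear how to force a unique choice among $n$ candidates. The paper handles this with a dedicated sub-gadget (\selcons), applied $n-1$ times in a chain: deleting one ``input'' edge in layer $i$ forces deletion of exactly one of two designated ``output'' edges by layer $i+2$, with precisely $3$ markings permitted in layer $i$ and $2$ in layer $i+1$, and with the edit budget \emph{saturated} at every step so that the choice propagates rigidly. The saturation invariant is what makes the downstream layers inherit the selection; your proposal has no analogue of it. Moreover, your claim that the $d=3$ bound arises from ``a careful cycle-like obstruction \ldots\ similar in spirit to the $C_4$ gadget'' from the $\ell=3$ hardness proof does not match the actual mechanism. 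The paper pins down the markings not with $C_4$s but with \emph{large cliques of size} $2k+d+1$ attached to specific vertices in specific layers: since these cliques cannot be touched by any budget-bounded edit, a vertex whose adjacency to such a clique changes between consecutive layers must be marked. This is what makes $\widehat{D}_i = \{c,x,z\}$ and $\widehat{D}_{i+1} = \{c,v\}$ forced, hence $d=3$ tight. Without this device, or a replacement for it, the ``focus-switching'' intuition you describe does not force anything.

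Second, the verification gadget is underspecified in a way that hides a real design choice. You propose one verification per color-class pair $\{i,j\}$ that ``is cluster-editable within budget $k$ exactly when'' the two selections form an edge of $H$, but a single layer per pair cannot simultaneously encode acceptance of all $O(n^2)$ correct candidate pairs and rejection of the rest under a constant edit budget. The paper instead runs one verification gadget per \emph{edge} of $G$ (equivalently, per forbidden pair of candidates), using a star center $w$ attached to the four representative vertices of the two candidates; after the $s$ edge-deletions encoding the selection, exactly two extra edits are available, which suffice iff at least one of the two candidate edges was \emph{not} deleted. This ``per forbidden pair'' indexing is what keeps $k$ at $s+2$ rather than growing. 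Finally, you do not address how to realize the per-layer editing and marking budgets using only global bounds $k$ and $d$; the paper does this with additional padding cliques, and without such a device the constructed instance is not a plain \TCE\ instance.
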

\begin{proof}
  We reduce from the \W{1}-hard \MIS\
  problem~\cite{fellows_parameterized_2009} in which we are given an
  $s$-partite graph~$G$ with partite vertex sets~$V_1, \ldots, V_s$
  and we want to decide whether~$G$ contains an independent set
  containing one vertex of each partite set. For convenience we also
  say that vertices in~$V_p$, $p \in [s]$, are of \emph{color}~$p$.
  Without loss of generality assume that $|V_1| = \ldots = |V_s| = n$.
  For each~$p \in [s]$ denote $V_p = \{a^p_1, \ldots, a^p_n\}$.


  Given $G$ and $V_1, \ldots, V_s$, we construct an instance of \TCE\
  with $\ell = 2sn + 3m$ layers
  $G_1 = (V, E_1), \ldots, G_\ell = (V, E_\ell)$. Herein, $m$ is the
  number of edges in $G$.

  \newcommand{\TCEwIB}{\textsc{TCEwIB}}
  Instead of maximum budgets $k$ and $d$ for editing edges and marking
  vertices over all layers, we specify, for each layer $j \in [\ell]$,
  budgets $k_j$ and $d_j$, meaning that we require in every solution
  that $|M_j| \leq k_j$ and $|D_j| \leq d_j$. Call this more general problem \TCE{} \textsc{with Individual Budgets} (\TCEwIB).
  In the following, by $k$
  we refer to $\max_{j \in [\ell]}k_j$ in the finished construction
  and by $d$ we refer to $\max_{j \in [\ell]}d_j$ in the finished
  construction. By a \emph{solution} we mean a solution to the \TCEwIB\
  instance (that is, it respects the individual budgets). An individual budget
  $k_j$ (resp.\ $d_j$) is \emph{saturated} if $k_j = |M_j|$ (resp.\
  $d_j = |D_j|$). After showing hardness for \TCEwIB\ we give a reduction to the plain \TCE\
  problem. 



  The first $2sn$ layers in the temporal graph that we construct
  contain vertex-selection gadgets, selecting for each color in the
  \MIS\ instance one vertex. The remaining $3m$ layers contain
  verification gadgets, which ensure that no two adjacent vertices
  have been selected. Each vertex~$a^p_i \in V_p$, $p \in [s]$, will
  be represented by a pair of vertices $u^p_i, v^p_i \in V$ (introduced in the vertex-selection gadgets below).
  In most layers, pair
  $u^p_i, v^p_i$ will be adjacent and the only possible solutions will
  remove or keep the corresponding edge in most layers consistently.
  If the edge~$\{u^p_i, v^p_i\}$ is removed from a layer, then this
  will signify that the corresponding vertex~$a^p_i \in V_p$ shall be in
  the independent set.

    

  \proofparagraph{Vertex Selection.}
  We construct the selection gadgets as follows.
  For each color $p \in [s]$ in the \MIS\ instance, there is one
  \emph{vertex-selection gadget for color~$p$} occupying layers
  $2(p - 1)n + 1, 2(p - 1)n + 2, \ldots, 2sn$. The first
  $2n$~layers of the gadget for color~$p$ comprise a tree-like
  construction in which, in the $2i$-th layer,
  $i \in [n - 1]$, we either have already selected one of the first~$i - 1$ vertices in~$V_p$, or we are forced to select either the
  $i$th vertex or one of the last $n - i$~vertices. The remaining
  layers $2pn + 1, 2pn + 2, \ldots, 2sn$ (if any) of
  the gadget simply transmit the choice to layer~$2sn + 1$.

  For each color $p \in [s]$ and each layer~$2i$,
  $i \in [n - 1]$, in the vertex-selection gadget for color~$p$ we use
  another gadget that either is inactive (when one of the first
  $i - 1$ vertices has been selected) or forces a choice between
  vertex~$i$ or one of the later $n - i$ vertices. Recall that the
  selection of a vertex~$a^p_i \in V_p$ into the independent set shall
  correspond to the deletion of the edge~$\{u^p_i, v^p_i\}$. Hence,
  the following construction will be useful, in which the deletion of
  a special input edge $\{u, v\}$ will force the deletion of either of
  two special output edges~$\{w, x\}$, $\{y, z\}$.

  \newcommand{\selcons}{\textsf{selection}}

  Let $I=((G_j)_{j \in [i]}, 
  (k_j)_{j \in [i]}, 
  (d_j)_{j \in [i-1]})$ 
  be an instance of \TCEwIB{} with $i>1$ layers. 
  Let $V=V(G_1)$ be the vertex set of the temporal graph in $I$, $u, v \in V$, $c, w, x, y, z \notin V$  
  such that $\{u, v\} \in E(G_i)$ and no further edges are 
  incident with $u$ and $v$ in $G_i$. 
  The function $\selcons(I, u, v, w, x, y, z)$ produces 
  a new instance $\widehat{I}= ((\widehat{G}_j)_{j \in [i+2]}, 
  (\widehat{k}_j)_{j \in [i+2]}, 
  (\widehat{d}_j)_{j \in [i+1]})$ of \TCEwIB{} 
  with $i+2$ layers as follows. 
  See \cref{fig:tempw1hk} for an illustration.


  Start by taking $\widehat{G}_{\tilde{\imath}}=G_{\tilde{\imath}}$ for every
  $\tilde{\imath} \in [i]$ and $\widehat{G}_{i+1}=\widehat{G}_{i+2}=G_i$.
  Then introduce the vertices~$c, w, x, y, z$ into~$V$. Introduce
  four cliques~$C_c, C_v, C_x, C_z$ into all layers, each clique
  consisting of $2k + d + 1$ new vertices. Make each vertex in $C_c$
  adjacent with $c$ in all layers except for~$i + 1$, make each vertex
  in $C_v$ adjacent with $v$ in layer~$i + 2$ (and no other layer),
  make each vertex in $C_x$ adjacent with $x$ in all layers up to and
  including~$i$, and make each vertex in $C_z$ adjacent with $z$ in
  all layers up to and including~$i$. Introduce the edges
  $\{c, u\}, \{c, v\}$ in layer $i + 1$, introduce the edges
  $\{w, x\}, \{y, z\}$ in layer $i + 2$ and remove the edge $\{u, v\}$
  from layer $i + 2$. 
  
  Define the editing budgets as $\widehat{k}_{\tilde{\imath}}=k_{\tilde{\imath}}$ 
  for every $\tilde{\imath} \in [i]$, $\widehat{k}_{i + 1} = k_i + 2$, and $\widehat{k}_{i + 2} = k_i$.
  Similarly, let the marking budgets be $\widehat{d}_{\tilde{\imath}}=d_{\tilde{\imath}}$ 
  for every $\tilde{\imath} \in [i-1]$, $\widehat{d}_{i} = 3$, $\widehat{d}_{i + 1} = 2$.
  This concludes the construction.

  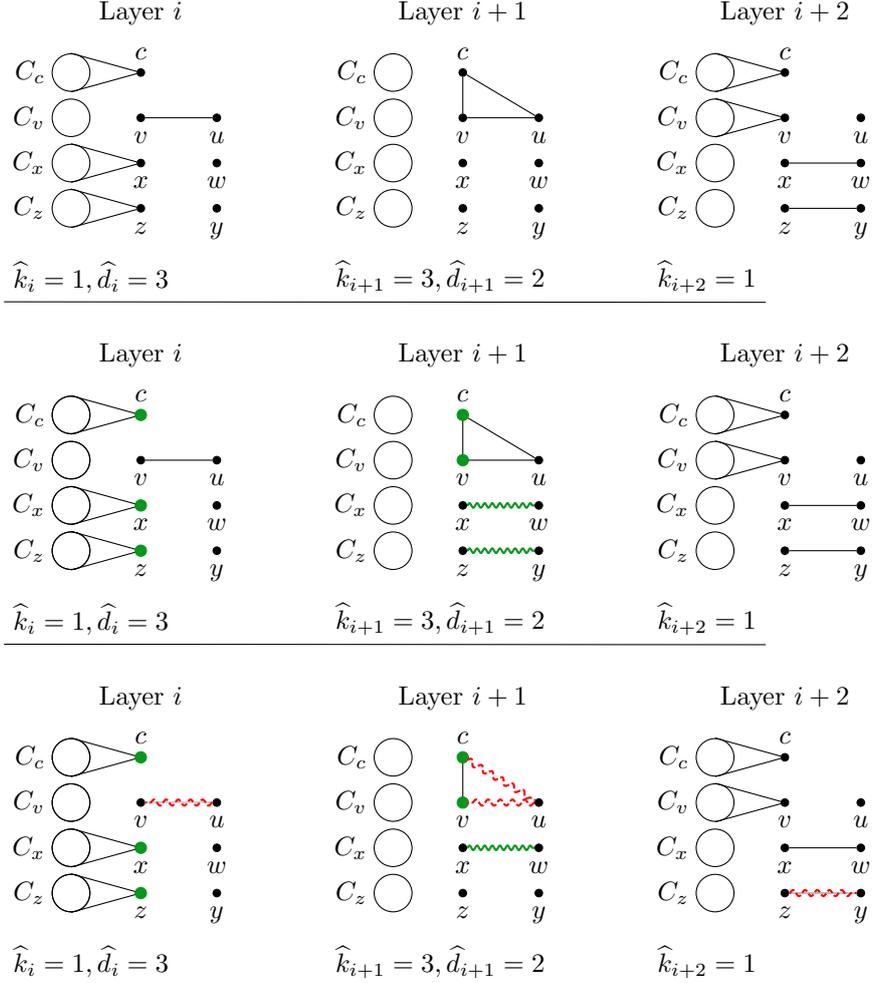
\begin{figure}[t!]
  \centering
  \begin{tikzpicture}
    \def \xs {28ex}
    \def \xss {56ex}
    \def \ys {-30ex}
    \def \yss {-60ex}
    \def \scc {.6}
    \def \aa {3ex}
    \def \bb {6ex}
  \begin{scope}
    \foreach \n / \x / \y / \s in  {c/0/1/above, u/1/0/below,v/0/0/below,w/1/-1/below,x/0/-1/below,y/1/-2/below,z/0/-2/below} {
      \node[nnode] at (\x,\y*\scc) (\n) {};
      \node[\s = 0pt of \n] {$\n$};
    }
    \foreach \s in {c,v,x,z} {
      \node[bnode, left = 4ex of \s] (C\s) {};
      \node[left = -1pt of C\s] (nC\s) {$C_\s$};
    }
    \foreach \s in {c,x,z} {
      \draw (\s) edge (C\s.south);
      \draw (\s) edge (C\s.north);
    }
    \foreach \s / \t in {u/v} {
      \draw (\s) edge (\t);
    }
    \node[above = \aa of c] {Layer $i$};
    \node[below = \bb of nCz.west,anchor=west] (par11) {$\widehat{k}_i=1, \widehat{d}_i=3$};
  \end{scope}
  \begin{scope}[xshift=\xs]
    \foreach \n / \x / \y / \s in  {c/0/1/above, u/1/0/below,v/0/0/below,w/1/-1/below,x/0/-1/below,y/1/-2/below,z/0/-2/below} {
      \node[nnode] at (\x,\y*\scc) (\n) {};
      \node[\s = 0pt of \n] {$\n$};
    }
    \foreach \s in {c,v,x,z} {
      \node[bnode, left = 4ex of \s] (C\s) {};
      \node[left = -1pt of C\s] (nC\s) {$C_\s$};
    }

    \foreach \s / \t in {u/v,c/u,c/v} {
      \draw (\s) edge (\t);
    }
    \node[above = \aa of c] {Layer $i+1$};
    \node[below = \bb of nCz.west,anchor=west] {$\widehat{k}_{i+1}=3, \widehat{d}_{i+1}=2$};
  \end{scope}  
  \begin{scope}[xshift=\xss]
    \foreach \n / \x / \y / \s in  {c/0/1/above, u/1/0/below,v/0/0/below,w/1/-1/below,x/0/-1/below,y/1/-2/below,z/0/-2/below} {
      \node[nnode] at (\x,\y*\scc) (\n) {};
      \node[\s = 0pt of \n] {$\n$};
    }
    \foreach \s in {c,v,x,z} {
      \node[bnode, left = 4ex of \s] (C\s) {};
      \node[left = -1pt of C\s] (nC\s) {$C_\s$};
    }
    \foreach \s in {c,v} {
      \draw (\s) edge (C\s.south);
      \draw (\s) edge (C\s.north);
    }
    \foreach \s / \t in {w/x,z/y} {
      \draw (\s) edge (\t);
    }
    \node[above = \aa of c] {Layer $i+2$};
    \node[below = \bb of nCz.west,anchor=west] (par31) {$\widehat{k}_{i+2}=1$};
  \end{scope}
  \draw (par11.south west) edge (par31.south east);

   \begin{scope}[yshift=\ys]
    \foreach \n / \x / \y / \s in  {c/0/1/above, u/1/0/below,v/0/0/below,w/1/-1/below,x/0/-1/below,y/1/-2/below,z/0/-2/below} {
      \node[nnode] at (\x,\y*\scc) (\n) {};
      \node[\s = 0pt of \n] {$\n$};
    }
    \foreach \s in {c,v,x,z} {
      \node[bnode, left = 4ex of \s] (C\s) {};
      \node[left = -1pt of C\s] (nC\s) {$C_\s$};
    }
    \foreach \s in {c,x,z} {
      \draw (\s) edge (C\s.south);
      \draw (\s) edge (C\s.north);
    }
    \foreach \s / \t in {u/v} {
      \draw (\s) edge (\t);
    }
    \node[above = \aa of c] {Layer~$i$};
    \node[below = \bb of nCz.west,anchor=west] (par12) {$\widehat{k}_i=1, \widehat{d}_i=3$};

    \foreach \s in {c,x,z} {
      \node[bnode, deletedv] at (\s) {};
    }  
    
    \foreach \s in {c,v,x,z} {
      \node[bnode] at (C\s) {};
    }
  \end{scope}
  \begin{scope}[xshift = \xs,yshift=\ys]
    \foreach \n / \x / \y / \s in  {c/0/1/above, u/1/0/below,v/0/0/below,w/1/-1/below,x/0/-1/below,y/1/-2/below,z/0/-2/below} {
      \node[nnode] at (\x,\y*\scc) (\n) {};
      \node[\s = 0pt of \n] {$\n$};
    }
    \foreach \s in {c,v,x,z} {
      \node[bnode, left = 4ex of \s] (C\s) {};
      \node[left = -1pt of C\s] (nC\s) {$C_\s$};
    }

    \foreach \s / \t in {u/v,c/u,c/v} {
      \draw (\s) edge (\t);
    }
    \node[above = \aa of c] {Layer $i+1$};
    \node[below = \bb of nCz.west,anchor=west] {$\widehat{k}_{i+1}=3, \widehat{d}_{i+1}=2$};   
    \foreach \s in {c,v} {
      \node[bnode, deletedv] at (\s) {};
    }  
        
    \foreach \s / \t in {x/w,z/y} {
      \draw  (\s) edge[added] (\t);
    }
  \end{scope}  
  \begin{scope}[xshift=\xss,yshift=\ys]
    \foreach \n / \x / \y / \s in  {c/0/1/above, u/1/0/below,v/0/0/below,w/1/-1/below,x/0/-1/below,y/1/-2/below,z/0/-2/below} {
      \node[nnode] at (\x,\y*\scc) (\n) {};
      \node[\s = 0pt of \n] {$\n$};
    }
    \foreach \s in {c,v,x,z} {
      \node[bnode, left = 4ex of \s] (C\s) {};
      \node[left = -1pt of C\s] (nC\s) {$C_\s$};
    }
    \foreach \s in {c,v} {
      \draw (\s) edge (C\s.south);
      \draw (\s) edge (C\s.north);
    }
    \foreach \s / \t in {w/x,z/y} {
      \draw (\s) edge (\t);
    }
    \node[above = \aa of c] {Layer $i+2$};
    \node[below = \bb of nCz.west,anchor=west] (par32) {$\widehat{k}_{i+2}=1$};
  \end{scope}  

  \draw (par12.south west) edge (par32.south east);

   \begin{scope}[yshift=\yss]
    \foreach \n / \x / \y / \s in  {c/0/1/above, u/1/0/below,v/0/0/below,w/1/-1/below,x/0/-1/below,y/1/-2/below,z/0/-2/below} {
      \node[nnode] at (\x,\y*\scc) (\n) {};
      \node[\s = 0pt of \n] {$\n$};
    }
    \foreach \s in {c,v,x,z} {
      \node[bnode, left = 4ex of \s] (C\s) {};
      \node[left = -1pt of C\s] (nC\s) {$C_\s$};
    }
    \foreach \s in {c,x,z} {
      \draw (\s) edge (C\s.south);
      \draw (\s) edge (C\s.north);
    }
    \foreach \s / \t in {u/v} {
      \draw (\s) edge (\t);
    }
    \node[above = \aa of c] {Layer~$i$};
    \node[below = \bb of nCz.west,anchor=west] {$\widehat{k}_i=1, \widehat{d}_i=3$};

    \foreach \s in {c,x,z} {
      \node[bnode, deletedv] at (\s) {};
    }  
    
    \foreach \s in {c,v,x,z} {
      \node[bnode] at (C\s) {};
    }
    \foreach \s / \t in {u/v} {
      \draw  (\s) edge[white] (\t);
      \draw  (\s) edge[deletede] (\t);      
    }
  \end{scope}

  \begin{scope}[xshift = \xs,yshift=\yss]
    \foreach \n / \x / \y / \s in  {c/0/1/above, u/1/0/below,v/0/0/below,w/1/-1/below,x/0/-1/below,y/1/-2/below,z/0/-2/below} {
      \node[nnode] at (\x,\y*\scc) (\n) {};
      \node[\s = 0pt of \n] {$\n$};
    }
    \foreach \s in {c,v,x,z} {
      \node[bnode, left = 4ex of \s] (C\s) {};
      \node[left = -1pt of C\s] (nC\s) {$C_\s$};
    }

    \foreach \s / \t in {
    c/v} {
      \draw (\s) edge (\t);
    }
    \node[above = \aa of c] {Layer $i+1$};
    \node[below = \bb of nCz.west,anchor=west] {$\widehat{k}_{i+1}=3, \widehat{d}_{i+1}=2$};   
    \foreach \s in {c,v} {
      \node[bnode, deletedv] at (\s) {};
    }  
        
    \foreach \s / \t in {x/w} {
      \draw  (\s) edge[added] (\t);
    }
    \foreach \s / \t in {c/u,u/v} {
      \draw  (\s) edge[white] (\t);
      \draw  (\s) edge[deletede] (\t);      
    }
  \end{scope}  
  \begin{scope}[xshift=\xss,yshift=\yss]
    \foreach \n / \x / \y / \s in  {c/0/1/above, u/1/0/below,v/0/0/below,w/1/-1/below,x/0/-1/below,y/1/-2/below,z/0/-2/below} {
      \node[nnode] at (\x,\y*\scc) (\n) {};
      \node[\s = 0pt of \n] {$\n$};
    }
    \foreach \s in {c,v,x,z} {
      \node[bnode, left = 4ex of \s] (C\s) {};
      \node[left = -1pt of C\s] (nC\s) {$C_\s$};
    }
    \foreach \s in {c,v} {
      \draw (\s) edge (C\s.south);
      \draw (\s) edge (C\s.north);
    }
    \foreach \s / \t in {w/x,z/y} {
      \draw (\s) edge (\t);
    }
    \node[above = \aa of c] {Layer $i+2$};
    \node[below = \bb of nCz.west,anchor=west] {$\widehat{k}_{i+2}=1$};
    \foreach \s / \t in {z/y} {
      \draw  (\s) edge[white] (\t);
      \draw  (\s) edge[deletede] (\t);      
    }
 \end{scope}  
\end{tikzpicture}
\label[figure]{fig:tempw1hk}
\caption{The crucial subgadget constructed by \selcons\ which is used
  in the vertex-selection gadgets in the proof of \cref{thm:tempw1hk}.
  The first row depicts the gadget (three layers). The second and the
  third row depict two possible selections. Budgets are given for the
  case that the input instance to \selcons\ has $k_i = 1$. A node
  drawn as a circle represents a clique while a node drawn as a solid
  dot represents a vertex. Solid lines represent edges. Zigzag lines
  represent edge modifications: red dashed zigzag lines mean edge
  deletion while green zigzag lines mean edge addition. A vertex in
  bold green means that it will be marked.}
\end{figure}


  In the following, we say that an instance~$I$ of \TCEwIB\ is \emph{sane} in layer~$i$ for vertices $u$
  and~$v$ if each solution for~$I$ has the property that the editing
  budget is saturated in layer~$i$ and $u$ and~$v$ either form
  singleton clusters~$\{u\}, \{v\}$ or a cluster~$\{u, v\}$ of size
  two in layer~$i$.
  
  \begin{lem}\label[lemma]{lem:selcons}
    Let $I=((G_j)_{j \in [i]}, 
  (k_j)_{j \in [i]}, 
  (d_j)_{j \in [i-1]})$ be an instance of \TCEwIB\, 
  $u,v,c,w,x,y,z$ as above, 
  and $\widehat{I}= ((\widehat{G}_j)_{j \in [i+2]}, 
  (\widehat{k}_j)_{j \in [i+2]}, 
  (\widehat{d}_j)_{j \in [i+1]})$
    the instance resulting from $\selcons(I, u, v, w, x, y, z)$. 
    Let $W$ be the set of new
    vertices introduced by \selcons. That is, $W$ contains
    $c, w, x, y, z$, and the new vertices in the
    cliques~$C_c, C_v, C_x, C_z$. The following properties hold.
    \begin{lemenum}
    \item If $I$ has a solution $((M_j)_{j \in
      [i]}, (D_j)_{j \in [i-1]})$ such that $\{u, v\}$ forms a cluster
      in $G_i \oplus M_i$, then $\widehat{I}$ has a solution $((\widehat{M}_j)_{j \in
      [i+2]}, (\widehat{D}_j)_{j \in [i+1]})$ such that both $\{w, x\}$
      and $\{y, z\}$ form clusters in $\widehat{G}_{i+2} \oplus \widehat{M}_{i+2}$,
      $(\widehat{G}_{i+2} \oplus \widehat{M}_{i+2}) - (W \cup \{u, v\}) = (G_i \oplus M_i) - \{u, v\}$,
      and there are no edges between $V(G_i)\setminus \{u, v\}$ and $(W \cup \{u, v\})$ in $\widehat{G}_{i+2} \oplus \widehat{M}_{i+2}$.
    \item If $I$ has a solution $((M_j)_{j \in
      [i]}, (D_j)_{j \in [i-1]})$ such that $\{u\}$ and $\{v\}$ are singleton
      clusters in $G_i \oplus M_i$, then $\widehat{I}$ has a solution $((\widehat{M}_j)_{j \in
      [i+2]}, (\widehat{D}_j)_{j \in [i+1]})$ such that in $\widehat{G}_{i+2} \oplus \widehat{M}_{i+2}$
      either (a)~$\{w, x\}$, $\{y\}$, and $\{z\}$ each form
      clusters or (b) $\{w\}$, $\{x\}$, and $\{y, z\}$ each form
      clusters. Moreover,
      $(\widehat{G}_{i+2} \oplus \widehat{M}_{i+2}) - (W \cup \{u, v\}) = (G_i \oplus M_i) - \{u, v\}$
      and there are no edges between $V(G_i)\setminus \{u, v\}$ and $(W \cup \{u, v\})$ in $\widehat{G}_{i+2} \oplus \widehat{M}_{i+2}$.
    \item Suppose that $I$ is sane in layer~$i$ for $u$ and~$w$. Then $\widehat{I}$ is
      sane in layer~$i + 2$ for $w$ and~$x$ as well as for~$y$
      and~$z$. Moreover, for each solution $((\widehat{M}_j)_{j \in
      [i+2]}, (\widehat{D}_j)_{j \in [i+1]})$ for $\widehat{I}$, we have 
      $(\widehat{G}_{i+2} \oplus \widehat{M}_{i+2}) - (W \cup \{u, v\}) = (\widehat{G}_i \oplus \widehat{M}_i) - \{u, v\}$
      and there are 
      no edges between $V(G_i)\setminus \{u, v\}$ and $(W \cup \{u, v\})$ 
      in $\widehat{G}_{i+2} \oplus \widehat{M}_{i+2}$.
      Furthermore, each solution $((\widehat{M}_j)_{j \in
      [i+2]}, (\widehat{D}_j)_{j \in [i+1]})$ for $\widehat{I}$ such
      that $\{u\}$ and $\{v\}$ form singleton clusters in $\widehat{G}_{i} \oplus \widehat{M}_{i}$ also has
      the property in $\widehat{G}_{i+2} \oplus \widehat{M}_{i+2}$ that $\{w\}$ and $\{x\}$ form singleton
      clusters or that $\{y\}$ and $\{z\}$ form singleton clusters.
    \end{lemenum}
  \end{lem}
  \begin{proof}
    Let $\widehat{V}=V(\widehat{G}_1)$ be the new vertex set.

    \proofparagraph{(i).} Let
    $((M_j)_{j \in
      [i]}, (D_j)_{j \in [i-1]})$
      be a solution to $I$ such that $\{u, v\}$ forms a cluster
    in $G_i \oplus M_i$. Modify this solution as follows to obtain a solution 
    $((\widehat{M}_j)_{j \in
      [i+2]}, (\widehat{D}_j)_{j \in [i+1]})$
    for~$\widehat{I}$. See the middle pane in \cref{fig:tempw1hk} for an
    illustration of the modifications. 
    Let $\widehat{M}_{j}=M_{j}$ for every $j \in [i]$ 
    and $\widehat{D}_{j}=D_{j}$ for every $j \in [i-1]$.
    Further, put $\widehat{M}_{i + 1} = M_i \cup \{\{w, x\}, \{y, z\}\}$, $\widehat{M}_{i + 2} = M_i$,
    $\widehat{D}_{i} = \{c, x, z\}$, and $\widehat{D}_{i + 1} = \{c, v \}$. Note that each individual budget is
    satisfied and, clearly, $\{w, x\}$ and $\{y, z\}$ form clusters in
    $\widehat{G}_{i+2} \oplus \widehat{M}_{i+2}$. Moreover, by definition, the equality condition on
    $(\widehat{G}_{i+2} \oplus \widehat{M}_{i+2}) - (W \cup \{u, v\})$ and 
    $(G_i \oplus M_i) - \{u, v\}$ as well as the non-existence of edges between $\widehat{V} \setminus (W \cup \{u,v\})$ and $(W \cup \{u,v\})$ are satisfied. Let $\widehat{G}'_j=\widehat{G}_j \oplus M_j$. It remains to
    show that for every $j \in [i+2]$ the graph $\widehat{G}'_j$ 
    is a cluster graph and each two consecutive $\widehat{G}'_j$'s are consistent.

    For every $j \in [i]$ let $G'_j= G_j \oplus M_j$.
    We first show that each $\widehat{G}'_j$, $j \in [i+2]$, is a cluster graph. Consider first the
    layers up to~$i$. Observe that, for each $j \in [i]$, $\widehat{G}'_j[W]$ is
    a cluster graph and that there are no edges in $\widehat{G}'_j$ between $W$
    and $\widehat{V} \setminus W$. Thus, it is enough to observe that
    $\widehat{G}'_j[\widehat{V} \setminus W]$ is a cluster graph. This is indeed the case
    because, for each $j \in [i]$, $\widehat{G}'_{j} - W = G'_j$ and $M_j$ is a cluster editing
    set for $G_j$. Thus, $\widehat{G}'_j$ is a cluster graph.

    Similarly, for both $j \in \{i + 1, i + 2\}$,
    $\widehat{G}'_j[W \cup \{u, v\}]$ is a cluster graph (for $j = i + 1$,
    observe that the only edits within $W \cup \{u, v\}$ join
    singleton clusters into clusters of size two) and there are no
    edges in $\widehat{G}'_j$ between $W \cup \{u, v\}$ and
    $\widehat{V} \setminus (W \cup \{u, v\})$. Since
    $\widehat{G}_i - (W \cup \{u, v\}) = \widehat{G}_{i + 1} - (W \cup \{u, v\}) = \widehat{G}_{i +
      2} - (W \cup \{u, v\})$ (as noted above), $\widehat{M}_{i + 1}$ and $\widehat{M}_{i + 2}$ 
      restricted to $\widehat{V} \setminus (W \cup \{u, v\})$ are cluster
    editing sets for $\widehat{G}_{i + 1} - (W \cup \{u, v\})$ and
    $\widehat{G}_{i + 2} - (W \cup \{u, v\})$. Thus, both $\widehat{G}'_{i + 1}$ and
    $\widehat{G}'_{i + 2}$ are cluster graphs and indeed all layers are.

    To see consistency, recall that, for each $j \in [i]$, there are
    no edges in $\widehat{G}'_j$ between $W$ and $\widehat{V} \setminus W$. For each
    $j \in [i - 1]$, $\widehat{G}'_j - W$ and $\widehat{G}'_{j + 1} - W$ are consistent by
    our choice of the editing and marking sets. In addition,
    $\widehat{G}'_j[W] = \widehat{G}'_{j + 1}[W]$ by construction and because no edits are
    made in these graphs, and thus $\widehat{G}'_j[W]$ is consistent
    with~$\widehat{G}'_{j + 1}[W]$. Thus, all pairs of consecutive
    layers~$j, j + 1$ up to layer~$j = i - 1$ are consistent.

    For the remaining layers~$i$, $i + 1$, and $i + 2$ we have
    $\widehat{G}'_i - (W \cup \{u, v\}) = \widehat{G}'_{i + 1} - (W \cup \{u, v\}) = \widehat{G}'_{i
      + 2} - (W \cup \{u, v\})$ and, thus, these graphs are
    consistent. 
    Observe that there are no edges between $W \cup \{u, v\}$ and
    $\widehat{V} \setminus (W \cup \{u, v\})$ in $\widehat{G}'_i$, $\widehat{G}'_{i + 1}$, and
    $\widehat{G}'_{i + 2}$, since $\{u, v\}$ is a cluster in~$\widehat{G}'_i$. Thus,
    it is enough to show that, for both $j \in \{i, i + 1\}$,
    $\widehat{G}'_j[W \cup \{u, v\}]$ and $\widehat{G}'_{j + 1}[W \cup \{u, v\}]$ are
    consistent.
    Consider the case that $j = i$. The edge set
    $E(\widehat{G}'_i[W \cup \{u, v\}]) \oplus E(\widehat{G}'_{i + 1}[W \cup \{u, v\}])$
    consists of $\{c, u\}, \{c, v\}, \{w, x\}, \{y, z\}$, the edges
    between $c$ and $C_c$, $x$ and $C_x$, and $z$ and $C_z$. Since
    $c, x, z \in \widehat{D}_i$, each of these edges has a marked endpoint and
    thus $\widehat{G}'_i[W]$ and $\widehat{G}'_{i + 1}[W]$ are consistent. Finally, in the
    case that $j = i + 1$, the edges in
    $E(\widehat{G}'_{i+1}[W \cup \{u, v\}]) \oplus E(\widehat{G}'_{i + 2}[W \cup \{u, v\}])$
    are $\{c, u\}, \{c, v\}, \{u, v\}$, the edges between $c$ and
    $C_c$, and the edges between $v$ and $C_v$. Since
    $c, v \in \widehat{D}_{i + 1}$, $\widehat{G}'_{i + 1}[W \cup \{u, v\}]$ and
    $\widehat{G}'_{i + 2}[W \cup \{u, v\}]$ are consistent. Thus, indeed, each
    two consecutive layers are consistent.

    \proofparagraph{(ii).} Let
    $((M_j)_{j \in
      [i]}, (D_j)_{j \in [i-1]})$ 
      be a solution to $I$ such that $\{u\}$ and $\{v\}$ are singleton
    clusters in $G_i \oplus M_i$. Modify this solution as follows to obtain a
    solution $((\widehat{M}_j)_{j \in
      [i+2]}, (\widehat{D}_j)_{j \in [i+1]})$
      for~$\widehat{I}$. See the lower pane in \cref{fig:tempw1hk} for an
    illustration of the modifications. 
    Initially, let $\widehat{M}_{j}=M_{j}$ for every $j \in [i]$ 
    and $\widehat{D}_{j}=D_{j}$ for every $j \in [i-1]$.
    Further, put $\widehat{D}_{i} = \{c, x, z\}$, $\widehat{D}_{i + 1} = \{c, v \}$. 
    For (a) let $\widehat{M}_{i + 1} = M_i \cup \{\{c, u\}, \{w, x\}\}$, 
    $\widehat{M}_{i + 2} = (M_i \setminus \{\{u,v\}\}) \cup \{\{y,z\}\}$ 
    and for (b) let $\widehat{M}_{i + 1} = M_i \cup \{\{c, u\}, \{y, z\}\}$, 
    $\widehat{M}_{i + 2} = (M_i \setminus \{\{u,v\}\}) \cup \{\{w,x\}\}$.
    This concludes the definition of the solution for $\widehat{I}$. Observe
    that all marking budgets are satisfied and, since we have replaced
    $\{u, v\}$ by some other edge in $\widehat{M}_{i + 2}$, also all editing
    budgets are satisfied. Furthermore, in solution~(a), $\{w, x\}$,
    $\{y\}$, and~$\{z\}$ form clusters in $\widehat{G}_{i+2} \oplus \widehat{M}_{i+2}$ and in
    solution~(b) $\{w\}$, $\{x\}$, and $\{y, z\}$ form clusters in $\widehat{G}_{i+2} \oplus \widehat{M}_{i+2}$, as required. 
    Moreover, by definition, the equality condition on
    $(\widehat{G}_{i+2} \oplus \widehat{M}_{i+2}) - (W \cup \{u, v\})$ and 
    $(G_i \oplus M_i) - \{u, v\}$ as well as the non-existence of edges between $\widehat{V} \setminus (W \cup \{u,v\})$ and $(W \cup \{u,v\})$ are satisfied. It remains to
    show that for every $j \in [i+2]$ the graph $\widehat{G}'_j=\widehat{G}_j \oplus M_j$ 
    is a cluster graph and each two consecutive $\widehat{G}'_j$'s are consistent.
    We only show this for solution (a) since
    the proof for solution (b) is analogous.
    
    For every $j \in [i]$ let $G'_j= G_j \oplus M_j$.
    We first show that each $\widehat{G}'_j$ is a cluster graph. By the same
    arguments as for statement~(i) it follows that this is the case for each $j \in [i]$.
    Consider layers $j \in \{i + 1, i + 2\}$.
    Observe that there are no edges in~$\widehat{G}'_j$ between
    $W \cup \{u, v\}$ and $\widehat{V} \setminus (W \cup \{u, v\})$.
    Furthermore, $\widehat{G}'_j - (W \cup \{u, v\})$ is a cluster graph,
    because $\widehat{G}_j - (W \cup \{u, v\}) = \widehat{G}_i - (W \cup \{u, v\})$ and
    $\widehat{M}_j$ restricted to pairs contained in $\widehat{V} \setminus (W \cup \{u, v\})$
    equals~$M_i$ (since $\{u\}$ and $\{v\}$ are singleton clusters in $G'_i$), and $M_i$ is a cluster editing set for $G_i$. 
    Thus it is enough to show that
    $\widehat{G}'_j[W \cup \{u, v\}]$ is a cluster graph. Since
    $\widehat{G}_j[W \cup \{u, v\}]$ is a cluster graph, we show that the edits
    do not destroy the cluster graph property. The only edits are contained in layers $i + 1$ and $i + 2$.

    Consider the case $j = i + 1$. Restricted to pairs in
    $W \cup \{u, v\}$, $\widehat{M}_{i + 1}$ contains only the pairs~$\{c, u\}$,
    $\{w, x\}$, and $\{u, v\}$. (Note that $\widehat{M}_{i + 1}$ inherits the
    pair $\{u, v\}$ from~$M_i$, because $\{u\}$ and $\{v\}$ are singleton
    clusters in~$\widehat{G}'_i$, and $\widehat{M}_{i + 1}$ inherits only this pair,
    because $M_i$ does not contain any pairs with a vertex in~$W$.)
    Since $\{w\}$ and $\{x\}$ are singleton clusters in
    $\widehat{G}_{i+1}[W \cup \{u, v\}]$, the edit $\{w, x\}$ does not destroy the
    cluster graph property of this graph. Since $\{c, u, v\}$ forms a
    cluster of size three in $\widehat{G}_{i+1}[W \cup \{u, v\}]$, also the edits
    $\{c, u\}$, $\{u, v\}$ do not destroy the cluster graph property.
    Thus, indeed, $\widehat{G}'_{i+1}[W \cup \{u, v\}]$ is a cluster graph.

    Consider the case $j = i + 2$. Since $\{u, v\}$ was removed
    from~$\widehat{M}_{i + 2}$, restricted to pairs in~$W \cup \{u, v\}$,
    $\widehat{M}_{i + 2}$ contains only $\{y, z\}$. Since $\{y, z\}$ forms a
    cluster of size two in~$\widehat{G}_{i+2}[W \cup \{u, v\}]$, edit $\{y, z\}$
    does not destroy the cluster graph property. Hence, for both
    $j \in \{i + 1, i + 2\}$ and indeed for all $j \in [i + 2]$ we
    have shown that $\widehat{G}'_j$ is a cluster graph.

    It remains to show that each two consecutive modified layers are
    consistent. By the same arguments as for statement~(i) we have
    that, for each $j \in [i - 1]$, $\widehat{G}'_j$ and $\widehat{G}'_{j + 1}$ are
    consistent. For the remaining layers~$i$, $i + 1$, and $i + 2$, we
    again have
    $\widehat{G}'_i - (W \cup \{u, v\}) = \widehat{G}'_{i + 1} - (W \cup \{u, v\}) = \widehat{G}'_{i
      + 2} - (W \cup \{u, v\})$ and, thus, these graphs are
    consistent. Since, for each $j \in \{i, i + 1, i + 2\}$, there are
    no edges between $\widehat{V} \setminus (W \cup \{u, v\})$ and
    $W \cup \{u, v\}$ in $\widehat{G}'_j$, it is enough to show for both
    $j \in \{i, i + 1\}$ that $\widehat{G}'_j[W \cup \{u, v\}]$ and
    $\widehat{G}'_{j + 1}[W \cup \{u, v\}]$ are consistent. The edge set
    $E(\widehat{G}'_i[W \cup \{u, v\}]) \oplus E(\widehat{G}'_{i + 1}[W \cup \{u, v\}])$
    consists of $\{c, v\}$ (note that both $\{u, v\} \in \widehat{M}_{i + 1}$
    and $\{c, u\} \in \widehat{M}_{i + 1}$), edge $\{w, x\}$, and the edges to the cliques. Since
    $c, x, z \in \widehat{D}_i$, graphs $\widehat{G}'_{i}$ and $\widehat{G}'_{i + 1}$ are
    consistent. The edge set
    $E(\widehat{G}'_{i+1}[W \cup \{u, v\}]) \oplus E(\widehat{G}'_{i + 2}[W \cup \{u, v\}])$
    contains only $\{c, v\}$ and the edges to the cliques and, since $c, v \in \widehat{D}_{i + 1}$, graphs
    $\widehat{G}'_{i + 1}$ and $\widehat{G}'_{i + 2}$ are consistent. Thus the consistency
    property holds.
 
    \proofparagraph{Helper statement.} Before proving (iii) we prove the following helper statement~(h). For each solution
    $((\widehat{M}_j)_{j \in
      [i+2]}, (\widehat{D}_j)_{j \in [i+1]})$
      to~$\widehat{I}$ it holds that
    $\widehat{D}_i = \{c, x, z\}$ and $\widehat{D}_{i + 1} = \{c, v\}$. For this, observe
    that, for each~$j \in [i + 2]$, there is no pair in $\widehat{M}_j$ which
    contains any vertex of~$C_c$, $C_v$, $C_x$, or $C_z$: Otherwise,
    because the minimum cut of any of these cliques contains more than
    $k$ edges, any edit could contain at most one vertex of such a
    clique. Since, however, there are at most $k$ edits incident with
    these cliques in a layer, at least one $P_3$ would remain. Thus,
    indeed, there are no edits incident with any of the cliques $C_c$,
    $C_v$, $C_x$, or $C_z$. For each $\alpha \in \{c, x, z\}$,
    $\alpha$ is adjacent to all vertices in $C_\alpha$ in layer~$i$
    but not adjacent to any vertex in $C_\alpha$ in layer~$i + 1$.
    Since not all vertices in $C_\alpha$ are marked and no edits are
    incident with~$C_\alpha$, because of consistency between
    layers~$i$ and~$i + 1$, we thus have $\alpha \in \widehat{D}_i$. By a
    similar argument, $c, v \in \widehat{D}_{i + 2}$: Observe that, for each
    $\alpha \in \{c, v\}$, $\alpha$ is not adjacent to any vertex in
    $C_\alpha$ in layer~$i + 1$ and adjacent to all vertices
    in~$C_\alpha$ in layer~$i + 2$. Hence, $c, v \in \widehat{D}_{i + 1}$. Now (h) follows since $\widehat{d}_{i} = 3$ and $\widehat{d}_{i + 1} = 2$.

    \proofparagraph{(iii).} Assume that $I$ is sane in layer~$i$ for
    $u$ and~$v$. Now take a solution
    $\widehat{S}=((\widehat{M}_j)_{j \in
      [i+2]}, (\widehat{D}_j)_{j \in [i+1]})$
      to~$\widehat{I}$ and let $\widehat{G}'_{j}=\widehat{G}_{j} 
      \oplus \widehat{M}_{j}$ for every $j \in [i+2]$. 
      We show that this solution fulfills all properties
    required for statement~(iii).

    Since an induced subgraph of a cluster graph is again a cluster
    graph, restricting
    $\widehat{S}$ to $\widehat{V} \setminus W$ and layers~$[i]$ yields a solution
    to~$I$. Since solutions to~$I$ saturate the editing budget in
    layer~$i$ and $\widehat{k}_i=k_i$, no pair in $\widehat{M}_i$ contains any vertex in~$W$.

    Furthermore, since $\widehat{D}_i \cap (\widehat{V} \setminus W) = \emptyset$
    by~(h), for consistency between layers~$i$ and $i + 1$, we have
    that $\widehat{M}_{i + 1} \cap \binom{\widehat{V}\setminus W}{2} = \widehat{M}_i$ and $\widehat{M}_i \subseteq \widehat{M}_{i + 1}$
    (recall that no pair in $\widehat{M}_i$ contains a vertex in~$W$). Moreover,
    since $E(\widehat{G}_{i + 1}) \oplus E(\widehat{G}_{i + 2})$ contains both $\{w, x\}$
    and $\{y, z\}$ and $\widehat{D}_{i + 1} \cap \{w, x, y, z\} = \emptyset$ by~(h), we have
    that $\widehat{M}_{i + 1} \oplus \widehat{M}_{i + 2}$ also contains both $\{w, x\}$
    and $\{y, z\}$.

    Having observed the above properties of the solution, we now
    distinguish whether $\{u, v\} \in \widehat{M}_i$.

    Suppose that $\{u, v\} \in \widehat{M}_i$. (Note that this includes the case
    where $\{u\}$ and $\{v\}$ are singleton clusters in $\widehat{G}'_i$.) Since
    $\widehat{M}_i \subseteq \widehat{M}_{i + 1}$, we have $\{u, v\} \in \widehat{M}_{i + 1}$. Since
    $\widehat{G}'_{i + 1}$ is a cluster graph, furthermore, either
    $\{c, v\} \in \widehat{M}_{i + 1}$ or $\{c, u\} \in \widehat{M}_{i + 1}$; otherwise,
    $u, c, v$ induce a $P_3$ in $\widehat{G}'_{i+1}$. Thus, since
    $|\widehat{M}_{i + 1}| \leq \widehat{k}_{i + 1} = \widehat{k}_{i} + 2$, there is at most one
    further vertex pair in~$\widehat{M}_{i + 1}$, that is $|\widehat{M}_{i + 1} \setminus (\widehat{M}_i \cup \{\{u, v\}, \{c, v\}, \{c, u\}\})| \leq 1$. Since $I$ is sane, and
    because $u$ and $v$ are incident only with each other in~$\widehat{G}_i$ by precondition of \selcons,
    $\{u, v\}$ is the only pair in~$\widehat{M}_i$ that contains~$u$ or~$v$.
    Thus, because $\widehat{D}_{i+1} \cap (\widehat{V} \setminus W) = \{v\}$ by~(h),
    by consistency between layers~$i$ and~$i + 1$ we have $\widehat{M}_{i} \setminus \{\{u, v\}\} \subseteq \widehat{M}_{i + 2}$.
    Therefore, since
    $|\widehat{M}_{i + 2}| \leq \widehat{k}_{i + 2} = \widehat{k}_{i}$, there is at most one
    further vertex pair in~$\widehat{M}_{i + 2}$. It follows that either $\{w, x\} \in \widehat{M}_{i+1}$
    and $\{y, z\} \in \widehat{M}_{i+2}$ or vice versa, that is, $\{w, x\} \in \widehat{M}_{i+2}$
    and $\{y, z\} \in \widehat{M}_{i+1}$. Hence, the solution 
    saturates the editing budget in layer~$i + 2$, we have $\widehat{G}'_{i+2} - (W \cup \{u, v\}) 
    = \widehat{G}'_i- \{u, v\}$, and there are no 
    edges between $V(G_i)\setminus \{u, v\}$ and $(W \cup \{u, v\})$ in $\widehat{G}'_{i+2}$.
    Moreover, either $\{w\}$ and $\{x\}$ form singleton clusters in~$\widehat{G}'_{i + 1}$ 
    and $\widehat{G}'_{i + 2}$  or $\{y\}$ and
    $\{z\}$ form singleton clusters in~$\widehat{G}'_{i + 1}$ and $\widehat{G}'_{i + 2}$. 
    That is, the last part of statement~(iii) holds and statements on the sanity of~$\widehat{I}$ also follow. 

    Finally, suppose $\{u, v\} \notin \widehat{M}_i$. Clearly, the last part
    of statement~(iii) holds. 
    As shown before, $\widehat{M}_i \subseteq \widehat{M}_{i + 1}$. Because
    $\{u, v\} \notin \widehat{M}_i$, by sanity of~$I$ in layer~$i$ for $u, v$,
    $\{u, v\}$ forms a cluster of size two. Thus, since $u$ and $v$ are
    only incident with each other in $\widehat{G}_i$ by precondition of \selcons, there are no pairs
    containing~$u$ or~$v$ in~$\widehat{M}_i$. Hence, $\widehat{D}_{i + 1}$ does not
    contain any vertices occurring in pairs of~$\widehat{M}_i$ by~(h). This implies
    $\widehat{M}_i \subseteq \widehat{M}_{i + 2}$. 
    As $|\widehat{M}_{i + 2}| \le \widehat{k}_{i + 2} =\widehat{k}_{i} =
    |\widehat{M}_{i}|$ by sanity of~$I$ in layer~$i$, we have $\widehat{M}_{i+2}= \widehat{M}_{i}$
    and the solution saturates the
    editing budget in layer~$i + 2$. 
    Thus, we have $\widehat{G}'_{i+2} - (W \cup \{u, v\}) 
    = \widehat{G}'_i- \{u, v\}$ and  in $\widehat{G}'_{i+2}$ there are no 
    edges between $V(G_i)\setminus \{u, v\}$ and $(W \cup \{u, v\})$ 
    and $\{w, x\}$ and $\{y, z\}$ form clusters.
    Hence, $\widehat{I}$ is also sane in
    layer~$i + 2$ for $w, x$ and for~$y, z$.
  \end{proof}

  We now use function \selcons\ to construct the vertex-selection
  gadgets. To construct one vertex-selection gadget for color~$p$, we
  begin with an edge which has to be removed in any solution 
  (this can be enforced by consistency with previous layer, not containing the edge). This
  edge is the input edge of the first application of \selcons. Recall
  that \selcons\ has one input edge and two output edges and the
  removal of the input edge forces the removal of one of the output
  edges. One of the output edges corresponds to a vertex of color~$p$
  and the other will be the input edge to the next application of
  \selcons. In this application, in turn, one of the output edges
  corresponds to another vertex of color~$p$ and the other is the
  input edge to the next application and so forth. In the last
  application, both output edges correspond to distinct vertices. By
  the properties of \selcons, this construction will ensure that one
  of the output edges corresponding to vertices is removed, that is,
  one vertex is selected.

  Formally, we proceed as follows. Begin with an empty temporal graph
  (without vertices or edges) with only one layer, put $k_1 = 0$. For
  each $p \in [s]$ in order, proceed as follows to construct the
  \emph{vertex-selection gadget} for color~$p$. Introduce a new layer
  $2(p - 1)n + 2$ as an identical copy of $2(p - 1)n + 1$ (we will
  ensure that this layer exists), put
  $k_{2(p - 1)n + 2} = k_{2(p - 1)n + 1} + 1$, and
  $d_{2(p - 1)n + 1} = 0$. Introduce two new
  vertices $g_1, h_1$ and make them adjacent in layer~$2(p - 1)n + 2$
  and non-adjacent in all preceding layers.
  For each $i \in [n - 2]$ in order, apply
  $\selcons(2(p - 1)n + 2i, g_i, h_i, u_i^p, v_i^p, g_{i + 1}, h_{i +
    1})$. Finally, apply
  $\selcons(2(p - 1)n + 2n - 2, g_{n - 1}, h_{n - 1}, u_{n - 1}^p,
  v_{n - 1}^p, u_{n}^p, v_{n}^p)$ and introduce layer $2pn + 1$ as an
  identical copy of layer~$2pn$. Put $k_{2pn + 1} = k_{2pn}$, and
  $d_{2pn} = 0$. This concludes the construction of the
  vertex-selection gadgets. Note that, since the output edges in
  \selcons\ are isolated, the preconditions of \selcons\ are satisfied
  in each application. Furthermore, $k_{2sn + 1} = s$ and, for each
  layer~$i$ constructed so far, we have $k_i \leq s + 2$ and
  $d_i \leq 3$.

  \begin{lem}\label[lemma]{lem:wh-selection}
    Let $I=((G_i)_{i \in [2sn + 1]}, (k_i)_{i \in [2sn + 1]}, 
    (d_i)_{i \in [2sn]})$ be the instance of \TCEwIB\ 
    constructed above.
    \begin{lemenum}
    \item For each sequence $(i_p)_{p \in [s]}$ of integers in $[n]$,
      there is a solution $((M_i)_{i \in [2sn+1]}, (D_i)_{i \in [2sn]})$ 
      for~$I$ such that, for
      each $p \in [s]$, $\{u_{i_p}^p\}$ and $\{v_{i_p}^p\}$ form
      clusters in $G_{2sn + 1} \oplus M_{2sn + 1}$ and, for each $i \in [n] \setminus \{i_p\}$,
      $\{u_i^p, v_i^p\}$ forms a cluster in $G_{2sn + 1} \oplus M_{2sn + 1}$.
    \item In each solution $((M_i)_{i \in [2sn+1]}, (D_i)_{i \in [2sn]})$ 
      for~$I$, for each
      $p \in [s]$, there is at least one $i \in [n]$ such that
      $\{u_i^p\}$ and $\{v_i^p\}$ form singleton clusters in $G_{2sn + 1} \oplus M_{2sn + 1}$.
    \end{lemenum}
  \end{lem}
  \begin{proof}
    \proofparagraphm{(i).} We build the required solution successively by building a
    solution for the resulting instance after each application of
    \selcons. Before the first application, the only
    possible solution, and our initial solution, is to remove
    $\{g_1, h_1\}$ in layer~$2$. For each $p \in [s]$, proceed
    iteratively as follows. For each $i \in [i_p - 1]$, iteratively
    apply \cref{lem:selcons}~(ii), showing that there is a solution
    $((M_{\tilde{\imath}})_{\tilde{\imath} \in [2(p - 1)n + 2i + 2]}, (D_{\tilde{\imath}})_{\tilde{\imath} \in [2(p - 1)n + 2i + 1]})$
    such that, in $G_{2(p - 1)n + 2i + 2} \oplus M_{2(p - 1)n + 2i + 2}$, for each $j \in [i]$,
    $\{u_j^p, v_j^p\}$ forms a cluster and $\{g_{i+1}\}$ and $\{h_{i+1}\}$
    form singleton clusters. (Note that the equality property on the
    two layers and the non-existence of edges between the old and newly added vertices
    in \cref{lem:selcons}~(ii) ensures that clusters
    involving some $u_j^p, v_j^p$ remain the same when increasing $i$
    once they were formed.) If $i_p = n$, then indeed $\{u^p_n\}$ and $\{v^p_n\}$ form singleton clusters instead of $\{g_{i_p}\}$ and $\{h_{i_p}\}$, giving the required solution up to the last layer~$2pn + 1$ (for which we show below how to construct it) . Otherwise, if $i_p < n$, put $i = i_p$ and apply
    \cref{lem:selcons}~(ii), yielding that there is a solution
    $((M_{\tilde{\imath}})_{\tilde{\imath} \in [2(p - 1)n + 2i + 2]}, (D_{\tilde{\imath}})_{\tilde{\imath} \in [2(p - 1)n + 2i + 1]})$
    such that, in $G_{2(p - 1)n + 2i + 2} \oplus M_{2(p - 1)n + 2i + 2}$, for each $j \in [i - 1]$,
    $\{u_j^p, v_j^p\}$ forms a cluster, and $\{u_i^p\}$, $\{v_i^p\}$,
    and $\{g_{i+1}, h_{i+1}\}$ form clusters. If $i_p = n - 1$ then indeed $\{u^p_n, v^p_n\}$ forms a cluster, again giving the required solution already for layers up to~$2pn$. Otherwise, if $i_p < n - 1$, for each
    $i \in [n - 1] \setminus [i_p]$, iteratively apply
    \cref{lem:selcons}~(i), yielding that there is a solution
    $((M_{\tilde{\imath}})_{\tilde{\imath} \in [2pn]}, (D_{\tilde{\imath}})_{\tilde{\imath} \in [2pn - 1]})$
    such that, in $G_{2pn} \oplus M_{2pn}$, for each
    $j \in [i_p - 1]$, $\{u_j^p, v_j^p\}$ forms a cluster,
    $\{u_{i_p}^p\}$, $\{v_{i_p}^p\}$ form clusters, and, for each
    $j \in [n] \setminus [i_p]$, $\{u_j^p, v_j^p\}$ forms a cluster.
    Finally we let $D_{2np} = \emptyset$ and $M_{2np+1} = M_{2pn}$ to 
    obtain a solution for the instance before the first application of \selcons{}
    for the next color~$p + 1$.
    Hence, these same clusters occur in~$G_{2pn+1} \oplus M_{2pn+1}$, 
    meaning that they carry over to the iteratively
    constructed solution for the next color~$p + 1$. Hence, after
    constructing the solution iteratively for all colors~$p$, we
    obtain the claimed solution for~$I$.
    
    \proofparagraph{(ii).} We first show that after each application of
    \selcons\ for some $p \in [s]$ and $i \in [n-1]$ the resulting
    instance~$I_{p, i}$ has the property
    \begin{compactenum}[(a)]
    \item of being sane in layer
      $2(p - 1)n + 2i + 2$ (the last layer constructed so far) for
      $g_{i + 1}$, $h_{i + 1}$ when $i \in [n-2]$ and for $u_n^p$, $u_n^p$
      when $i = n-1$, 
    \item that, for each solution 
      $((M_{\tilde{\imath}})_{\tilde{\imath} \in [2(p-1)n+2i+2]}, (D_{\tilde{\imath}})_{\tilde{\imath} \in [2(p-1)n+2i+1]})$,
      for each~$r \in [p - 1]$, there is
      an $i_r \in [n]$ such that $\{u_{i_r}^r\}$ and $\{v_{i_r}^r\}$
      form singleton clusters in~$G_{2(p - 1)n + 2i + 2}\oplus M_{2(p - 1)n + 2i + 2}$ and that
    \item either $\{g_{i + 1}\}$ and
      $\{h_{i + 1}\}$ form singleton clusters or there is a
      $i_p \in [i]$ such that $\{u_{i_p}^p\}$ and $\{v_{i_p}^p\}$ form
      singleton clusters.
  \end{compactenum}
  The proof is by induction on the number of
    applications of \selcons.

    For the first application of \selcons, the statement follows from
    the fact that $\{g_1, h_1\}$ is removed in the only possible
    solution (due to consistency between layer one and two) in
    combination with \cref{lem:selcons}~(iii). (Note that the instance
    before applying \selcons\ is sane in layer~$2$ for $g_1$, $h_1$.)

    Now assume that the statement holds for some number (at least one)
    of applications of \selcons\ and we prove that it holds after one
    more application. Let $I_{p, i}$ be the resulting instance. We
    distinguish whether $i > 1$ or not.

    Suppose $i = 1$. Since the current application of \selcons\ is not
    the first one, we have $p > 1$. In the last application of
    \selcons, the vertex-selection gadget for color~$p - 1$ was
    completed. Hence, by the induction hypothesis, and since there are
    no $g_{n}, h_{n}$, we have that for each solution 
    $((M'_{\tilde{\imath}})_{\tilde{\imath} \in [2(p-1)n]}, (D'_{\tilde{\imath}})_{\tilde{\imath} \in [2(p-1)n-1]})$
    for instance $I_{p - 1, n-1}$, for each
    $r \in [p - 1]$, there is an $i_r \in [n]$ such that
    $\{u_{i_r}^r\}$ and $\{v_{i_r}^r\}$ form singleton clusters in $G_{2(p - 1)n} \oplus M'_{2(p - 1)n}$.

    Fix an arbitrary solution~$S=((M_{\tilde{\imath}})_{\tilde{\imath} \in [2(p-1)n+4]}, (D_{\tilde{\imath}})_{\tilde{\imath} \in [2(p-1)n+3]})$ for
    $I_{p, 1}$. By
    the induction hypothesis, $I_{p - 1, n}$ is sane in layer
    $2(p - 1)n$ for $u_n^{p - 1}, v_n^{p - 1}$, that is, the editing
    budget is saturated in that layer by every solution. Since
    $S$ induces a solution for
    $I_{p - 1, n-1}$ and \selcons\ did not change the editing budgets
    for preexisting layers, $|M_{2(p - 1)n}| = k_{2(p - 1)n}$. Since
    $k_{2(p - 1)n + 1} = k_{2(p - 1)n}$, $d_{2(p - 1)n}=0$, and layer $2(p - 1)n + 1$ was
    taken to be an identical copy of layer~$2(p - 1)n$, we have
    $|M_{2(p - 1)n + 1}| = k_{2(p - 1)n + 1}$ as well. Since the only
    difference between layer $2(p - 1)n + 1$ and $2(p - 1)n + 2$
    before the application of \selcons\ is the introduction of the
    edge $\{g_1, h_1\}$ and an incremented editing budget, by
    consistency between layers $2(p - 1)n + 1$ and $2(p - 1)n + 2$ and
    because $d_{2(p - 1)n + 1} = 0$, we have
    $M_{2(p - 1)n + 2} = M_{2(p - 1)n + 1} \cup \{\{g_1, h_1\}\}$. Thus,
    the instance directly before applying \selcons\ is sane in layer
    $2(p - 1)n + 2$ for $g_1, h_1$. Thus, by \cref{lem:selcons}~(iii),
    property~(a) holds. 
    
    As each solution for $I_{p, 1}$ implies one
    for~$I_{p - 1, n-1}$, singleton clusters $\{u_{i_r}^r\}$
    and $\{v_{i_r}^r\}$ as above are present also in $G_{2(p - 1)n} \oplus M_{2(p - 1)n}$
    restricted to the parts present in the 
    instance $I_{p, 1}$. Since $d_{2(p - 1)n} = 0$ and
    $G_{2(p - 1)n} = G_{2(p - 1)n + 1}$, these singleton clusters
    occur also in $G_{2(p - 1)n + 1} \oplus M_{2(p - 1)n+1}$. Similarly, since
    $d_{2(p - 1)n + 1} = 0$, they also occur in $G_{2(p - 1)n + 2}\oplus M_{2(p - 1)n+2}$.
    Thus by \cref{lem:selcons}~(iii), $\{u_{i_r}^r\}$ and $\{v_{i_r}^r\}$ also
    form singleton clusters in $G_{2(p - 1)n + 4}\oplus M_{2(p - 1)n+4}$. This implies
    property~(b).

    Since $g_1, h_1$ form singleton clusters in
    any solution, by \cref{lem:selcons}~(iii), also the property~(c) holds for $i=1$.

    Suppose $i > 1$. Note that the instance directly before applying
    \selcons\ is $I_{p, i - 1}$. From property~(a) in the induction
    hypothesis, by \cref{lem:selcons}~(iii), property~(a) follows.
    Furthermore, from property~(b) in the induction hypothesis and
    since any solution  $((M_{\tilde{\imath}})_{\tilde{\imath} \in [2(p-1)n+2i+2]}, (D_{\tilde{\imath}})_{\tilde{\imath} \in [2(p-1)n+2i+1]})$ 
    for $I_{p, i}$ induces one for~$I_{p, i - 1}$
    as before, we know that for each~$r \in [p - 1]$, there is an
    $i_r \in [n]$ such that $\{u_{i_r}^r\}$ and $\{v_{i_r}^r\}$ form
    singleton clusters in~$G_{2(p - 1)n + 2i} \oplus M_{2(p - 1)n + 2i}$ in
    instance~$I_{p, i}$. Hence, again by \cref{lem:selcons}~(iii) $\{u_{i_r}^r\}$ and
    $\{v_{i_r}^r\}$ form singleton clusters
    in~$G_{2(p - 1)n + 2i + 2} \oplus M_{2(p - 1)n + 2i + 2}$. 
    Thus, property~(b) holds. As
    mentioned, $I_{p, i}$ is sane in layer $2(p - 1)n + 2i$
    for~$g_{i}, h_{i}$. If $\{g_{i}, h_{i}\}$ forms a cluster
    in~$G_{2(p - 1)n + 2i} \oplus M_{2(p - 1)n + 2i}$, 
    then, by property~(c) of the induction
    hypothesis, there is an $i_p \in [i - 1]$ such that
    $\{u_{i_p}^p\}$ and $\{v_{i_p}^p\}$ form clusters
    in~$G_{2(p - 1)n + 2i} \oplus M_{2(p - 1)n + 2i}$. By the same arguments as before, they
    also form clusters in $G_{2(p - 1)n + 2i+2} \oplus M_{2(p - 1)n + 2i+2}$, giving
    property~(c). If $\{g_{i}, h_{i}\}$ does not form a cluster in
    $G_{2(p - 1)n + 2i} \oplus M_{2(p - 1)n + 2i}$, then by sanity, $\{g_i\}$ and $\{h_i\}$ form
    clusters. By \cref{lem:selcons}~(iii) we thus obtain property~(c).
    
    Hence, applying the claim for $p=s$ and $i=n$, and since there are
    no $g_{n}, h_{n}$, we have that for each solution 
    $((M'_{\tilde{\imath}})_{\tilde{\imath} \in [2sn]}, (D'_{\tilde{\imath}})_{\tilde{\imath} \in [2sn-1]})$
    for instance $I_{s, n-1}$, in $G_{2sn} \oplus M'_{2sn}$, for each
    $r \in [s]$, there is an $i_r \in [n]$ such that
    $\{u_{i_r}^r\}$ and $\{v_{i_r}^r\}$ form singleton clusters.
    As each solution $((M'_{\tilde{\imath}})_{\tilde{\imath} \in [2sn+1]}, 
    (D'_{\tilde{\imath}})_{\tilde{\imath} \in [2sn]})$ for the resulting instance implies one
    for~$I_{s, n-1}$, the singleton clusters $\{u_{i_r}^r\}$
    and $\{v_{i_r}^r\}$ are present also in $G_{2sn} \oplus M_{2sn}$
    restricted to the parts present in the 
    instance $I_{s, n-1}$. Since $d_{2(p - 1)n} = 0$ and
    $G_{2sn} = G_{2sn + 1}$, these singleton clusters
    occur also in $G_{2sn + 1} \oplus M_{2sn+1}$, finishing the proof.    
  \end{proof}

  \cref{lem:wh-selection} implies that each solution of the instance
  constructed so far corresponds to some selection of one vertex in
  the graph~$G$ of the \MIS\ instance for each color in~$[s]$. 
 
  \proofparagraph{Verification.} To ensure that the selected vertices
  correspond to an independent set, we construct verification gadgets
  in the layers $2sn + 1, \ldots, 2sn + 3m$ as follows, where $m$
  is the number of edges in~$G$. Introduce layers
  $G_{2sn + 2}, \ldots, G_{2sn + 3m}$ as identical copies of
  $G_{2sn + 1}$. Enumerate the edges in $E(G)$ in an arbitrary order
  as $e_1, \ldots, e_m$. For each $j \in [m]$, let
  $e_j = \{a_{\alpha}^p, a_{\beta}^q\}$ where $a_{\alpha}^p \in V_p$
  and $a_{\beta}^q \in V_q$. Introduce the
  vertex~$w^{p,q}_{\alpha, \beta} = w^{q, p}_{\beta, \alpha}$ into all
  layers, and introduce into layer $2sn + 3j - 1$ the edges
  $\{w^{p, q}_{\alpha, \beta}, u^p_\alpha\}$,
  $\{w^{p, q}_{\alpha, \beta}, v^p_\alpha\}$,
  $\{w^{p, q}_{\alpha, \beta}, u^q_\beta\}$,
  and~$\{w^{p, q}_{\alpha, \beta}, v^q_\beta\}$. Furthermore,
  introduce a clique $C^{p, q}_{\alpha, \beta}$ with $2k + d + 1$ new
  vertices into all layers, and make $w^{p, q}_{\alpha, \beta}$
  adjacent to all vertices of $C^{p, q}_{\alpha, \beta}$ in each layer
  except for $2sn + 3j - 1$. Put
  $k_{2sn + 3j - 2} = s$, $k_{2sn + 3j - 1} = s + 2$,
  $k_{2sn + 3j} = s$, $d_{2sn + 3j-3} = 0$,
  $d_{2sn + 3j - 2} = 1$, and $d_{2sn + 3j - 1} = 1$ 
  (note that $k_{2sn+1}$ was already set to $s$ and $d_{2sn}$ to $0$ before). 
  This completes the construction of the
  verification gadgets.

  Intuitively, the budget of~$s$ edge edits corresponds to the
  vertices~$i_p$ chosen into the independent set, whose
  edges~$\{u^p_{i_p}, v^p_{i_p}\}$ we delete. Since we have only two
  additional edits in layer $2sn + 3j - 1$ and
  $w^{p, q}_{\alpha, \beta}$ is the center of a star with four leaves,
  we have to choose whether $w^{p, q}_{\alpha, \beta}$ forms a cluster
  with $u^p_{\alpha}, v^p_\alpha$ or with $u^q_\beta, v^q_\beta$. That
  is, one of these pairs must already be a cluster, meaning that the
  corresponding edge must not be deleted. Hence the corresponding
  vertex has not been chosen into the independent set.

  \begin{lem}\label[lemma]{lem:veri-cons}
    If there is a sequence~$(i_p)_{p \in [s]}$ of integers in $[n]$
    such that the corresponding vertices~$a_{i_p}^p \in V_p$ induce an
    independent set in~$G$, then there is a solution to the
    above-constructed instance of \TCEwIB.
  \end{lem}
  \begin{proof}
    Let $I'$ be the instance before constructing the verification
    gadgets and $I$ the instance afterwards. From
    \cref{lem:wh-selection}~(i) we know that there is a
    solution~$((M'_i)_{i \in [2sn+1]}, (D'_i)_{i \in [2sn]})$ for $I'$ such that, for each $p \in [s]$, $\{u_{i_p}^p\}$ and
    $\{v_{i_p}^p\}$ form clusters in
    $G_{2sn + 1}\oplus M'_{2sn+1}$ and, for each
    $i \in [n] \setminus \{i_p\}$, $\{u_i^p, v_i^p\}$ forms a cluster in
    $G_{2sn + 1}\oplus M'_{2sn+1}$.
    We modify this solution to obtain a solution 
    $((M_i)_{i \in [2sn+3m]}, (D_i)_{i \in [2sn+3m-1]})$ for~$I$ as follows.

    Initially, for each $i \in [2sn+1]$ put $M_i=M'_i$, 
    for each $i \in \{2sn + 2, \ldots, 2sn + 3m\}$ put $M_i = M_{2sn + 1}$, and 
    for each $i \in [2sn]$ put $D_i=D'_i$.
    Note that $d_{2sn}=0$ and hence $D_{2sn}=D'_{2sn}=\emptyset$.
    Observe that, for each
    $j \in [m]$, letting $e_j = \{a_{\alpha}^p, a_{\beta}^q\}$, either
    $\{u^p_{\alpha}, v^p_{\alpha}\} \notin M_i$ or
    $\{u^p_{\beta}, v^p_{\beta}\} \notin M_i$.

    For each~$j \in [m]$, let $i = 2sn + 3j - 1$, and, if
    $\{u^p_{\alpha}, v^p_{\alpha}\} \notin M_i$, i.e., 
    $\{u^p_{\alpha}, v^p_{\alpha}\}$ is an edge of $G_i \oplus M_i$, then put
    $\{w^{p, q}_{\alpha, \beta}, u^q_\beta\}, \{w^{p, q}_{\alpha,
      \beta}, v^q_\beta\} \in M_{i}$ and, otherwise, if
    $\{u^q_{\beta}, v^q_{\beta}\} \notin M_i$, put
    $\{w^{p, q}_{\alpha, \beta}, u^p_\alpha\}, \{w^{p, q}_{\alpha,
      \beta}, v^p_\alpha\} \in M_{i}$.

    For each~$j \in [m]$, put $D_{2sn + 3j-3}= \emptyset$ and
    $D_{2sn + 3j - 2}= D_{2sn + 3j - 1}= \{w^{p, q}_{\alpha, \beta}\}$. Clearly,
    the budget constraints are satisfied, each layer induces a
    cluster graph, and consecutive layers are consistent.
  \end{proof}

  Finally, we show that the verification gadgets work as they should.

  \begin{lem}\label[claim]{cl:veri-force}
    If the above-constructed instance of \TCEwIB\ admits a solution, then
    there is a sequence~$(i_p)_{p \in [s]}$ of integers in~$[n]$ such
    that the corresponding vertices $a_{i_p}^p \in V_p$ induce an
    independent set in~$G$.
  \end{lem}
  \begin{proof}
    Let $I'$ be the instance of \TCEwIB\ before
    constructing the verification gadgets and $I$ the instance
    afterwards. By \cref{lem:wh-selection}~(ii), for each solution 
    $((M'_i)_{i \in [2sn+1]}, (D'_i)_{i \in [2sn]})$ for
    $I'$, there is a sequence~$(i_p)_{p \in [s]}$ of integers in~$[n]$
    such that $\{u_{i_p}^p\}$ and $\{v_{i_p}^p\}$ form singleton
    clusters in~$G_{2sn + 1} \oplus M'_{2sn+1}$. Since each solution for~$I$, when
    restricted to layers up to~$2sn + 1$, induces a solution for~$I'$,
    the same holds true for solutions for~$I$.

    Let $((M_i)_{i \in [2sn+3m]}, (D_i)_{i \in [2sn+3m-1]})$ be a solution to $I$.
    Observe that, for each $j \in [m]$, we have $D_{2sn+3j-3}=\emptyset$ as 
    $d_{2sn+3j-3}=0$ and, due to the connections between
    $w^{p, q}_{\alpha, \beta}$ and $C^{p, q}_{\alpha, \beta}$, we have
    $w^{p, q}_{\alpha, \beta} \in D_{2sn + 3j - 2}$ and
    $w^{p, q}_{\alpha, \beta} \in D_{2sn + 3j - 1}$. 
    Since $d_{2sn+3j-2}=d_{2sn+3j-1}=1$, we have $D_{2sn+3j-2}=D_{2sn+3j-1}= \{w^{p, q}_{\alpha, \beta}\}$.
    Hence, due to
    consistency, we have, for each
    $i \in \{2sn + 2, \ldots, 2sn + 3m\}$,
    $\{u_{i_p}^p, v_{i_p}^p\} \in M_i$.

    Now suppose that, for some $i_p$ and $i_r$, $p, r \in [s]$, we
    have that $a_{i_p}^p$ and $a_{i_r}^r$ are adjacent in~$G$, say via
    edge~$e_j$. Recall that $G_{2sn + 3j - 1} \oplus M_{2sn + 3j - 1}$ is a cluster
    graph and at least $s$ modifications in~$M_{2sn + 3j - 1}$
    are nonadjacent to $w^{p, r}_{i_p, i_r}$. 
    As there are at most two more edits allowed, at least two edges incident on $w^{p, r}_{i_p, i_r}$ remain,
    which form a $P_3$ as no two neighbors of $w^{p, r}_{i_p, i_r}$ are adjacent 
    due to the deletions $\{u_{i_p}^p, v_{i_p}^p\}, \{u_{i_r}^r, v_{i_r}^r\} \in M_i$.
    This is a contradiction, and hence, no two vertices~$a_{i_p}^p$,
    $a_{i_r}^r$, $p, r \in[s]$ are adjacent.
  \end{proof}
  
  \proofparagraph{Conclusion.} The above shows that the \MIS\ instance
  is a yes-instance if and only if the instance of \TCEwIB\ is a yes-instance.

  It remains to show how to realize the layer-individual budgets with
  global maximum budgets. Assume that we have a global maximum budget
  of~$k$ and $d$, respectively. To reduce the number of available
  marked vertices in layer~$j$, $j < 2sn +3m$, by one, introduce two cliques~$C_1, C_2$, each
  containing $2k + d + 1$ new vertices, and a vertex~$v$ into each
  layer. Make $v$ adjacent to all vertices in~$C_1$ in each layer
  $i \leq j$ and make $v$ adjacent to all vertices in~$C_2$ in each
  layer $i > j$.

  To reduce the edit budget in layer $i$ by one, introduce a clique
  containing $2k + d + 1$ new vertices into all layers and remove an
  edge from this clique in layer~$i$.

  Finally, observe that $d \leq 3$ and $k = s + 2$ in the resulting
  instance of \TCE. Thus, \TCE\ is \W{1}-hard with respect to~$k$ even
  when $d \leq 3$. Furthermore, the number of vertices is at most
  $(2k + d + 1)(4sn + k + m + (d+k)(2sn+3m)) \leq f(s)(n + m)$ and
  $\ell \leq 2sn + 3m$. Hence, as \MIS\ does not admit an
  $f(s)n^{o(s)}$-time algorithm unless the Exponential Time Hypothesis
  fails~\cite[Corollary 14.23]{CyganFKLMPPS15}, it follows that \TCE\
  does not admit an $f(k)(n\ell)^{o(k)}$-time algorithm unless the
  Exponential Time Hypothesis fails.
\end{proof}



\oldstuff{
\begin{theorem}
\TCE\ admits a polynomial kernel for the parameter combination $(k+d+\ell)$.
\end{theorem}
\begin{proof}[Proof Sketch]
Initial remark: Many of these rules base on the observation, that if we remove all vertices that are ever marked, the remaining cluster graphs are the same.
Furthermore, we also need a problem version where we have individual edge modification budgets for each layer. Again, each rule assumes that all previous rules are not applicable.
\begin{itemize}
  \item We apply \autoref{rul:negat}, \autoref{rul:heavy_edge}, \autoref{rul:heavy_non}, and \autoref{rul:dirty}.
  \item Now every layer $i$ looks as follows: We have a kernel (the set $R_i$ from \autoref{rul:dirty}), and a number of isolated cliques.
  \item \emph{New RR 1:} If a clique exists (meaning it is the same maximal clique) in all layers, we can safely remove it. (This is probably the same as \autoref{rul:same}.)
  \item We divide the cliques into two classes, ``big cliques'' and ``small cliques''.
  \item A clique is ``small'' if its size is smaller than $(k+d+1)\cdot\ell$, otherwise it is ``big''.
  \item \emph{New RR 2:} If there is a set of big cliques $C_1,\ldots,C_\ell$ (one for each layer) with $|\bigcap_i C_i| >(k+d+1)\cdot\ell$, reduce the size of this intersection to $(k+d+1)\cdot\ell$.
  
  \emph{This should be ok since the intersection is still large enough that we still cannot do things that could not do before.}
  \item \emph{New RR 3:} If there are two big cliques $C_i$, $C_j$ with $|C_i\cap C_j|\ge (k+d+1)\cdot\ell$ and $|C_i\oplus C_j|\ge (k+d+1)\cdot\ell$, answer NO.
  
  \emph{We can neither separate them by marking vertices no matter how far they are apart in time, nor can we make them the same.}
  \item \emph{New RR 4:} If more than $(k+d+1)\cdot \ell$ big cliques remain in one of the layers, answer NO.
  
  \emph{For each of these cliques we need to mark at least one vertex at least once (i.e.\ in one time step). This is too much..}
  \item In all following rules, we treat the kernel of each layer as a small clique.
  \item \emph{New RR 5:} If a big clique in one layer intersects more than $(k+d+1)\cdot \ell$ other \emph{pairwise different} (small or big) cliques, answer NO.
  
  \emph{For each of these cliques we need to mark (or modify) at least one vertex (or vertex pair) at least once. This is too much..}
  \item \emph{New RR 6:} If there is a big clique that is larger than $(k+d+1)\cdot(\ell^2+1)$, answer NO.
  
  \emph{Here I hope that this contradicts that New RRs 1,2,3 are not applicable.}
  \item \emph{New RR 7:} If more than $(k+d+1)\cdot \ell$ pairwise non-intersecting sets of size $\ell$ of intersecting small cliques remain, answer NO.
  
  \emph{For each of these clique sets we need to mark (or modify) at least one vertex (or vertex pair) at least once. This is too much..}
  \item This should produce a kernel where the number of vertices is at most $\ell\cdot(((k+d+1)\cdot \ell)\cdot((k+d+1)\cdot(\ell^2+1)+(k^2+2k+d+1))+((k+d+1)\cdot \ell)\cdot(k^2+2k+d+1)+k^2+2k)$.
\end{itemize}

\end{proof}}

\section{Kernelization for \MLCE{} and \TCE{}}
\label{sec:kernel}
\appendixsection{sec:kernel}

In this section we investigate the kernelizability of \MLCE{} and \TCE{} for different combinations of the four parameters as introduced in \cref{sec:intro}.
More specifically, we identify the parameter combinations for which \MLCE{} and \TCE{} admit polynomial kernels, and then we identify the parameter combinations for which no polynomial kernels exist, unless \NoKernelAssume.

We start with presenting a polynomial kernel for \MLCE{} for the parameter combination $(k, d, \ell)$ and then argue that essentially the same reduction rules also produce a polynomial kernel for \TCE{}. 
\begin{theorem}
\label{thm:pk}
 \MLCE{} admits a kernel of size $O(\ell^3\cdot(k+d)^4)$ and \TCE{} admits a kernel of size $O(\ell^3\cdot(k+d\cdot\ell)^4)$. Both kernels can be computed in $O(\ell\cdot n^3)$ time.
\end{theorem}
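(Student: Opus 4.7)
The plan is to generalize the classical $O(k^2)$-vertex kernel of Gramm, Guo, H\"uffner, and Niedermeier~\cite{GrammGHN05} for \CE{} to the multi-layer and temporal settings, handling the interaction between the per-layer edit budget $k$ and the marking budget $d$. Let $d^{\star}$ denote an ``effective'' marking budget defined as $d^{\star} = d$ for \MLCE{} and $d^{\star} = d \cdot \ell$ for \TCE{}, reflecting that \TCE{} permits a fresh set of up to $d$ marked vertices at each of the $\ell - 1$ consecutive layer transitions. With this unification, the goal is to reduce each individual layer to $O((k+d^{\star})^2)$ vertices so that the union across all layers has $O(\ell (k+d^{\star})^2)$ vertices, encoded in $O(\ell^3 (k+d^{\star})^4)$ bits, matching the bounds claimed in the theorem.

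I would use three kinds of reduction rules. First, simple cleanup: delete any vertex that is isolated in every layer, and delete any connected component that forms the same isolated clique in every layer, since such a component requires no edits or markings in any solution. Second, per-layer classical cluster-editing rules: on each layer independently, apply the three rules of~\cite{GrammGHN05}, with the bound on induced-$P_3$ participation per vertex pair raised from $k$ to $k+d^{\star}$. The raised bound is safe because every induced $P_3$ in layer $i$ must be destroyed in any solution either by editing one of its three pairs (charged against $k$) or by marking one of its three vertices (charged against $d^{\star}$). Third, a cross-layer coordination rule that contracts ``reconciliation'' cliques: any clique that is isolated in some layer but not isolated with an identical vertex set in all layers is shrunk in size, analogously to the way classical cluster-editing contracts large cliques with bounded deviation from the target.

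After exhaustive application, the standard counting argument of~\cite{GrammGHN05} shows that in each layer the number of vertices incident with some induced $P_3$ is at most $(k+d^{\star})^2 + 2(k+d^{\star})$, and the coordination rule ensures that the remaining per-layer isolated cliques contribute at most $O((k+d^{\star})^2)$ additional vertices. Summing over all layers yields $O(\ell (k+d^{\star})^2)$ vertices overall, and hence the asserted size bound once we encode the $\ell$ layers as adjacency data. The running time is dominated by the per-layer classical rules at $O(n^3)$ each~\cite{GrammGHN05}, giving $O(\ell \cdot n^3)$ in total.

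The main technical obstacle will be proving safeness of the reconciliation rule for \TCE{}: since the marking budgets are tied to individual transitions rather than to a single global set, shrinking a clique that spans several layers must not inadvertently tighten the constraint on some $D_j$. For \MLCE{} the single shared marked set $D$ makes the argument largely symmetric across layers, so that one can use an exchange argument almost verbatim from~\cite{GrammGHN05}. For \TCE{}, by contrast, one has to show a careful layer-by-layer exchange: every surviving vertex in a reduced reconciliation clique can still be ``covered'' by a marked vertex at the appropriate transition without exceeding any individual budget $d$. This is where the extra factor of $\ell$ in $d^{\star}$ enters naturally, because a single vertex may need to be accounted for at every transition along a consecutive interval of layers.
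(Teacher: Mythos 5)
Your high-level plan matches the paper's: per-layer versions of the Gramm--Guo--H\"uffner--Niedermeier rules, plus cross-layer rules to prune common cliques, plus a final counting rule, and the observation that \TCE{} reduces to \MLCE{} with effective marking budget $d\cdot\ell$ by taking the union of the per-transition marked sets. That last point is exactly how the paper handles \TCE{}, and you are right that the size bounds then follow by substituting $d^\star$ for $d$.

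However, there is a genuine error in your justification of the per-layer rules, and the cross-layer step is too vague to carry the counting. You raise the $P_3$ threshold from $k$ to $k+d^\star$ on the grounds that ``every induced $P_3$ in layer $i$ must be destroyed in any solution either by editing one of its three pairs or by marking one of its three vertices.'' This is false under the problem definition: each $M_i$ is required to be a cluster editing set for the \emph{entire} layer $G_i$, so every $P_3$ in $G_i$ must be destroyed by edits from $M_i$ alone; marking only relaxes the consistency constraint between layers, not the per-layer cluster-graph constraint. The paper therefore uses threshold $k_i$ (working in the separate-budget formulation \MLCES{} because the heavy-pair rules decrement the budget), not $k+d$. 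Your raised threshold happens to remain safe because it is strictly weaker than the paper's rule, but your stated reason for it is wrong, and this misconception would likely leak into the safeness proof of any cross-layer rule you later try to formalize.

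The second gap is that the ``reconciliation clique'' rule is not a rule but an intention, and it is not the same as what the paper does. The paper needs three distinct cross-layer rules with nontrivial safeness proofs: remove a component that is identical in every layer (\cref{rul:same}); remove a vertex from a set $A\subseteq V\setminus R$ of size at least $k+d+3$ whose vertices all lie in a common component in every layer, using a twin-exchange argument (\cref{rul:core}); and reject if any component of any layer has at least $k+2d+3$ vertices outside $R$ (\cref{rul:big}). The final counting rule (\cref{rul:many}) then partitions the surviving vertices into those in some $P_3$, those near a marked vertex, and those incident to an edit, and uses inapplicability of \cref{rul:same} on anything left over. Your proposal asserts the resulting $O(\ell(k+d^\star)^2)$ vertex bound without any analogue of \cref{rul:core}, \cref{rul:big}, or this case analysis, so the claimed bound is not actually derived. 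In short: right skeleton, wrong justification for the per-layer rules, and the cross-layer pruning and counting --- which are the actual crux --- are missing.
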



\looseness=-1 We provide several reduction rules that subsequently modify the instance and we assume that if a particular rule is to be applied, then the instance is reduced with respect to all previous rules, that is, all previous rules were already exhaustively applied. 
%
\newcommand{\MLCESlong}{\textsc{Multi-Layer Cluster Editing with Separate Budgets}}%
\newcommand{\MLCES}{\textsc{MLCEwSB}}%
To keep track of the budget in the individual layers we
introduce \MLCESlong\ (\MLCES) which differs from \MLCE\ only in that, instead of a global upper bound~$k$ on the number of edits, we receive $\ell$ individual budgets~$k_i$, $i \in [\ell]$, and we require that $|M_i| \leq k_i$.

We first transform the input instance of \MLCE{} to an equivalent instance of \MLCES{} by letting~$k_i=k$ for every $i \in [\ell]$. Then we apply all our reduction rules to \MLCES{}. Finally, we show how to transform the resulting instance of \MLCES{} to an equivalent instance of \MLCE{} with just a small increase of the vertex set. 

Through the presentation, let $(G_1=(V,E_1), \ldots, G_\ell = (V, E_\ell),k_1,\ldots,k_\ell,d)$ be the current instance. We let $k=\max\{k_i\mid i \in [\ell]\}$.

%
The following rules represent well known rules for classical \textsc{Cluster Editing}~\cite{GrammGHN05} applied to the individual layers of the multi-layer graph. The first rule formalizes the obvious constraint on the solvability of the instance. We omit a proof of correctness for this rule.

\begin{rrule}\label{rul:negat}
 If there is a layer $i \in [\ell]$ such that $k_i<0$, then answer NO.
\end{rrule}


\begin{rrule}\label{rul:heavy_edge}
 If there is a layer $i \in [\ell]$ and an edge $\{u,v\}\in E_i$ in layer~$i$
 such that $G_i$ contains at least $k_i+1$ different induced $P_3$s 
 each of which contains the edge~$\{u,v\}$,
 then remove $\{u,v\}$ from $E_i$ and decrease $k_i$ by one.
\end{rrule}
\begin{lem}\label{lem:heavy_edge}
 \autoref{rul:heavy_edge} is correct.
\end{lem}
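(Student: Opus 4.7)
The plan is to prove that the reduced instance $I' = (G_1,\ldots,G_i \oplus \{\{u,v\}\},\ldots,G_\ell, k_1,\ldots,k_i-1,\ldots,k_\ell, d)$ is equivalent to the original instance $I = (G_1,\ldots,G_\ell,k_1,\ldots,k_\ell,d)$ of \MLCES{}. Since both instances differ only in layer $i$, and the global marking set $D$ ties the layers together solely through the consistency condition, it suffices to argue that edge modification sets in layer $i$ can be translated back and forth without disturbing $D$ or the other layers.

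For the direction $I \Rightarrow I'$, I would take any solution $(M_1,\ldots,M_\ell,D)$ for $I$ and first argue that $\{u,v\} \in M_i$. Every induced $P_3$ in $G_i$ containing $\{u,v\}$ has vertex set $\{u,v,w_j\}$ for some third vertex $w_j$, and the $k_i+1$ such $P_3$s in the hypothesis are distinct as subgraphs, so the $w_j$'s are pairwise distinct. The three vertex pairs of such a $P_3$ are $\{u,v\}$, $\{u,w_j\}$, and $\{v,w_j\}$. Since $M_i$ must destroy every induced $P_3$ of $G_i$ (so that $G_i \oplus M_i$ is a cluster graph), if $\{u,v\}\notin M_i$, then for each $j$ we need $\{u,w_j\}\in M_i$ or $\{v,w_j\}\in M_i$. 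These $k_i+1$ required pairs are pairwise distinct due to the distinctness of the $w_j$'s, forcing $|M_i|\geq k_i+1$ and contradicting $|M_i|\leq k_i$. Hence $\{u,v\}\in M_i$, and setting $M_i^\star = M_i\setminus\{\{u,v\}\}$ while keeping $D$ and the other $M_j$'s gives a solution of $I'$, because $(G_i\oplus\{\{u,v\}\})\oplus M_i^\star = G_i\oplus M_i$, so the cluster graph in layer $i$ is the same as under the original solution and the totally $d$-consistency is preserved.

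For the direction $I' \Rightarrow I$, given a solution $(M_1',\ldots,M_\ell',D')$ for $I'$, I would set $M_i^\star := M_i'\oplus\{\{u,v\}\}$ and leave the other $M_j'$'s and $D'$ untouched. A symmetric difference calculation yields $G_i\oplus M_i^\star = (G_i\oplus\{\{u,v\}\})\oplus M_i'$, which is exactly the cluster graph produced in layer $i$ by the solution for $I'$; hence both the cluster-graph property in layer $i$ and the totally $d$-consistency with respect to $D'$ carry over. A short case distinction on whether $\{u,v\}\in M_i'$ shows $|M_i^\star|\leq|M_i'|+1\leq k_i$ either way.

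The main obstacle I expect is the $P_3$-counting step in the forward direction: I must pin down carefully that each of the $k_i+1$ ``repair'' modifications forced by not touching $\{u,v\}$ is distinct from the others. This rests on the fact that the rule counts induced $P_3$s as subgraphs, so the third vertex $w_j$ is different across them, making the pairs $\{u,w_j\}$ and $\{v,w_j\}$ exclusive to the $P_3$ on $\{u,v,w_j\}$ within our family. Once this distinctness is established, the remaining verifications are routine symmetric-difference manipulations.
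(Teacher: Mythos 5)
Your proof is correct and follows essentially the same strategy as the paper: first establish, via the $k_i+1$ distinct third vertices $w_j$ and the fact that only $\{u,v\}$ is shared between different $\{u,v,w_j\}$-triples, that any solution of the original instance must contain $\{u,v\}\in M_i$; then translate solutions across by dropping (respectively toggling) $\{u,v\}$ in $M_i$ while leaving $D$ and the other layers unchanged, using that the resulting cluster graph in layer $i$ is identical. Your use of $M_i^\star = M_i'\oplus\{\{u,v\}\}$ in the reverse direction is slightly more careful than the paper's informal ``add $\{u,v\}$'', as it cleanly handles the degenerate case where a solution to the reduced instance wastefully re-adds the deleted edge, but this is a cosmetic difference rather than a different argument.
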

\begin{proof}
Let $I=(G_1, \ldots, G_\ell,k_1,\ldots, k_\ell,d)$ be the original instance and $\widehat{I}=(G_1, \ldots, \widehat{G_i}, \ldots G_\ell,$ $k_1,\ldots$ 
$\ldots, \widehat{k_i},\ldots, k_\ell,d)$ be the instance after the application of the rule, where $G_i=(V,E_i)$, $\widehat{G_i}=(V,\widehat{E_i})$, $\widehat{E_i}=E_i \setminus \{\{u,v\}\}$ and $\widehat{k_i}=k_i-1$ . 
If $\widehat{I}$ is a yes-instance, 
then $I$ is also a yes-instance with the same solution as the one for $\widehat{I}$ and the pair $\{u,v\}$ added.
 
For the converse, assume that $S=(D, M_1, \ldots M_\ell)$ is a solution for $I$ and let $G'_i=(V,E_i\oplus M_i)$. 
We claim that there is also a solution $S'$ for $\widehat{I}$.
Since the input multi-layer graphs in $I$ and $\widehat{I}$ only differ by one edge~$\{u,v\}$,
suppose towards a contradiction that $G'_i$ still contains~$\{u,v\}$, meaning that $\{u,v\}\notin M_i$. 
By the assumptions of the rule we know that there are $k_i+1$ vertices~$w_1, \ldots, w_{k_i+1}$ such that 
for each $i\in[k_i+1]$ the induced subgraph~$G_i[\{u,v,w_j\}]$ is a $P_3$,
which has to be destroy to obtain a cluster graph.
Since $\{u,v\}\in E_i \oplus M_i$,
in order to destroy all $P_3$s, for each $j\in[k_i+1]$ 
we have to either add the absent edge to or delete an existing edge~$e$ (with $e\neq \{u,v\}$) from the induced subgraph~$G[\{u,v,w_j\}]$.
However, since for two different indices~$j_1,j_2\in [k_i+1]$ the pair~$\{u,v\}$ is the only pair of vertices shared between $\{u,v,w_{j_1}\}$ and $\{u,v,w_{j_2}\}$,
we have to modify at least $k_i+1$ edges, a contradiction to $|M_i|\le k$,
Hence $G'_i$ does not contain $\{u,v\}$, i.e., $\{u,v\} \in M_i$ and $S'$ obtained from $S$ by replacing $M_i$ with $M_i \setminus \{\{u,v\}\}$ is a solution to~$\widehat{I}$.
\end{proof}
\begin{rrule}\label{rul:heavy_non}
 If there is a layer $i \in [\ell]$ and a pair~$\{u,v\} \in V$ of vertices
 with $\{u,v\} \notin E_i$ (a non-edge) in layer~$i$ 
 such that $G_i$ contains at least $k_i+1$ different induced $P_3$s 
 each of which involves both $u$ and $v$,
 then add $\{u,v\}$ to $E_i$ and decrease $k_i$ by one.
\end{rrule}
\begin{lem}
\label{lem:heavy_non}
 \autoref{rul:heavy_non} is correct.
\end{lem}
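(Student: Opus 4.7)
The plan is to prove \autoref{rul:heavy_non} by a direct adaptation of the argument for \autoref{rul:heavy_edge} (\autoref{lem:heavy_edge}), since the non-edge case is symmetric to the edge case under the complementation induced by the symmetric difference. Let $I=(G_1, \ldots, G_\ell, k_1, \ldots, k_\ell, d)$ be the original instance and let $\widehat{I}=(G_1, \ldots, \widehat{G_i}, \ldots, G_\ell, k_1, \ldots, \widehat{k_i}, \ldots, k_\ell, d)$ be the instance after the rule, where $\widehat{E_i} = E_i \cup \{\{u,v\}\}$ and $\widehat{k_i} = k_i - 1$.

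For the first direction, I would start from a solution $\widehat{S}=(D, \widehat{M}_1, \ldots, \widehat{M}_\ell)$ of $\widehat{I}$ and construct a solution of $I$ by setting $M_i = \widehat{M}_i \oplus \{\{u,v\}\}$ and $M_j = \widehat{M}_j$ for $j \neq i$. Since $\widehat{E_i} = E_i \oplus \{\{u,v\}\}$, we have $E_i \oplus M_i = \widehat{E_i} \oplus \widehat{M}_i$, so the edited layer is unchanged and the resulting multi-layer graph is aligned in exactly the same way as before. Moreover $|M_i| \leq |\widehat{M}_i| + 1 \leq \widehat{k_i}+1 = k_i$, so the budget is respected.

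The more interesting direction is to go from a solution $S=(D, M_1, \ldots, M_\ell)$ of $I$ to a solution of $\widehat{I}$. Here the key step, and the only one that requires actual work, is to show $\{u,v\} \in M_i$. I would argue by contradiction: if $\{u,v\} \notin M_i$, then $\{u,v\}$ is still a non-edge in $G'_i=(V, E_i \oplus M_i)$. Let $w_1, \ldots, w_{k_i+1}$ be the vertices for which $G_i[\{u,v,w_j\}]$ is an induced $P_3$ (which must pass through $\{u,v\}$ as the non-edge, so its present edges are $\{u,w_j\}$ and $\{v,w_j\}$). Each such triple must fail to induce a $P_3$ in~$G'_i$, so at least one of $\{u,w_j\}, \{v,w_j\}$ lies in~$M_i$. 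Because the~$w_j$ are pairwise distinct, the sets $\{\{u,w_j\}, \{v,w_j\}\}$ are pairwise disjoint for different~$j$, yielding $|M_i|\ge k_i+1$, which contradicts $|M_i|\le k_i$.

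Having established $\{u,v\}\in M_i$, I would finish by setting $\widehat{M}_i = M_i \setminus \{\{u,v\}\} = M_i \oplus \{\{u,v\}\}$ and $\widehat{M}_j = M_j$ for $j \neq i$. Then $|\widehat{M}_i| = |M_i|-1 \le k_i-1 = \widehat{k_i}$ and $\widehat{E_i} \oplus \widehat{M}_i = E_i \oplus \{\{u,v\}\} \oplus M_i \oplus \{\{u,v\}\} = E_i \oplus M_i$, so every layer of $\widehat{I}$ is modified into the same cluster graph, up to removal of marked vertices, as in $S$, witnessing that $\widehat{I}$ is a yes-instance. The only real obstacle is the counting argument in the backward direction, but it is essentially the same as in \autoref{lem:heavy_edge} with the roles of edges and non-edges swapped, so no new ideas are needed.
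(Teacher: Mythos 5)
Your proposal is correct and takes essentially the same approach as the paper, which proves \autoref{rul:heavy_non} by noting it is analogous to the proof of \autoref{lem:heavy_edge} with $\widehat{E_i}=E_i\cup\{\{u,v\}\}$, and in the backward direction deriving a contradiction from $\{u,v\}\notin M_i$ exactly as you do via the disjointness of the pair sets $\{\{u,w_j\},\{v,w_j\}\}$. Your write-up is a bit more careful than the paper's in the forward direction (using the symmetric difference $\widehat{M}_i\oplus\{\{u,v\}\}$ rather than simply ``adding'' the pair), but the underlying argument is identical.
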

\begin{proof}
The proof is almost the same as for \autoref{lem:heavy_edge}, the obvious difference is that we assume $\widehat{E_i}=E_i \cup \{\{u,v\}\}$. 
Also in the second implication, supposing that $M_i$ does not contain $\{u,v\}$ leads to a contradiction.
%
\end{proof}

As with the classical \textsc{Cluster Editing} we can bound the number of vertices involved in a $P_3$ in each layer.
Let $R_i \subseteq V$ be the set of the vertices $v$ that appear in some induced $P_3$ in $G_i$.

\begin{rrule}\label{rul:dirty}
 If there is a layer $i \in [\ell]$ such that $|R_i| > k_i^2+2k_i$, then answer~NO.
\end{rrule}

\begin{lem}\label{lem:dirty}
 \autoref{rul:dirty} is correct. 
\end{lem}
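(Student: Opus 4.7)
The plan is to argue by contradiction. Assume that $I$ is a yes-instance with some solution $S = (D, M_1, \ldots, M_\ell)$, yet $|R_i| > k_i^2 + 2k_i$ for some layer $i \in [\ell]$. By the definition of a clustering, $M_i$ alone must turn $G_i$ into a cluster graph (marked vertices do not help here), so the standard analysis behind the classical \textsc{Cluster Editing} kernel of Gramm et al.~\cite{GrammGHN05} applies layer-wise, crucially exploiting that Rules~\ref{rul:heavy_edge} and~\ref{rul:heavy_non} are no longer applicable.

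Let $V_{M_i}$ denote the set of endpoints of pairs in $M_i$, so $|V_{M_i}| \leq 2|M_i| \leq 2k_i$. For each $v \in R_i \setminus V_{M_i}$, by the definition of $R_i$ there is an induced $P_3$ in $G_i$ on some vertex set $\{u, v, w\}$. Since $(V, E_i \oplus M_i)$ is a cluster graph, at least one of the three pairs inside $\{u, v, w\}$ must lie in $M_i$; as $v \notin V_{M_i}$, this pair is forced to be $\{u, w\}$. Hence every vertex in $R_i \setminus V_{M_i}$ can be charged to some pair of $M_i$.

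Next, I would bound, for each fixed $\{u, w\} \in M_i$, the number of vertices $v$ that can be charged to $\{u, w\}$, i.e., those $v$ such that $\{u, v, w\}$ induces a $P_3$ in $G_i$. If $\{u, w\} \in E_i$, then $\{u, w\}$ appears as an edge of the $P_3$, and by non-applicability of \autoref{rul:heavy_edge} there are at most $k_i$ such $P_3$s. If $\{u, w\} \notin E_i$, then $\{u, w\}$ is the unique non-edge of the induced $P_3$, so $v$ is the middle vertex adjacent to both $u$ and $w$; by non-applicability of \autoref{rul:heavy_non} there are again at most $k_i$ such $P_3$s. Summing over the at most $k_i$ pairs in $M_i$ yields $|R_i \setminus V_{M_i}| \leq k_i \cdot k_i$, and therefore $|R_i| \leq k_i^2 + 2k_i$, contradicting the hypothesis that triggered the rule.

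The argument is essentially a layer-wise transcription of the well-known $O(k^2)$-vertex kernelization for \textsc{Cluster Editing}, so no serious obstacle is expected; the only small point requiring care is the case split on whether the charged pair $\{u, w\}$ is an edge or a non-edge in $G_i$, so that the correct one of Rules~\ref{rul:heavy_edge} and~\ref{rul:heavy_non} can be invoked.
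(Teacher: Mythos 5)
Your proof is correct and follows essentially the same charging argument as the paper: every vertex in $R_i$ is either an endpoint of a pair in $M_i$ or the third vertex of a $P_3$ together with a pair in $M_i$, and inapplicability of Rules~\ref{rul:heavy_edge} and~\ref{rul:heavy_non} bounds the latter count by $k_i$ per pair. The paper's proof uses the same decomposition (denoting your charged set for a pair $\{u,v\}$ by $R_{uv}$); you merely spell out the edge/non-edge case split more explicitly, which the paper leaves implicit.
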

\begin{proof}
 Suppose towards a contradiction that $|R_i| > k_i^2+2k_i$ and $I=(G_1, \ldots, G_\ell,$ $k_1, \ldots, k_\ell,d)$ is a yes-instance. 
Let $(D, M_1, \ldots M_\ell)$ be a solution to $I$ and define $G'_i=(V,E(G_i)\oplus M_i)$.
 For each modified edge~$\{u,v\} \in M_i$ denote by $R_{uv}$ the set of vertices~$w$ such that the induced subgraph~$G_i[\{u,v,w\}]$ is a $P_3$. Since the instance is reduced with respect to Reduction Rules~\ref{rul:heavy_edge} and~\ref{rul:heavy_non},
for each modified edge~$\{u,v\}\in M_i$ we have $|R_{uv}| \le k_i$.
Since $G'_i$ is a cluster graph and, thus, does not contain $P_3$ as an induced subgraph, we know that $R_i \subseteq \bigcup_{\{u,v\} \in M_i} (\{u,v\} \cup R_{uv})$. It follows that $|R_i| \le k_i \cdot (2+k_i)=k_i^2+2k_i$---a contradiction.
\end{proof}

As a major difference to \textsc{Cluster Editing} for a single layer, we cannot simply remove the vertices that are not involved in any $P_3$ since we require the cluster graphs in individual layers not to differ too much.
We show that the vertices in the clusters that do not change can be freely removed.

\begin{rrule}
\label{rul:same}
 If there is a subset $A \subseteq V \setminus R$ such that 
 for each layer~$i\in [\ell]$, 
 the subset~$A$ is the vertex set of a connected component of $G_i$, then remove $A$ (and the corresponding edges) from every $G_i$.
\end{rrule}
\appendixcorrectnessproof{rul:same}{
\begin{lem}\label{lem:same}
 \autoref{rul:same} is correct.
\end{lem}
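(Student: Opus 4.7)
The plan is to show the two directions of yes-instance equivalence via a simple reshaping of solutions that ignores $A$ entirely, using that $A$ forms the same isolated clique component in every layer.

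The key observation is that $A$ is a clique component in every $G_i$. Indeed, since $A$ is a connected component of $G_i$, no edge of $G_i$ joins $A$ and $V\setminus A$. Moreover, as $A\cap R_i=\emptyset$, the induced subgraph $G_i[A]$ contains no induced $P_3$; combined with connectedness, this forces $G_i[A]$ to be a clique. Hence each $G_i$ decomposes as $G_i[A] \,\dot\cup\, G_i[V\setminus A]$, and the component $G_i[A]$ is the same clique on~$A$ in every layer.

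For the forward direction, given a solution $(D, M_1,\ldots,M_\ell)$ for the original instance, I would define $D' := D\setminus A$ and $M_i' := M_i\cap\binom{V\setminus A}{2}$ as a candidate solution for the reduced instance. Budgets are trivially respected since these are subsets of $D$ and $M_i$. The graph $(G_i-A)\oplus M_i'$ is exactly $(G_i\oplus M_i)[V\setminus A]$, which is a cluster graph as an induced subgraph of the cluster graph $G_i\oplus M_i$. For consistency in the reduced instance, note that $(V\setminus A)\setminus D' = (V\setminus A)\setminus D$ (since $A$ and $V\setminus A$ are disjoint), so restricting the original consistency $(G_i\oplus M_i)[V\setminus D]=(G_j\oplus M_j)[V\setminus D]$ to the subset $(V\setminus A)\setminus D$ yields exactly what is needed.

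For the backward direction, given a solution $(D', M_1',\ldots,M_\ell')$ for the reduced instance, I would take the same sets $D := D'$ and $M_i := M_i'$ in the original instance. Since $M_i'$ contains no pair incident to $A$ and $A$ is an isolated clique in $G_i$, the graph $G_i\oplus M_i$ is the disjoint union of $(G_i-A)\oplus M_i'$, which is a cluster graph by assumption, and the clique $G_i[A]$, hence itself a cluster graph. For consistency, since $D'\subseteq V\setminus A$, the set $V\setminus D'$ contains all of $A$, and $(G_i\oplus M_i)[V\setminus D']$ splits as the clique on $A$ (identical in every layer) plus $((G_i-A)\oplus M_i')[(V\setminus A)\setminus D']$, which is consistent across layers by the assumed solution.

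The only real subtlety to double-check is the forward direction when $D\cap A\neq\emptyset$: moving to $D'=D\setminus A$ effectively un-marks vertices of $A$, so one must verify that including all of $A$ in the consistent induced subgraph does not break consistency. This is immediate because $A$ forms the same clique component in every layer regardless of whether its vertices are marked, so adding $A$ back contributes an identical connected component in each layer. This is the heart of why removing $A$ is safe, and the rest of the argument is routine set manipulation.
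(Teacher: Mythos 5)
Your proof is correct and matches the paper's approach: both directions transfer the solution in the same way, taking $D' = D\setminus A$ and $M'_i = M_i\cap\binom{V\setminus A}{2}$ in the forward direction and keeping the reduced solution unchanged in the backward direction, with the heart of the argument being that $A$ is the same isolated clique component in every layer. One small framing note: the ``subtlety'' you describe in the closing paragraph — re-including all of $A$ in the consistent induced subgraph — is actually the key step of the backward direction (since $D'\subseteq V\setminus A$ forces $A\subseteq V\setminus D'$), not the forward direction, where $A$ disappears from the instance entirely and the case $D\cap A\neq\emptyset$ is handled simply by $|D\setminus A|\le|D|\le d$.
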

\begin{proof}
Let $I=(G_1, \ldots, G_\ell,k_1, \ldots, k_\ell,d)$ be the original instance and $\widehat{I}=(\widehat{G_1}, \ldots, \widehat{G_\ell}, k_1,$ $\ldots,k_\ell,d)$ be the instance after the application of the rule, where for each $i\in [\ell]$, we have $\widehat{G_i}= G_i[V(G_i)\setminus A]$. 
Since $A \cap R = \emptyset$ we have that $A$ induces a complete subgraph in each layer~$i\in[\ell]$.
Moreover, for each layer~$i\in[\ell]$,
the set~$A$ is the vertex set of a connected component of graph~$G_i$.
Thus,~$G_i[A]$ is a complete connected component in~$G_i$, meaning that $E(G_i)= E(\widehat{G_i})\cup \binom{A}{2}$.


Let $(D, M_1, \ldots, M_\ell)$ be a solution to $I$ 
and let $\widehat{D}=D\setminus A$ and $\widehat{M_i}= M_i \cap \binom{V \setminus A}{2}$ for every $i \in [\ell]$. Then $(\widehat{D}, \widehat{M_1}, \ldots, \widehat{M_\ell})$ forms a solution to $\widehat{I}$.

Conversely, let $\widehat{S}=(\widehat{D}, \widehat{M_1}, \ldots, \widehat{M_\ell})$ be a solution to $\widehat{I}$.
We claim that $\widehat{S}$ is a also solution to $I$. 
Indeed, each $G'_i=(V, E(G_i)\oplus \widehat{M})$ is a cluster graph (note that $A$ is the vertex set of a complete connected component in $G_i$),
$|\widehat{M_i}|\le k_i$ and for all $1\le i, j \le \ell$ we have that 
$E(G_i)\oplus \widehat{M_i} \cap \binom{V\setminus D}{2} = 
 E(G_j)\oplus \widehat{M_j} \cap \binom{V\setminus D}{2}$ since 
 $E(\widehat{G_i}) \oplus \widehat{M_i} \cap \binom{V\setminus (A\cup D)}{2} = 
E(\widehat{G_j})\oplus \widehat{M_j} \cap \binom{V\setminus (A\cup D)}{2}$.
\end{proof}}

The next rule allows us to reduce vertices that appear in exactly the same clusters, if there are many.

\begin{rrule}
\label{rul:core}
 If there is a set $A \subseteq V \setminus R$ with $|A| \ge k+d+3$ 
 such that for every layer~$i\in[\ell]$
 it holds that all vertices of $A$ are in the same connected component of $G_i$,
 then select an arbitrary~$v \in A$ and remove $v$ from every $G_i$.
\end{rrule}
\appendixcorrectnessproof{rul:core}{
\begin{lem}\label{lem:core}
 \autoref{rul:core} is correct.
\end{lem}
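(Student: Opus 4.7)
The plan is to show that the original instance $I$ and the reduced instance $\widehat{I}$ (obtained by deleting $v$ from every layer) are equivalent. Let $\widehat{V}=V\setminus \{v\}$ and $\widehat{G}_i=G_i-v$. First I would establish a useful structural observation: since $A\subseteq V\setminus R$, no $a\in A$ lies in any induced $P_3$ of any layer. A short case analysis on the two possible positions of $a$ in a $P_3$ shows that the closed neighborhood $N_{G_i}[a]$ equals the connected component of $a$ in $G_i$, and furthermore $N_{G_i}(a)$ is a clique. Combined with the rule's hypothesis that all vertices of $A$ share a common component in each layer, this yields a unique clique component $K_i\supseteq A$ in every layer $i$.

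For the forward direction, given a solution $(D,M_1,\ldots,M_\ell)$ for $I$ I would set $\widehat{D}=D\setminus \{v\}$ and $\widehat{M}_i=M_i\cap \binom{\widehat{V}}{2}$. Sizes are clearly preserved, $(\widehat{V},\widehat{E}_i\oplus \widehat{M}_i)$ equals $(V,E_i\oplus M_i)-v$ and is therefore a cluster graph, and consistency transfers because $\widehat{V}\setminus \widehat{D}=V\setminus (D\cup \{v\})$, which turns the original consistency equation into the reduced one by restriction.

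The backward direction is the crux. Given a solution $(\widehat{D},\widehat{M}_1,\ldots,\widehat{M}_\ell)$ for $\widehat{I}$, I would set $D=\widehat{D}$ and $M_i=\widehat{M}_i$, introducing no new edits or markings. The central claim is that for each layer $i$, the cluster $C_i$ of $G'_i:=(\widehat{V},\widehat{E}_i\oplus \widehat{M}_i)$ containing $(A\setminus \{v\})\setminus \widehat{D}$ equals $K_i\setminus \{v\}$ exactly. Since $|A|\ge k+d+3$, we have $|(A\setminus \{v\})\setminus \widehat{D}|\ge k+2$. Any two of these vertices are adjacent in $\widehat{G}_i$ (both lie in $K_i$), so splitting them across distinct clusters of $G'_i$ would require at least $k+1$ edge deletions and exceed the budget; hence the set lies in a single cluster $C_i$. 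Moreover, no edit $\{x,y\}\in \widehat{M}_i$ can have both endpoints in $(A\setminus \{v\})\setminus \widehat{D}$, because then both $\{x,y\}\in E(\widehat{G}_i)$ and $\{x,y\}\in E(G'_i)$ would hold, contradicting $\{x,y\}\in \widehat{M}_i$. Thus at most $|\widehat{M}_i|\le k$ vertices of $(A\setminus \{v\})\setminus \widehat{D}$ are incident to $\widehat{M}_i$, and so some $a_i$ in this set is touched by no edit. Its neighborhood in $G'_i$ coincides with that in $\widehat{G}_i$, namely $K_i\setminus \{v,a_i\}$, and since $G'_i$ is a cluster graph, its cluster equals $N_{G'_i}[a_i]=K_i\setminus \{v\}$, giving $C_i=K_i\setminus \{v\}$. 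Reinserting $v$ with its original neighborhood $N_{G_i}(v)=K_i\setminus \{v\}$ precisely completes $C_i$ to the clique $K_i$, so $(V,E_i\oplus M_i)$ is again a cluster graph with no additional edits. Consistency on $V\setminus D$ follows because $v$'s neighborhood is $K_i\setminus \{v\}$ in every layer, whose restriction to $V\setminus D$ is $K_i\setminus (D\cup \{v\})$, a set independent of $i$ by the reduced consistency condition.

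The main obstacle is in the backward direction: naively extending $v$ into an arbitrary cluster of $G'_i$ could require up to $|C_i\triangle (K_i\setminus \{v\})|$ new edits, potentially busting the per-layer budget. The threshold $|A|\ge k+d+3$ is calibrated exactly so that after discounting the up to $d$ marked vertices and the up to $k$ vertices of $A$ touched by edits, some $a_i\in A$ remains completely unmodified in each layer, which pins $C_i$ down to $K_i\setminus \{v\}$ and makes the reinsertion of $v$ edit-free.
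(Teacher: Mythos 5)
Your proof is correct, and while it shares the paper's overall structure (identical forward direction; backward direction showing that $v$ can be reinserted), the backward direction takes a meaningfully different route. The paper picks a single witness $w \in A\setminus(\widehat{D}\cup\{v\})$, defines a new edge set $E'_i$ making $v$ a true twin of $w$ in the modified graph, sets $M_i = E_i\oplus E'_i$, and then argues $M_i\subseteq\widehat{M}_i$ by counting $P_3$-evidence — all without explicitly identifying $w$'s cluster. You instead set $M_i=\widehat{M}_i$ outright and prove the stronger structural claim that the cluster $C_i$ containing $(A\setminus\{v\})\setminus\widehat{D}$ is exactly $K_i\setminus\{v\}$ (via a per-layer untouched vertex $a_i$ whose closed neighborhood pins $C_i$ down), so that reinserting $v$ with its original neighborhood is edit-free and automatically forms the clique $K_i$. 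Your version buys a more transparent picture of what the modified clusters look like and avoids introducing a new modification set only to prove it coincides with $\widehat{M}_i$; the paper's twin construction is slightly more indirect but needs less bookkeeping about where exactly $A$ ends up. Both are valid, and the calibration $|A|\ge k+d+3$ is used the same way in each: to guarantee enough of $A$ survives marking and editing to witness the argument.
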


For the proof of this and subsequent lemmas we find the following observation handy.
\begin{obs}\label{obs:cut}
  Let $G$ be a complete graph on at least $k+2$ vertices and let $H$ be a cluster graph 
  such that $V(G)=V(H)$.
  If $H$ is not complete, then $G$ and $H$ differ in at least $k+1$ edges, i.e., $|E(G)\oplus E(H)| \ge k+1$.
\end{obs}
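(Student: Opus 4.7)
The plan is to directly count the symmetric difference by exploiting the structure of the two graphs. Since $G$ is complete on $V$, every pair in $\binom{V}{2}$ is an edge of $G$, hence $E(H) \setminus E(G) = \emptyset$ and therefore $|E(G) \oplus E(H)| = |E(G) \setminus E(H)|$, which is precisely the number of non-edges of $H$, i.e., the number of pairs of vertices that lie in different connected components (clusters) of $H$.

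Next I would use the assumption that $H$ is a cluster graph which is not complete: this gives that $H$ has at least two clusters. Writing $n = |V| \ge k+2$ and letting the cluster sizes be $n_1, \ldots, n_c$ with $c \ge 2$ and $\sum_{i=1}^c n_i = n$, the count of non-edges of $H$ equals $\sum_{1 \le i < j \le c} n_i n_j$. It is a standard exercise that this expression, subject to $c \ge 2$ and fixed sum $n$, is minimized by taking $c = 2$ together with the most unbalanced split $(n-1, 1)$, which yields $n - 1$ cross-cluster pairs.

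Combining the two steps, $|E(G) \oplus E(H)| \ge n - 1 \ge (k+2) - 1 = k+1$, as required. The argument is short and the only tiny step that requires care is the minimization claim for $\sum_{i<j} n_i n_j$, which I would dispatch by the one-line observation that moving a single vertex from a cluster of size $a \ge 2$ into a cluster of size $b \ge 1$ changes the number of cross-cluster pairs by $a - 1 - b$, so repeatedly moving vertices from larger clusters into a fixed singleton cluster never increases the count, eventually reducing any configuration to the $(n-1, 1)$ split. No deeper obstacle arises.
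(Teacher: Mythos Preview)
Your approach is correct and arguably cleaner than the paper's: you compute $|E(G)\oplus E(H)|$ exactly as the number of cross-cluster pairs $\sum_{i<j} n_i n_j$ and minimize, whereas the paper lower-bounds this quantity by the single cut $|X|\cdot|Y|$ between the smallest cluster~$X$ and its complement~$Y$ and then does a two-case analysis on~$|X|$ to get $|X|\cdot|Y|\ge k+1$. Both routes land on the same bound $n-1\ge k+1$.

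There is, however, a slip in your final sentence. Your formula is right: moving a vertex from a cluster of size~$a$ to one of size~$b$ changes the cross-cluster count by $a-1-b$. But then moving \emph{into} a fixed singleton ($b=1$) gives change $a-2\ge 0$, so the count \emph{never decreases}, not ``never increases'' --- the direction is backwards. The correct one-liner is the reverse: moving a vertex from a smaller cluster into a (weakly) larger one gives $a-1-b\le -1$ when $a\le b$, so repeatedly emptying the smaller clusters into the largest one (stopping with one vertex left outside) never increases the count and terminates at the split $(n-1,1)$. Alternatively, you can sidestep the move argument entirely by writing $\sum_{i<j} n_i n_j = \binom{n}{2} - \sum_i \binom{n_i}{2}$ and noting that $\sum_i \binom{n_i}{2}$ is maximized, subject to $c\ge 2$, at $(n-1,1)$.
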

\begin{proof}
The statement is obviously true for~$k \le 0$, let us assume that $k>0$.
Since $H$ is a cluster graph which is not complete, it must have several connected components. Let $X$ be the smallest of these connected components and define $Y = V(G) \setminus X$. The set $E(G) \oplus E(H)$ must contain at least all the edges between $X$ and $Y$, 
hence $|E(G)\oplus E(H) | \ge |X| \cdot |Y|$. 

If $|X| \ge \frac{k+2}{2}$, then also $|Y| \ge \frac{k+2}{2}$ since $X$ is the smallest component. But then $|X| \cdot |Y| \ge (\frac{k+2}{2})^2 =\frac{k^2+4k+4}{4} \ge k+1$. 

If $|X| < \frac{k+2}{2}$, then let us denote $x=|X|$ and we have $|Y| \ge k+2 - x$. We know that~$|X| \cdot |Y| \ge x \cdot (k+2-x)$. The function $f(x)=x \cdot (k+2-x)$ is increasing for $x < \frac{k+2}{2}$ with $f(1) =k+1$, hence $|X| \cdot |Y| \ge k+1$, finishing the proof.
\end{proof}}

\begin{proof}[Proof of \autoref{lem:core}]
 Let $I=(G_1, \ldots, G_\ell,k_1, \ldots, k_\ell,d)$ be the original instance and $\widehat{I}=(\widehat{G_1}, \ldots, \widehat{G_\ell}$, 
$k_1,\ldots,k_\ell,d)$, where $\widehat{G_i}= G_i[V(G_i)\setminus \{v\}]$ be the instance after the application of the rule. 
 Let $(D, M_1, \ldots, M_\ell)$ be a solution to $I$,
 and 
 define $\widehat{D}=D\setminus \{v\}$ and for every~$i\in[\ell]$ $\widehat{M_i}= M_i \cap \binom{V \setminus\{v\}}{2}$. Then $(\widehat{D}, \widehat{M_1}, \ldots, \widehat{M_\ell})$ forms a solution to $\widehat{I}$.
 
Conversely, let $\widehat{S}=(\widehat{D}, \widehat{M_1}, \ldots, \widehat{M_\ell})$ be a solution to $\widehat{I}$.
Let $w$ be an arbitrary vertex of $A\setminus (\widehat{D} \cup \{v\})$ (note that since $|A| \ge k+d+3$, $|\widehat{D}|\le d$ and $k\ge 0$, the set~$A\setminus (\widehat{D} \cup \{v\})$ is not empty).
We will construct a solution for $I$ such that after applying the solution $v$ is a true twin of $w$ in every layer, i.e., we will put $v$ into the same clusters as $w$.
Formally, for each layer~$i\in[\ell]$, 
we define $\widehat{E_i}' = E(\widehat{G_i})\oplus \widehat{M_i}$, 
$E'_i= \widehat{E_i}' \cup \{\{x,v\}\mid \{x,w\} \in \widehat{E_i}'\}\cup\{\{v,w\}\}$ and $M_i = E_i\oplus E'_i$.  
We claim that $(\widehat{D}, M_1, \ldots, M_\ell)$ is a solution to $I$. 
 
 First, each $G'_i=(V, E'_i)$ is a cluster graph. If there are two layers $i,j \in\{1\ldots \ell\}$ such that $G'_i \setminus D \neq G'_j \setminus D$, then without loss of generality we can assume that there is some~$x\in V\setminus (D \cup \{v\})$ such that $\{v,x\} \in E'_i$ but $\{v,x\} \notin E'_j$. But then $\{w,x\} \in \widehat{E_i}'$ and~$\{w,x\} \notin \widehat{E_j}'$, a contradiction since neither $w$ nor $x$ is in $\widehat{D}$.
 
 Finally, let us show that for each layer~$i\in[\ell]$ we have that $|M_i| \le |\widehat{M_i}|\le k_i$. 
 To this end, we first observe that all vertices of $A\setminus \{v\}$ are in the same component of $(V\setminus\{v\}, \widehat{E_i}')$. 
 Since $A \cap R = \emptyset$ and all vertices of $A$ are in the same connected component of $G_i$ we have that $\widehat{G_i}[A\setminus\{v\}]$ is complete. 
Hence, if $A\setminus \{v\}$ does not induce a complete subgraph in $(V\setminus\{v\}, \widehat{E_i}')$, then by \autoref{obs:cut}, we can conclude that $\widehat{M_i}$ contains at least $k+1$ edges, as $|A \setminus \{v\}| \ge k+d+2 \ge k+2$.
 
 Now for every $x \in V \setminus A$ and every $i \in [\ell]$ we have that $\{v,x\} \in E(G_i)$ if and only if $\{u,x\} \in E(G_i)$ for every $u \in A$ as otherwise the induced subgraph~$G_i[\{v,u,x\}]$
would be a $P_3$, contradicting $A \cap R = \emptyset$.
 Similarly, for every $x \in V \setminus A$ and every $i \in [\ell]$ we have that $\{v,x\} \in E'_i$ if and only if $\{u,x\} \in E'_i$ for every $u \in A$, since $(V\setminus\{v\}, \widehat{E_i}')$ is a cluster graph and since $E'_i$ is constructed in this way.
 It follows that if $\{x,v\} \in M_i$ for some~$x \in V \setminus A$, then $\{x,u\} \in \widehat{M_i}$ for every $u \in A \setminus \{v\}$ and $|\widehat{M_i}| \ge k+1 \ge k_i+1$---a contradiction.
 Hence $\{x,v\}\notin M_i$ for every $x \in V \setminus \{v\}$ and thus~$M_i \subseteq \widehat{M_i}$.
\end{proof}

The next rule shows that the remaining clusters in a yes-instance cannot be too large.

\begin{rrule}
\label{rul:big}
 If there is a layer $i \in [\ell]$ and a connected component $A$ of $G_i$ with $|A\setminus R| \ge k+2d+3$, then answer NO.
\end{rrule}
\appendixcorrectnessproof{rul:big}{
Let us first make a folklore observation.
\begin{obs}\label{obs:folklore}
 If a connected component~$C$ of a graph has at least three vertices and is not complete, 
 then every vertex of $C$ appears in some induced $P_3$.
\end{obs}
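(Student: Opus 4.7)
The plan is to argue by cases based on the local structure around an arbitrary vertex~$v \in C$, using only connectedness of~$C$, its size $|C|\ge 3$, and the fact that~$C$ is not a clique. I will produce an induced $P_3$ containing~$v$ in each case.

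First I would handle the \emph{close case}: suppose $v$ is adjacent to every other vertex of~$C$. Since $|C|\ge 3$ and $C$ is not complete, there must exist two vertices $u, w \in C \setminus \{v\}$ with $\{u,w\} \notin E$. Both $u$ and $w$ are neighbors of~$v$, so the vertex set $\{u, v, w\}$ induces a $P_3$ centered at~$v$.

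Next I would handle the \emph{far case}: suppose there exists some vertex $u \in C$ with $\{u,v\} \notin E$ and $u \neq v$. Since $C$ is connected, there is a path from~$v$ to~$u$ in~$C$; pick a shortest such path $v = v_0, v_1, \ldots, v_t = u$, necessarily of length $t \ge 2$. Then $\{v_0, v_1\}, \{v_1, v_2\} \in E$ by construction, while $\{v_0, v_2\} \notin E$ — otherwise $v_0, v_2, \ldots, v_t$ would be a strictly shorter $v$--$u$ path, contradicting minimality. Hence $\{v_0, v_1, v_2\}$ induces a $P_3$ containing~$v$.

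Since the two cases are exhaustive, every vertex of~$C$ lies in some induced $P_3$. The main (and only) subtlety is ensuring that the shortest-path argument yields an \emph{induced} $P_3$, which is handled by the minimality of the chosen path.
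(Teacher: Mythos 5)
Your proof is correct and follows essentially the same two-case argument as the paper: if $v$ is adjacent to all other vertices, use non-completeness to find a non-adjacent pair of its neighbors; otherwise take a shortest path to a non-neighbor and extract an induced $P_3$ from its first three vertices. The paper phrases the second case slightly more loosely ("each three consecutive vertices of the path induce a $P_3$"), but the underlying idea is the same.
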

\begin{proof}
 Consider an arbitrary vertex~$u \in V(C)$.
 If $u$ is adjacent to~$v$ for every $v \in  V(C) \setminus \{u\}$, 
 then there must be some pair~$\{x,y\} \subseteq V(C) \setminus \{u\}$ of vertices such that $\{x,y\} \notin E(C)$ since the component is not complete. Then, $C[\{u,x,y\}]$ is a $P_3$.
 
 Otherwise $u$ is not adjacent to some vertex~$v \in  V(C) \setminus \{u\}$. Then let $P$ be the shortest path between $u$ and $v$. This path has at least three vertices and each three consecutive vertices of this path induce a subgraph which is a $P_3$.
\end{proof}

\begin{lem}
 \autoref{rul:big} is correct.
\end{lem}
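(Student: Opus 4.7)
The plan is to assume for contradiction that the instance, reduced with respect to Rules~\ref{rul:negat}--\ref{rul:core}, still admits a solution $(D, M_1, \ldots, M_\ell)$ while containing a layer~$i$ and a connected component~$A$ of $G_i$ with $|A \setminus R| \ge k+2d+3$ (where $R = \bigcup_{j \in [\ell]} R_j$ as in the preceding rules). Throughout, write $G'_j = G_j \oplus M_j$ and $k = \max_j k_j$.

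First I would pin down the structure of~$A$ in~$G_i$. Since $A \setminus R$ is nonempty, at least one vertex of $A$ lies outside~$R_i$, so by \autoref{obs:folklore} the induced subgraph~$G_i[A]$ must be complete. Because $|A| \ge k+2d+3 \ge k+2$ and $|M_i| \le k$, \autoref{obs:cut} implies that $G'_i[A]$ is still complete, so $A$ sits inside some cluster~$C$ of~$G'_i$. To see that $C = A$, note that any $u \in C \setminus A$ would have to be joined to every vertex of~$A$ via edges of~$M_i$, requiring $|A| > k$ additions and exceeding the editing budget. Hence $A$ is exactly one cluster of~$G'_i$.

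Next I would set $A' := (A\setminus R) \setminus D$, so that $|A'| \ge (k+2d+3) - d = k+d+3$. From the previous step and the consistency requirement, $A'$ lies in a single cluster of $G'_j[V\setminus D]$ for every layer~$j$. On the other hand, because each $v\in A'$ is outside~$R_j$, its component in~$G_j$ is a clique-component. The key counting step will be to rule out that $A'$ straddles two or more of these clique-components in any layer: every pair of vertices of~$A'$ placed in different components is a non-edge of~$G_j$ that must then be added by~$M_j$, and a short convexity argument shows the number of such cross-pairs is minimized when one block has size~$|A'|-1$ and another has size~$1$, yielding at least $|A'|-1 \ge k+d+2 > k_j$ required additions, which contradicts the per-layer budget.

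Therefore $A' \subseteq V \setminus R$ is a set of size at least $k+d+3$ lying in a single connected component of~$G_j$ for every~$j$; but this is exactly the configuration that triggers \autoref{rul:core}, contradicting the assumption that the instance is reduced with respect to it. The main obstacle I anticipate is keeping track of the constants: the buffer $+2d$ in the threshold is precisely what is needed so that, after losing up to~$d$ vertices to the marking set~$D$ when passing from $A\setminus R$ to~$A'$, the residual set still meets the $+d+3$ threshold of \autoref{rul:core}, while simultaneously ensuring in Step~1 that $|A|$ is large enough to forbid $A$ from merging with any outside vertex in~$G'_i$.
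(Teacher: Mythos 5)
Your proof is correct and follows essentially the same approach as the paper: shrink $A$ to $A' = (A\setminus R)\setminus D$ (losing at most $d$ vertices, which is why the threshold carries the extra $+2d$), show that $A'$ must lie in one cluster of each $G'_j[V\setminus D]$ by consistency, and then derive that $A'$ must lie in one connected component of each \emph{input} layer $G_j$ on budget grounds, triggering \autoref{rul:core}. The only differences are cosmetic: the paper invokes \autoref{obs:cut} directly to show $G_j[A']$ must be complete, whereas you run a bare-hands convexity count on cross-pairs (yielding the tighter bound $|A'|-1 \ge k+d+2$ in place of $k+1$), and you prove the extra fact that the cluster $C\supseteq A$ of $G'_i$ equals $A$ exactly, which is not needed --- it suffices that $A'$ is contained in a single cluster of $G'_i[V\setminus D]$.
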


\begin{proof}
Suppose towards a contradiction that $A$ is a connected component of $G_i$ for some layer~$i \in [\ell]$
with $|A\setminus R| \ge k+2d+3$, and $(G_1, \ldots, G_\ell,k_1, \ldots, k_\ell,d)$ is a yes-instance. 
Let $(D, M_1, \ldots M_\ell)$ be a solution to the instance.
For every layer~$j\in[\ell]$, define $E'_j = E(G_{j})\oplus M_j$ and $G'_{j}=(V,E'_j)$.
Let~$A' =A \setminus (D \cup R)$ and note that $|A'|\ge k+d+3$. 
We claim that for every $j\in [\ell]$, 
all vertices of $A'$ are in the same connected component of $G_{j}$,
contradicting the instance being reduced with respect to \autoref{rul:core}.

Since $A' \cap R = \emptyset$ and 
all vertices of $A$ are in the same connected component of $G_i$,
by \autoref{obs:folklore} we have that $G_i[A']$ is complete. 
Hence, if $G'_i[A']$ is not complete, then by \autoref{obs:cut},
$M_i$ contains at least $k+1 \ge k_i+1$ edges, as $|A'| \ge k+2$, a contradiction.
Therefore $G'_i[A']$ is complete. 
For every $j\in [\ell]$, since $G'_{j}[V\setminus D] = G'_{i}[V\setminus D]$,
the graph $G'_j[A']$ is complete. 
Now again, if $G_j[A']$ is not complete for some layer~$j \in [\ell]$, then again by \autoref{obs:cut} $M_j$ contains at least $k+1 \ge k_j+1$ edges, a contradiction. 
\end{proof}}

Now we introduce our final rule bounding the number of vertices in the instance.

\begin{rrule}
\label{rul:many}
 If $|V| > \ell\cdot(k^2+2k+d\cdot(k+2d+2)+2k)$, then answer NO. 
\end{rrule}
\begin{lem}
 \autoref{rul:many} is correct.
\end{lem}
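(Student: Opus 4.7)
The plan is to derive a contradiction from assuming that the instance admits a solution $(D, M_1, \ldots, M_\ell)$ together with $|V| > \ell \cdot (k^2+2k+d(k+2d+2)+2k)$. Inapplicability of \autoref{rul:dirty} gives $|R_i| \le k^2+2k$ for every layer, hence $|R| \le \ell(k^2+2k)$ for the global ``dirty'' set $R := \bigcup_{i \in [\ell]} R_i$. It then suffices to bound the set $U := V \setminus R$ by $\ell(d(k+2d+2)+2k)$.

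The main step will be the following dichotomy: every $v \in U$ is touched by the solution in some layer, namely either (A) $v \in T_i := \bigcup_{\{u,w\}\in M_i} \{u,w\}$ for some $i$, or (B) the component $C^i_v$ of $G_i$ containing $v$ meets $D$ for some $i$. For $v \in U \cap D$ alternative (B) is immediate. For $v \in U \setminus D$ I will suppose neither holds and derive a contradiction using \autoref{rul:same}: since $v \notin R_i$ the component $C^i_v$ is a clique, and since $v \notin T_i$ we have $N_{G_i}(v) = N_{G'_i}(v)$; writing the cluster of $v$ in $G'_i$ as $Y \cup D^i$, where $Y$ is the cluster of $v$ in the invariant graph $G'_i[V \setminus D]$ and $D^i \subseteq D$, the matching of neighborhoods yields $C^i_v = Y \cup D^i$, and the disjointness $C^i_v \cap D = \emptyset$ then forces $D^i = \emptyset$, whence $C^i_v = Y$ for every $i$. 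Hence $Y$ is a connected component of every $G_i$ and $Y \subseteq V \setminus R_i$ for every $i$, so $Y \subseteq V \setminus R$, contradicting the inapplicability of \autoref{rul:same}.

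Given this dichotomy, I will charge each $v \in U$ to a witnessing layer. In any fixed layer $i$, case~(A) contributes at most $|T_i| \le 2k$ vertices, while case~(B) contributes at most $d(k+2d+2)$ vertices, because at most $d$ components of $G_i$ meet $D$ and each such component has at most $k+2d+2$ vertices of $V \setminus R$ by inapplicability of \autoref{rul:big}. Summing over all $\ell$ layers gives $|U| \le \ell(2k + d(k+2d+2))$, which together with $|R| \le \ell(k^2+2k)$ yields $|V| \le \ell(k^2+2k+d(k+2d+2)+2k)$, the desired contradiction. I expect the main obstacle to be the dichotomy, where one must carefully account for the layer-dependent attachment of marked vertices $D^i$ to the cluster of $v$ and then invoke the precise form of \autoref{rul:same} (a full component equality, not merely containment) to conclude.
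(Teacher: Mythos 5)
Your proof is correct and follows essentially the same route as the paper's: bound $R$ via Rule~\ref{rul:dirty}, then show every remaining vertex is either incident to an edit (contributing at most $2k$ per layer) or lies in a component near a marked vertex (at most $d(k+2d+2)$ per layer by Rule~\ref{rul:big}), with the fallback case triggering Rule~\ref{rul:same}. The only (minor, and arguably cleaner) difference is that you take the components containing marked vertices in the \emph{original} graphs $G_i$ rather than in the edited graphs $G'_i$, which makes the size bound from Rule~\ref{rul:big} apply immediately; the paper's phrasing in terms of $G'_i$-components requires an extra step to see that Rule~\ref{rul:big} still gives the claimed bound.
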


\begin{proof}
Suppose towards a contradiction that $|V| > \ell\cdot(k^2+2k+d\cdot(k+2d+2)+2k)$ and $(G_1, \ldots, G_\ell,k_1, \ldots, k_\ell,d)$ is a yes-instance. 
Let $(D, M_1, \ldots M_\ell)$ be a solution to the instance.
For each $i\in [\ell]$, define $E'_i=E(G_i)\oplus M_i$ and $G'_{i}=(V,E'_{i})$.
Let us denote by $S=\bigcup_{i=1}^\ell\bigcup_{\{u,v\} \in M_i} \{u,v\}$ the set of vertices adjacent to any modification. Obviously,~$|S| \le \ell\cdot 2k$.

For every layer~$i \in [\ell]$ and every $x \in D$ let us denote by $Q_x^i\subseteq V \setminus R$ the set of vertices from $V \setminus R$ in the same connected component of $G'_i$ as the vertex $x$ and $Q=\bigcup_{i=1}^\ell \bigcup_{x \in D} Q_x^i$.
Since the instance is reduced with respect to \autoref{rul:big}, we know that $|Q_x^i|\le k+2d+2$ and, thus $|Q| \le \ell\cdot d\cdot (k+2d+2)$.

Note also that $|R| \le \ell\cdot (k^2 +2k)$, since the instance is reduced with respect to \autoref{rul:dirty}. Now since $|V| > \ell\cdot(k^2+2k+d\cdot(k+2d+2)+2k)$, $|R| \le \ell\cdot (k^2 +2k)$, $|Q| \le \ell\cdot d\cdot (k+2d+2)$, and $|S| \le \ell\cdot 2k$, the set $V'=V \setminus (Q \cup R \cup S)$ is not empty. Let $u$ be an arbitrary vertex from~$V'$. Since the instance is reduced with respect to \autoref{rul:same}, we know that there are two distinct layers~$i,j \in [\ell]$  and a vertex $v$ such that $u$ and $v$ are in the same connected component of $G_i$ and in different connected components of $G_j$. Since $v$ is not in $S$, we know that the same holds for the graphs~$G'_i$ and $G'_j$. However, since $v$ is not in $Q$ neither in $R$, we have that neither $u$ nor $v$ is in $D$. But then $G'_i[V\setminus D]$ and $G'_j[V\setminus D]$ are different, a contradiction.
\end{proof}
\todo[inline]{ms: This section contains many ``a contradiction'' but it would be better to say what is contradicted.}

After bounding the size of the instance through \autoref{rul:many} 
it remains to transform the resulting instance of \MLCES{} to an equivalent instance of \MLCE{}. To this end we introduce new vertex set $A$ of size exactly $2k+2$ to $V$ and to each $E_i$ introduce all edges from $\binom{A}{2}$. Then, for each $i \in \{1, \ldots, \ell\}$ we remove $k-k_i$ arbitrary edges between vertices of $A$ from $E_i$ and set $k_i=k$. 

If $\{u,v\}$ is an edge removed in this step, then $u$ and $v$ had $2k$ common neighbors in $A$ and by at most $k-1$ other edge removals they could loose at most $k-1$ of them. Hence, \autoref{rul:heavy_non} would apply to each pair of vertices from $A$ with an edge removed.
Applying \autoref{rul:heavy_non} exhaustively and then \autoref{rul:same} would revert all the changes made. Hence, the constructed instance is equivalent to the one obtained after exhaustive application of all the reduction rules. 

The constructed instance can be turned into an equivalent instance of \MLCE{} in an obvious way.
Since no rule increases $k$, $d$, or $\ell$, $|V|=O(\ell\cdot(k+d)^2)$, the resulting instance can be described using $O(\ell^3\cdot(k+d)^4)$ bits and it is equivalent to the original instance, it remains to show that the kernelization is computable in polynomial time.

\begin{lem}
\label{lem:kernelruntime}
 The kernelization can be done in $O(\ell\cdot n^3)$ time.
\end{lem}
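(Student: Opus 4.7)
The plan is to process the rules in two stages and to bound the running time of each separately.

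Stage~1 will handle Rules~\ref{rul:negat}--\ref{rul:dirty}, each of which acts on a single layer and coincides with a standard \textsc{Cluster Editing} reduction rule. By the implementation of Gramm et al.~\cite{GrammGHN05}, these rules can be applied exhaustively to a given layer in $O(n^3)$ time; the same enumeration of induced $P_3$s suffices to test \autoref{rul:dirty} and to compute $R_i$. Since each rule in this stage modifies only one layer $G_i$ and its budget $k_i$, layers can be processed independently, giving an $O(\ell\cdot n^3)$ bound for Stage~1, which will dominate the total.

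Stage~2 will handle Rules~\ref{rul:same}--\ref{rul:many}. The key observation is that the only vertex-deleting rules, namely \autoref{rul:same} and \autoref{rul:core}, remove vertices exclusively from $V\setminus R$, where $R=\bigcup_{i} R_i$. Since every induced $P_3$ of any~$G_i$ consists of three vertices of $R_i\subseteq R$, such deletions neither destroy nor create any $P_3$, so every $R_i$ is preserved and Stage~1 does not need to be rerun. For Stage~2 I would first compute the connected components of every layer in $O(\ell\cdot n^2)$ time, attach to each vertex the $\ell$-tuple of its component identifiers, and sort these tuples lexicographically to obtain the common refinement of the layerwise component partitions. \autoref{rul:same} then corresponds to equivalence classes contained in $V\setminus R$ that coincide with a full component in every layer; \autoref{rul:core} iteratively shrinks any class $A\subseteq V\setminus R$ with $|A|\ge k+d+3$ by one vertex. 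At most $n$ deletions are performed in total, each with $O(\ell)$ bookkeeping; \autoref{rul:big} and \autoref{rul:many} reduce to global size checks. Hence Stage~2 will run in $O(\ell\cdot n^2)$ time.

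Summing the two stages yields the claimed $O(\ell\cdot n^3)$ bound. The hard part of the argument will be the $R$-invariance step: if a Stage~2 deletion could resurrect one of Rules~\ref{rul:negat}--\ref{rul:dirty}, the expensive first stage might have to be rerun, potentially destroying the bound. Once that invariance is established, the remaining running-time analysis is routine.
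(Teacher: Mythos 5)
Your plan follows essentially the same route as the paper's proof: exhaustively apply the per-layer $P_3$-based rules once in $O(\ell\cdot n^3)$ time, then observe that the later rules delete only vertices from $V\setminus R$, which cannot create or destroy an induced $P_3$, so the first stage never needs to be rerun — this is precisely the paper's remark that ``no application of a later rule can create an opportunity to apply'' Rules~\ref{rul:heavy_edge}--\ref{rul:dirty}. The only implementation difference is in how you locate candidates for Rules~\ref{rul:same} and~\ref{rul:core}: you build the common refinement of the layerwise component partitions by sorting per-vertex component-ID tuples, while the paper reads them off the auxiliary graphs $G_\cap$ and $G_\cup$; on $V\setminus R$ these coincide, because two non-$R$ vertices lying in the same component of any layer must be adjacent there (otherwise a shortest path between them would begin with a $P_3$ touching one of them), so components of $G_\cap$ restricted to $V\setminus R$ are exactly the refinement classes restricted to $V\setminus R$. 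Two small things to tighten when turning the plan into a full proof: for \autoref{rul:core} the applicability condition is that a refinement class $C$ satisfies $|C\cap(V\setminus R)|\ge k+d+3$ — the class itself need not be contained in $V\setminus R$, and you should delete a vertex of $C\cap(V\setminus R)$; and the paper's proof also accounts for the final re-encoding from the separate-budget auxiliary problem back to \MLCE{}, guarded by an initial ``if $n<k^2$, output the instance unchanged'' case so that the $O(k^2)$ clique gadget is within the $O(n)$ budget.
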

\appendixproof{lem:kernelruntime}{
\begin{proof}
If $n < k^2$, then we can output the original instance as the kernel. Let us assume that $k^2 \le n$.

We can check whether \autoref{rul:negat} applies in $O(\ell)$ time on the beginning and in constant time whenever any later rule changes the budget. Applying the rule takes constant time.

For each layer $i \in \{1, \ldots, \ell\}$ in time $O(n^3)$ we can count for each pair of vertices in how many induced subgraphs isomorphic to $P_3$ it appears and classify the pairs according to that count. Then we apply Reduction Rules \ref{rul:heavy_edge} and \ref{rul:heavy_non} to the pairs which appear in many~$P_3$'s. Each application takes $O(n)$ time and at the same time we can update the counts for affected pairs. Hence, these reduction rules can be exhaustively applied to one layer in $O(n^3)$ time. Also in the same time we can determine the sets $R_i$ and eventually apply \autoref{rul:dirty}. Since the later rules only delete vertices or answer NO, no application of a later rule can create an opportunity to apply \autoref{rul:heavy_edge}, \ref{rul:heavy_non}, or \ref{rul:dirty}. Hence, these reduction rules can be exhaustively applied to the instance in $O(\ell \cdot n^3)$ time.

In $O(\ell \cdot n^2)$ time we can compute the graphs $G_{\cap}=(V,\bigcap_{i=1}^{\ell} E_i)$ and $G_{\cup}=(V,\bigcup_{i=1}^{\ell} E_i)$. Then \autoref{rul:same} applies to all connected components of $G_{\cup}$ not containing vertices of~$R$ that are also connected components of $G_{\cap}$. All of these applications can be recognized in $O(n^2)$ time and all of them together applied in $O(\ell\cdot n^2)$ time. No application of a later rule can create an opportunity to apply \autoref{rul:same}. 

\autoref{rul:core} applies to each connected component of $G_{\cap}$ which has the appropriate number of vertices not in $R$. All of these applications can be recognized in $O(n^2)$ time and all of them together applied in $O(\ell\cdot n^2)$ time. Since later rules only answer NO, no application of a later rule can create an opportunity to apply \autoref{rul:core}.

We can check whether the rule applies in $O(\ell \cdot n^2)$ for \autoref{rul:big} and in constant time for \autoref{rul:many} and apply any of them in constant time.

Hence the reduction rules can be exhaustively applied in $O(\ell\cdot n^3)$ time, the final reduction back to \MLCE{} takes $O(k^2)=O(n)$ time and the result follows.
\end{proof}}

Lastly, we argue that slightly modified reduction rules can be applied to \TCE{} (with individual edge-modification budgets, where the resulting instance can be transformed back). Intuitively, this follows from the following observations: The reduction rules do not mark vertices, and the union of all marked vertices of a \TCE{} solution together with the edge modification sets forms a solution for a \MLCE{} instance, where the maximal number of marked vertices is $d\cdot\ell$. Hence, replacing $d$ with $d\cdot\ell$ in the description of all reductions rules yields a set of rules that produce a kernel of size $O(\ell^3\cdot(k+d\cdot\ell)^4)$ for \TCE{}.

\begin{lem}
If in the descriptions of \autoref{rul:negat}, \autoref{rul:heavy_edge}, \autoref{rul:heavy_non}, \autoref{rul:dirty}, \autoref{rul:same}, \autoref{rul:core}, \autoref{rul:big}, and \autoref{rul:many} ``$d$'' is replaced with ``$d\cdot\ell$'', then these rules yield a kernel of size $O(\ell^3\cdot(k+d\cdot\ell)^4)$ for \TCE\ that can be computed in $O(\ell\cdot n^3)$ time.
\end{lem}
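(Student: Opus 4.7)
The plan is to adapt the \MLCE{} kernel to \TCE{} via the key observation that any \TCE{} solution $(D_1, \ldots, D_{\ell-1}, M_1, \ldots, M_\ell)$ induces a totally $d\cdot\ell$-consistent clustering with the same edge-modification sets: setting $\mathcal{D} = \bigcup_{i \in [\ell-1]} D_i$, which has size at most $d(\ell-1) \le d\cdot\ell$, the pairwise consecutive consistencies $G'_i[V \setminus D_i] = G'_{i+1}[V \setminus D_i]$ imply by transitivity that $G'_i[V \setminus \mathcal{D}] = G'_j[V \setminus \mathcal{D}]$ for all $i, j \in [\ell]$. Hence, whenever a reduction rule for \MLCE{} exploits $|D| \le d$, the analogous rule for \TCE{} remains valid with $d$ replaced by $d\cdot\ell$.

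For each rule I would verify both implications. Rules~\ref{rul:negat}, \ref{rul:heavy_edge}, \ref{rul:heavy_non}, and~\ref{rul:dirty} act within a single layer and inherit correctness unchanged, since the restriction of a \TCE{} solution to any single layer is a cluster editing set of the required size. Rule~\ref{rul:same} transfers because the removed component $A$ is, in every layer, a complete cluster identical across all layers; it contributes nothing to the layerwise consistency constraints, so solutions extend and restrict trivially. Rules~\ref{rul:big} and~\ref{rul:many} are counting arguments that transfer by substituting $|\mathcal{D}| \le d\cdot\ell$ for $|D| \le d$; in particular, the inflated \autoref{rul:many} yields $|V| = O(\ell \cdot (k + d\cdot\ell)^2)$.

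The most delicate adaptation concerns \autoref{rul:core}. In the \MLCE{} proof, given a solution $(\hat D, \hat M_1, \ldots, \hat M_\ell)$ for the reduced instance, one picks $w \in A \setminus (\hat D \cup \{v\})$ and makes the deleted vertex~$v$ a true twin of $w$ in every layer. For \TCE{}, I would pick $w \in A \setminus (\bigcup_i \hat D_i \cup \{v\})$, so that $w$ is unmarked in \emph{every} layer; such a $w$ exists because $|A| \ge k + d\cdot\ell + 3 > d\cdot\ell + 1 \ge |\bigcup_i \hat D_i| + 1$. Setting $D_i := \hat D_i$ and making $v$ a true twin of $w$ as in the \MLCE{} proof, temporal consistency between $G'_i$ and $G'_{i+1}$ on $V \setminus D_i$ follows: for pairs avoiding~$v$ it reduces to the consistency of $\hat G'_i$ with $\hat G'_{i+1}$, and for pairs of the form $\{v, z\}$ with $z \ne w$ the adjacency is determined by whether $\{w, z\} \in \hat E'_i$, which is consistent across layers~$i$ and~$i+1$ precisely because $w \notin \hat D_i$. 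The edge-budget and cluster-graph arguments from the \MLCE{} proof transfer directly. This is the main obstacle and is exactly the reason why $d$ must be inflated to $d\cdot\ell$ in the rule's threshold.

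Finally, the post-processing step that converts the layer-individual edge-modification budgets back into a single global budget~$k$ (attaching a clique on $2k+2$ fresh vertices and removing $k - k_i$ of its edges in layer~$i$) applies verbatim. Together with the vertex bound from \autoref{rul:many}, encoding the $\ell$ adjacency matrices of dimension $|V| + O(k)$ each gives total bit-size $O(\ell^3 \cdot (k + d\cdot\ell)^4)$. The running-time analysis of \autoref{lem:kernelruntime} is unaffected by the replacement and yields the claimed $O(\ell \cdot n^3)$ bound.
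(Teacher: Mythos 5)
Your proposal is correct and follows the same approach as the paper's proof sketch: both rest on the observation that a \TCE{} solution with marking sets $D_1,\ldots,D_{\ell-1}$ yields a totally $d\cdot\ell$-consistent clustering via $\bigcup_i D_i$, so that modified rules outputting NO remain sound, while the remaining rules carry over after routine adaptation. You supply more detail than the paper (which simply asserts that the adaptations are ``straightforward''), in particular correctly flagging \autoref{rul:core} as the one spot where the argument needs genuine reworking—choosing $w$ unmarked in \emph{every} layer and re-checking consecutive-layer consistency—but this elaboration is consistent with, rather than divergent from, the paper's intended argument.
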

\begin{proof}[Proof Sketch]
First, we observe that none of the original reduction rules decrease $d$, informally that means that no rule marks vertices. Second, note that none of the rules create trivial yes-instances. 

It is easy to see the following: If $I=(G_1=(V, E_1),\ldots, G_\ell=(V, E_\ell), k, d)$ is a yes-instance of \TCE, then $I'=(G_1=(V, E_1),\ldots, G_\ell=(V, E_\ell), k, d')$ with $d'=d\cdot\ell$ is a yes-instance of \MLCE{}. Given a solution $(M_1, \ldots, M_\ell, D_1, \ldots, D_{\ell-1})$ for $I$, it is easy to check that $(M_1, \ldots, M_\ell, D)$ with $D =\bigcup_{1\le i\le\ell-1}D_i$ is a solution for $I'$. By contraposition this means that if $I'$ is a no-instance for \MLCE{}, then $I$ is a no-instance for \TCE{}. If follows that all modified rules creating trivial no-instances are safe. The safeness of the remaining modified rules follows from straight forward adaptations of the original \todo{ms: Safeness, reduction/branching rules, etc. should go into prelims.}safeness proofs.
\end{proof}

Finally, we argue that for the parameter $n$ number of vertices (and all smaller parameters), \MLCE{} and \TCE{} do not admit polynomial kernels unless \NoKernelAssume{}. More specifically, we claim the following.

\begin{prop}
\label{thm:nopolykernel-n}
 \MLCE{} and \TCE{} do not admit polynomial kernels with respect to the number $n$ of vertices, unless \NoKernelAssume.
\end{prop}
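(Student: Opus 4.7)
The plan is to establish the no-polynomial-kernel claims via an \emph{AND-cross-composition} from \CE\ into \MLCE\ and into \TCE, each parameterized by $n$. Since \CE\ is \NP-complete and the parameter $n$ of the composed instance will be polynomially bounded in the size of a single input \CE\ instance, the standard framework of Drucker (and the original composition framework of Bodlaender et al.) will yield the desired lower bound under \NoKernelAssume.

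Define the polynomial equivalence relation on \CE\ inputs by declaring $(G, k) \sim (G', k')$ iff $|V(G)| = |V(G')|$ and $k = k'$. Within a single equivalence class, take $t$ instances $(G_1, k), \ldots, (G_t, k)$ sharing a common vertex set~$V$ of size~$s$ (relabel if necessary). Construct the composed \MLCE\ instance on vertex set $V$ with the $t$ layers $G_1, \ldots, G_t$, per-layer edit budget~$k$, and marked-vertex budget $d := s$. For \TCE\ I perform the same construction while viewing the layers as an ordered temporal graph. The composition runs in polynomial time, and the parameter of the composed instance is $n = s$, which is polynomial in the size of a single input.

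Correctness is immediate because $d = |V|$ trivializes the consistency constraint: setting $D := V$ (respectively $D_i := V$ for each consecutive pair in \TCE) makes $V \setminus D$ (respectively $V \setminus D_i$) empty, so the required equality $G'_i[V\setminus D] = G'_j[V\setminus D]$ (respectively $G'_i[V\setminus D_i] = G'_{i+1}[V\setminus D_i]$) holds vacuously. Hence the composed instance is a yes-instance iff every layer $G_j$ individually admits a cluster editing set of size at most~$k$, i.e., iff \emph{all} input \CE\ instances are yes-instances; this is exactly the AND of the inputs.

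The main conceptual observation, and the only nontrivial point, is that the parameter $n$ is completely unaffected by the number $t$ of composed instances while $\ell$ is allowed to grow; so collapsing many independent copies of \CE\ into a single multi-layer (or temporal) instance costs nothing in the chosen parameter. Beyond that, no intricate gadgetry is required: invoking AND-cross-composition from the NP-hard source \CE\ immediately gives the conclusion for both \MLCE\ and \TCE.
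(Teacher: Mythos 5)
Your proposal is correct and follows essentially the same approach as the paper: an AND-cross-composition from classical \textsc{Cluster Editing} using the equivalence relation ``same $|V|$ and same $k$'', composing the $t$ instances as $t$ layers on the shared vertex set with $d := |V|$, so the consistency constraint trivializes and the composed instance is a yes-instance iff all inputs are. No substantive difference.
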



\appendixproof{thm:nopolykernel-n}{
  We need the following notation for the proof.
An equivalence
relation~$R$ on the instances of some problem~$L$ is a
\emph{polynomial equivalence relation} if
\begin{compactenum}[(i)]
 \item one can decide for each two instances in time polynomial in their sizes whether they belong to the same equivalence class, and
 \item for each finite set~$S$ of instances, $R$ partitions the set into at most~$(\max_{x \in S} |x|)^{O(1)}$ equivalence classes.  
\end{compactenum}

An \emph{AND-cross-composition} of a problem~$L\subseteq \Sigma^*$ into a
parameterized problem~$P$ (with respect to a polynomial equivalence
relation~$R$ on the instances of~\(L\)) is an algorithm that takes
$\ell$ $R$-equivalent instances~$x_1,\ldots,x_\ell$ of~$L$ and
constructs in time polynomial in $\sum_{i=1}^\ell |x_i|$ an instance
$(x,k)$ of~\(P\) such that
\begin{compactenum}[(i)]
\item $k$ is polynomially upper-bounded in $\max_{1\leq i\leq \ell}|x_i|+\log(\ell)$ and 
\item $(x,k)\in P$ if and only if $x_{\ell'}\in L$ for every $\ell'\in \{1,\ldots,\ell\}$. 
\end{compactenum}

If an \NP-hard problem~\(L\) AND-cross-composes into a parameterized
problem~$P$, then~$P$ does not admit a polynomial-size kernel, unless \NoKernelAssume~\cite{bodlaender2014kernelization, CyganFKLMPPS15},
which would cause a collapse of the polynomial-time hierarchy to the third
level.

\begin{proof}[Proof of \autoref{thm:nopolykernel-n}]
We provide an AND-cross-composition from classical \textsc{Cluster Editing}.
We define relation~$R$: Two instances $(G_1,k_1)$ and $(G_2,k_2)$ are equivalent under~$R$ if and only if $k_1=k_2$ and~$|V(G_1)|=|V(G_2)|$. Clearly, $R$ is a polynomial equivalence relation. 

Now let $(G_1, k_1),\ldots,(G_\ell,k_\ell)$ be $R$-equivalent instances of \textsc{Cluster Editing}. Then there is an integer $k \in \N$ such that $k=k_i$ for every $i\in \{1, \ldots,\ell\}$. Moreover, since the names of the vertices are not important for the problem and $|V(G_i)|=|V(G_j)|$ for every~$i, j\in \{1, \ldots,\ell\}$, we can assume without loss of generality that there is a set $V$ such that $V=V(G_i)$ for every~$i\in \{1, \ldots,\ell\}$. Hence, $(G_1, \ldots, G_\ell, k, d)$, where $d=|V|$, is a valid instance of \MLCE{} and \TCE{}.

This instance can be constructed in polynomial time and no extra vertices are added, hence~$|V|$ is upper-bounded by a maximum size of an input instance. Furthermore, as we are allowed to mark all vertices, it follows directly from the definition of \MLCE{} and \TCE{}, that $(G_1, \ldots, G_\ell, k, d)$ is a 
yes-instance if and only if for every $i\in \{1, \ldots,\ell\}$ it is possible to turn $G_i$ into a cluster graph by at most $k$ edge modifications.

Since \textsc{Cluster Editing} is \NP-hard~\cite{bansal2004correlation} and we AND-cross-composed it into \MLCE{} and into \TCE{} parameterized by $n=|V|$, the result follows.
\end{proof}
}
%

\section{Conclusion}
Our results highlight that \TCE\ and \MLCE\ are much richer in
structure than classical \textsc{Cluster Editing}. Techniques for the
classical problem seem to only carry over somewhat for kernelization
algorithms and otherwise new methods are necessary. In this regard, we
contribute a major step forwards with our fixed-parameter algorithm
for \MLCE\ with respect to the combination of~$k$ and~$d$. In contrast, the \W{1}-hardness
for \TCE\ with respect to~$k$ for $d = 3$ highlights the obstacles we
need to overcome. Perhaps we can break the problematic temporal
non-locality by bounding the number of allowed modifications at one
vertex in any given interval of layers of some fixed size.

\bibliography{bib}
\end{document}
